\theoremstyle{plain}
\newtheorem{thm}{Theorem}%[section]
\newtheorem{lem}{Lemma}
\newtheorem{prop}{Proposition}
\newtheorem{cor}{Corollary}
\theoremstyle{definition}
\newtheorem{prob}{Problem}
\theoremstyle{remark}
\newtheorem{rem}{Remark}
\newcommand{\N}{\mathbb{N}}
\newcommand{\eq}[1]{\begin{align*}#1\end{align*}}
\newcommand{\EQ}[1]{\begin{equation*}#1\end{equation*}}
\newcommand{\EQN}[1]{\begin{equation}#1\end{equation}}
\newcommand{\meq}[2]{\begin{xalignat*}{#1}#2\end{xalignat*}}
\newcommand{\meqn}[2]{\begin{xalignat}{#1}#2\end{xalignat}}
\newcommand{\pd}[2]{\frac{\partial #1 }{\partial #2}}
\newcommand{\set}[1]{\left\{#1\right\}}
\newcommand{\SetIn}[1]{\mathbbm{1}_{\set{#1}}}
\DeclareMathOperator{\sgn}{sgn}
\newcommand{\E}{\mathbb{E}}
\newcommand{\R}{\mathbb{R}}
\newcommand{\tI}{\tilde{I}}
\renewcommand{\le}{\leqslant}
\renewcommand{\ge}{\geqslant}
\begin{document}
\title{Optimal Server Selection for Straggler Mitigation}
	
\author{
Ajay~Badita~\IEEEmembership{Student~Member,~IEEE}, 
\and Parimal~Parag~\IEEEmembership{Member,~IEEE},  
\and Vaneet Aggarwal~\IEEEmembership{Senior Member,~IEEE}
\thanks{ 
Ajay Badita and Parimal Parag are with the Department of Electrical and Communications Engineering, Indian Institute of Science, Bangalore, Karnataka 560012, India. 
Vaneet Aggarwal is with the School of Industrial Engineering and the School of Electrical and Computer Engineering, Purdue University, West Lafayette, IN, USA. 
Email: \{ajaybadita, parimal\}@iisc.ac.in, vaneet@purdue.edu. 
}
\thanks{ 
This work is supported in part by the Science and Engineering Research Board (SERB) under Grant~DSTO-1677 and the VAJRA Fellowship, the National Science Foundation under Grant CNS-1618335, and CISCO. 
Any opinions, findings, and conclusions or recommendations expressed in this material are those of the authors. 
}
}

\maketitle
%-----------------Abstract-Start-----------------%
\begin{abstract}
The performance of large-scale distributed compute systems is adversely impacted by stragglers when the execution time of a job is uncertain. 
To manage stragglers, we consider a multi-fork approach for job scheduling, where additional parallel servers are added at forking instants. 
In terms of the forking instants and the number of additional servers, we compute the job completion time and the cost of server utilization when the task processing times are assumed to have a shifted exponential distribution. 
We use this study to provide insights into the scheduling design of the forking instants and the associated number of additional servers to be started. 
Numerical results demonstrate orders of magnitude improvement in cost in the regime of low completion times as compared to the prior works. 
\end{abstract}
%\normalfont
%-----------------Abstract-End-----------------%

\begin{IEEEkeywords}
Straggler mitigation, distributed computing, shifted exponential distribution, completion time, scheduling, forking points.
\end{IEEEkeywords}
	
%\normalfont
% !TEX spellcheck = en_US
% !TEX spellcheck = LaTeX

%-----------------Introduction-Start-----------------
\section{Introduction}
\label{sec:Intro}
Large scale computing jobs require multi-stage computation, 
where computation per stage is performed in parallel over a large number of servers. 
The execution time of a task on a machine has stochastic variations due to many contributing factors such as co-hosting, virtualization, hardware and network variations~\cite{Cheng2014IMC}.
A slow server can delay the onset of next stage computation, 
and we call it a \emph{straggling} server. 
One of the key challenges in cloud computing is the problem of straggling servers, 
which can significantly increase the job completion time~\cite{Garraghan2016TSC, Ouyang2016SCC, Guo2017TPDS}. 
%Various proactive and reactive straggler mitigation techniques have been developed to address the issue of stragglers. 
%\red{What are those techniques?} 
Straggler mitigation is a particularly important problem, 
considering this the organizations such as VMWare and Amazon have spent substantial effort optimizing the operation of virtualization technologies for massive-scale systems~\cite{Garraghan2016TSC}. 
This paper aims to find efficient scheduling mechanisms for straggler mitigation by analyzing how the replication of straggling tasks affects the mean service completion time and the mean server utilization cost of computing resources. 

The idea of replicating tasks in parallel computing has been adopted at a large scale via the speculative execution in both Hadoop MapReduce~\cite{Cheng2014IMC}, and Apache Spark~\cite{Vernik2017SSC}. 
The use of redundancy to reduce mean service completion time has also attracted attention in other contexts such as cloud storage and networking~\cite{Shah2016TCOM, Xiang2016TNET}. 
These works focus on the queuing aspects at the storage servers. 
Replication is a special case of general redundancy mechanism and is considered in this paper. 
Replication is also referred to as forking in popular scheduling parlance. 
Replicating a job on multiple servers affords us the parallelism gains, 
while it comes at the cost of server utilization. 
We consider a dynamic replication strategy, 
where an unfinished task is sequentially forked over multiple servers at certain forking times. 
We thus provide an efficient tradeoff between the mean service completion time and the mean utilization cost of computing resources. 

Recently, the authors of~\cite{Wang2015Sigmetrics} provided a framework for analyzing straggling tasks for a computing job. 
The authors of~\cite{Wang2015Sigmetrics} considered executing $K$ jobs (or tasks), 
where one copy for each job was started at time $t=0$. 
They had a single forking point at the instant of job completion of a fraction $(1-p)$ of all $K$ jobs. 
At this forking point, each of the remaining $pK$ incomplete jobs is replicated $r$ times. 
Two variants, where the original tasks were killed or kept at the forking point were considered. 
In this setting, the mean service completion time and the mean server utilization cost of computing resources per job were computed in the limit as $K\to\infty$, 
where the execution time follows either a shifted exponential or a Pareto distribution. 
The analysis assumes a single forking point, 
corresponding to the time where multiple replicas are run for an unfinished job.  

In contrast, we provide a multi-fork analysis of the computing jobs, 
with a selection of number of servers for replication at each forking point. 
Specifically, we assume $K$ jobs, all starting at $t_0 = 0$ and an identical sequence of $m$ forking points for each job, 
denoted by $t_i$ for $i \in [m] \triangleq \set{1, \dots, m}$.  
We initialize each task on $n_0$ parallel servers at instant $t_0=0$. 
At each forking point $t_i$, we start additional $n_i$ replicas for each unfinished job. %, for $i \in [m]$. 
If a job is unfinished for any time $t \in [t_i, t_{i+1})$, then it has $N_i = \sum_{j=0}^i n_j$ active replicas. 
This procedure is illustrated in Fig.~\ref{Fig:Numberofserversforked}, where we plot the time-evolution of number of active replicas for a single unfinished task. 
With multiple forking points, the mean service completion time and average server utilization cost are evaluated where the server execution times are assumed to be \emph{i.i.d.} following a shifted exponential distribution, 
and the forking points are separated by at least the shift of the distribution. 
\begin{figure}[hhh]
\centering
\pgfplotsset{
every axis/.append style={
grid=both,
grid style={densely dotted,line width=.1pt, draw=gray!10},
major grid style={line width=.2pt,draw=gray!50},
axis x line=middle,    % put the x axis in the middle
axis y line=middle,    % put the y axis in the middle
ticks=both,
xtick={0,1,2,3,4,5},
ytick={2,4,6,8,10,12},
yticklabels={2,4,6,8,10,12},
xticklabels={0,1,2,3,4,5},
xlabel near ticks,
ylabel near ticks,
xlabel={Time $t$},
xmajorgrids,
ylabel={Number of active replicas $N(t)$},
ymajorgrids,
}
}
\begin{tikzpicture}
\begin{axis} [%xtick=data, 
xmin=-.5,xmax=6, ymin=-1, ymax=13, unbounded coords=jump]   
    \addplot[blue, thick, mark=][const plot] coordinates{(0,4)(2,9)(4,12)(5,12) (5,0)(6,0)};

    %\node[fill=blue, circle, inner sep=0.1pt] (t0) at (0,0) {$t_0$};

%\fill[blue] (0,0) circle (0.6 mm) node[below] {$t_0$};
%\fill[blue] (2,0) circle (0.6 mm) node[below] {$t_1$};
%\fill[blue] (4,0) circle (0.6 mm) node[below] {$t_2$};
%\fill[blue] (5,0) circle (0.6 mm) node[below] {$S_1$};
%\fill[blue] (0,4) circle (0.6 mm) node[left] {$N_0$};
%\fill[blue] (0,9) circle (0.6 mm) node[left] {$N_1$};
%\fill[blue] (0,12) circle (0.6 mm) node[left] {$N_2$};
\end{axis}

\draw[|<->|, densely dashed, thin] (0.52,1.5) -- (5.799,1.5) node [above, text width=3cm,align=center,midway] {\textbf{$S_1$}};

\draw[|<->|, densely dashed, thin] (1.2,0.4) -- (1.2,2.05) node [text width=3cm, align=center, midway] {\textbf{$n_0$}};

\draw[|<->|, densely dashed, thin] (4.2,2.045) -- (4.2,4.08) node [text width=3cm, align=center, midway] {\textbf{$n_1$}};

\draw[|<->|,densely dashed, thin] (5.2,4.07) -- (5.2,5.3) node [text width=3cm, align=center, midway] {\textbf{$n_2$}};

\fill[blue] (0.5199,0.41) circle (0.6 mm) node[above right] {$t_0$};
\fill[blue] (2.639,0.41) circle (0.6 mm) node[above] {$t_1$};
\fill[blue] (4.75,0.41) circle (0.6 mm) node[above] {$t_2$};
\end{tikzpicture}
\caption{
We illustrate the two-forking for a single task with total number of servers $N=12$,  
by plotting the time-evolution of number of active replicas $N(t)$. 
We consider the example when the sequence of number of forked servers is $(n_0, n_1, n_2) = (4,5,3)$, 
the sequence of forking times is $(t_0, t_1, t_2) = (0,2,4)$, 
and the service completion time is $S_1 = 5$.  
For this case, the server utilization cost $W = n_0S_1 + n_1(S_1 - t_1) + n_2(S_1-t_2)$. 
%If a task remains unfinished until time $t \in [t_i, t_{i+1})$, 
%then the number of active replicas at time $t$ is $N(t) = N_i$ where a number of $n_\ell = N_\ell - N_{\ell-1}$ additional replicas were created at instants $t_\ell$ for $\ell \le i$. 
}
\label{Fig:Numberofserversforked}
\end{figure}
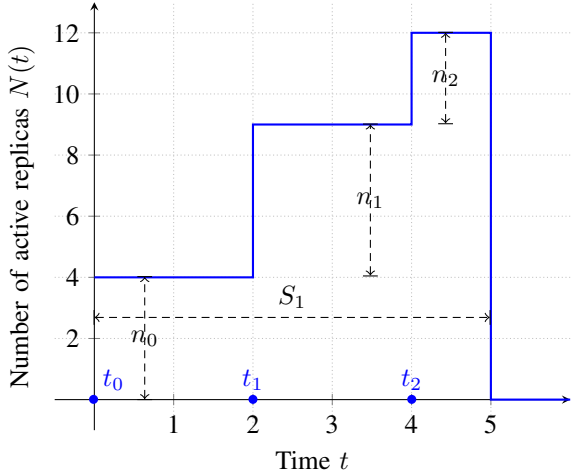

The results of single forking point analysis show that starting with multiple copies per job at time $t_0=0$ can perform much better than starting with a single copy per job as proposed in~\cite{Wang2015Sigmetrics}, 
when the forking time is below a certain threshold. 
Numerical evaluations show orders of magnitude improvement in the average server utilization cost for a fixed service completion time. 
The proposed framework thus shows that the single forking point strategies used in the literature may be significantly suboptimal, 
and one must judiciously select the number of servers to run at each forking time.   
Further, having more forking points help achieve a better tradeoff between the mean service completion time and the mean server utilization cost.  
%, and this paper analyzes the case of multiple forking points, and give closed-form expressions for the service completion time and the average server utilization cost for general number of forking points. 
% multiple forking points give a significant improvement in the mean service completion time and the average cost of computation of jobs as compared to the single forking analysis in \cite{Wang2015Sigmetrics}. The proposed framework thus shows that the single forking point strategies used in the literature may be significantly suboptimal, and novel strategies with multiple forking points must be considered.

%-----------------Related Work-Start-----------------

\subsection{Related Work}
\label{sec:related}
It has been observed that task execution times have significant variability, 
partly due to resource sharing by multiple jobs~\cite{Dean2013ACM}.  
The slowest tasks that determine the job execution time are known as ``stragglers''. 
One of the key approaches to mitigate the effect of stragglers is to either re-launch a delayed task,  
or pre-emptively assigning each such task to multiple servers and taking the result of first completing server per task and canceling the same completed task at remaining servers. 
It is known that cancellation overhead can reduce the parallelism gains afforded by the additional servers~\cite{Lee2017TNET}. 
However, for simplicity of analysis and to obtain insight into optimistic performance gains, 
we assumed idealized assumption of negligible cancellation overhead. 
 
Speculative execution have been studied in~\cite{Dean2008ACM}, which acts after the tasks have already slowed down. 
Proactive approaches launch redundant copies of a task in a hope that at least one of them will finish in a timely manner. 
The authors of~\cite{Anantha2013NSDI} perform cloning to mitigate the effect of stragglers. 
The authors of~\cite{Wang2015Sigmetrics} analyzed the latency and cost for replication-based strategies for straggler mitigation. 
A machine learning approach for predicting and avoiding these stragglers has been studied in~\cite{Yadwadkar2016JMLR}. 

The problem of analyzing the completion of replicated parallel tasks is equivalent to having multiple redundant requests. 
The authors of~\cite{Gardner2015Sigmetrics} present an analysis of redundant requests where each job enters the queue at multiple servers. 
Service time completion can be generalized to finding mean waiting time of  
%A generalization of the above problem has been considered in distributed storage, 
a stream of arriving redundant requests,  
and has been studied in the context of distributed storage.  %in~\cite{Shah2016TCOM,Xiang2016TNET,Badita2019TIT,Gardner2016QueueingSystems,wang2019delay}. 
We note that the queueing studies for streaming arrival of requests exist only for fixed redundancy per request, and are difficult to characterize analytically even for this case. 
This implies that each job is forked to the identical number of servers, 
and job is completed by joining identical number of service completions. 
Tight numerical bounds are provided in~\cite{Shah2016TCOM}, 
analytical bounds are presented in~\cite{Joshi2014JSAC,Xiang2014Sigmetrics,Xiang2016TNET,al2018video}, 
analytical approximations appear in~\cite{Badita2019TIT}, 
exact analysis for small systems in~\cite{Gardner2016QueueingSystems},  
 exact analysis for random independent scheduling for asymptotically large number of servers in~\cite{wang2019delay}, and an exact analysis of tail index for Pareto-distributed file sizes in \cite{al2019ttloc}.

Even though we are not considering the streaming arrival of requests, 
our setting is a generalization of the fixed redundancy scheduling approach studied in the above-mentioned works, 
%where an incoming request is forked to a fixed $n$ servers, 
since the number of parallel servers available to each task is a time-varying function in our problem setting.   
%In this paper, all the redundant requests do not start at the same time. 
%Having multiple forking points implies that the requests will start at different servers at different times, 
%thus the \red{same analysis} does not apply directly.

%-----------------Related Work-End-----------------

%-----------------Main contributions-Start-----------------

\subsection{Main contributions} 
Our main contribution is the design of a multi-forking straggler mitigation policy that can efficiently trade-off mean service completion time and mean server utilization cost, 
by sequentially starting a number of replicas at forking points. 
The key contributions are summarized below. 
\begin{enumerate}

\item We analytically compute the mean service completion time and mean server utilization cost for any finite number of forking points when the completion time of each job on any server is independent and identically distributed according to a shifted exponential distribution with shift $c$ and rate $\mu$, 
and the inter-forking times $t_i -t_{i-1} \ge c$ for each $i \in [m] \triangleq \set{1, 2, \dots, m}$. 

\item For a single forking point, the mean service completion time and mean server utilization cost are analytically computed for all values of forking instants $t_1$, initial number of replicas $n_0$, and additional replicas $n_1$.  
We demonstrate that for single forking point $t_1$, having initial number of replicas $n_0=1$ is sub-optimal since both the performance metrics decrease with initial number of replicas $n_0 \le n_0^\ast$, 
where the inflection point $n_0^\ast \ge 1$ when the forking point $t_1 \le t_1^\ast$. 

\item Numerical results for multi-forking show orders of magnitude improvement in the tradeoff between the two metrics when compared to the baseline case of single-forking with single replica initialization of~\cite{Wang2015Sigmetrics}. 

\item We performed numerical studies for single and multi-forking when the job execution times are assumed to have heavy-tailed distributions such as Pareto and Weibull.  
We also studied single and multi-forking on a real compute cluster.  
We verified that the insights derived from the analytical studies for the shifted exponential distribution continue to hold in all three cases.

\end{enumerate}

%-----------------Main contributions-End-----------------

%-----------------Organization-Start-----------------

\subsection{Organization}
\label{sec:Organization}
The rest of the paper is organized as follows. 
Section~\ref{sec:model} describes the model used in the paper. 
Section~\ref{sec:analysis}  provides the analytical results, 
where the mean service completion time and the mean server utilization are characterized for multiple forking points, 
with single forking being a special case. 
Section \ref{sec:OptSingleFork} explores further properties with single forking point.
%Section \ref{sec:analcomp} compares our 
Sections~\ref{Num_SingleFork} and \ref{sec:MultiFork}  provide a tradeoff between the mean service completion time and the mean server utilization for single and multiple forking points, respectively. 
We also compare our approach with that in~\cite{Wang2015Sigmetrics}. 
Section \ref{sec:IntelDevCloud} provides the experimental results on a real compute cluster, Intel DevCloud.
Section~\ref{sec:concl} concludes the paper, with directions for future work.

%-----------------Organization-Start-----------------

%-----------------Introduction-End-----------------

%-----------------System Model-Start-----------------

\section{System Model}
\label{sec:model}

We consider a distributed computation system with $K$ jobs and $KN$ identical servers, with the cost of server utilization $\lambda$ per unit time. 
Each server $n \in [KN] \triangleq \{1, \dots, KN\}$ has an independent and identically distributed (\emph{i.i.d.}) random service time $T_n$ with distribution function $F$ for each scheduled job on this server.
%The i.i.d. distribution of service times is motivated by scheduling on homogenous servers. 
Uncertainty in execution time at various servers due to independent background processes, motivates our assumption of independently random execution time at each server. 
Identical distribution at each server is motivated primarily by analytical tractability, 
and the fact that we expect similar randomness at each identical server in a homogeneous cloud. 
Thus, following the existing literature~\cite{Xiang2016TNET,Shah2016TCOM,Wang2015Sigmetrics,al2019ttloc,Lee2017TNET,Badita2019TIT}, 
we adopted this commonly-used assumption for analysis. 

It has been shown in~\cite{Lee2018TIT, Bitar2017ISIT,Xiang2016TNET,al2018video} that shifted exponential well models the service time distribution in distributed computation networks. 
That is, it suggests that service time for each computation task can be modeled by aggregation of two components: a constant overhead and a random exponentially distributed component. 
Motivated by these studies together with the goal of analytical tractability, 
we assume the service time distribution to be a shifted exponential with rate $\mu$ and shift $c$, 
such that the complementary distribution function $\bar{F} = 1 - F$ can be written 
\EQN{
\label{eqn:ShiftedExp}
\bar{F}(x) \triangleq P\{t_0 > x\} = 
\begin{cases}
1, & x \in [0, c],\\
e^{-\mu(x-c)}, & x \ge c.
\end{cases} 
}

We assume that $KN$ servers are partitioned into $K$ disjoint sets of $N$ servers, where each set of $N$ servers can be utilized by a single job. 
The service completion time for job $k \in [K]$ sequentially scheduled over $N$ servers is denoted by $S_k$ and its server utilization cost is denoted by $W_k$. 
Then the service completion time for all $K$ jobs (also known as the makespan of the jobs) is the maximum of service completion times of all $K$ jobs, and is denoted by
\EQN{
\label{eqn:CompletionTime}
S = \max_{k \in [K]}S_k. 
}

Similarly, the average server utilization cost for $K$ independent jobs is defined as the average of server utilization cost for all $K$ jobs, and is denoted by
\EQN{
\label{eqn:UtilizationCost}
W = \frac{1}{K}\sum_{k \in [K]} W_k. 
}
%\red{[This figure must be referred at proper place!]}
%\begin{figure}[hhh]
%\centering
%\scalebox{1}{\input{Figures/timeStampsOnRealLine}}
%\caption{
%\violet{
%This figure illustrates the two-forking for a single task, 
%by plotting the different completion times on the real line,
%with total number of servers $N=12$ and the forked servers $n_0=4$, $n_1=5$, $n_2=3$ at forking points $t_0=0$, $t_1=2$, $t_2=4$. 
%The service completion $S$ is minimum of all the completion times and the server utilization $W$ is $n_0 S + n_1 (S-t_1) + n_2 (S-t_2)$.
%}
%}
%\label{Fig:timeStampsOnRealLine}
%\end{figure}

We are interested in the optimal trade-off between mean service completion time $\E S$ and mean server utilization cost $\E W$ for $K$ jobs over these $KN$ servers. 
We will see that starting all the servers initially minimizes the mean service completion time, whereas it leads to maximum server utilization cost. 
Hence, we adopt an identical sequential policy for each of the $K$ jobs. 
A job $k \in [K]$ starts with $n_0$ parallel servers at time $t_0 = 0$, 
and sequentially adds $n_i$ servers at instant $t_i > t_{i-1}$ until we utilize all the $N$ servers. 
We let $m$ denote the number of sequential addition of servers such that $n_0 + \dots +n_m = N$. 

That is, we are considering $K$ parallel jobs, where each job is replicated on $N$ servers sequentially. 
Sequential addition of servers is motivated by the fact that service times are random and there is a cost associated with the on-time of each server. 
Hence, we should commission additional service only when absolutely necessary. 
For analytical tractability, we have further assumed $K$ parallel jobs to be uncoupled and we add extra servers in an identical fashion for each unfinished job at the same forking times. 
One can couple the $K$ jobs, by adding additional servers performing coded version of the tasks,
 such that  any $K$ task completions suffice~\cite{Joshi2014JSAC}. 
However, this can incur encoding and decoding delay of the computational tasks~\cite{Lee2017TNET}, and requires mixing of $K$ sub-tasks which may not always be desirable. 

We will consider the general case of $m \ge 1$, and find the mean service completion time and the mean server utilization cost for the case when the inter server addition interval $t_i - t_{i-1} \ge c$. 
%Formally, we will first solve the following problem. 
%\begin{prob} 
%\label{prob:MultiTaskTimingGreaterC}
%For $KN$ identical servers with \emph{i.i.d.} service completion time with shifted exponential distribution defined in~\eqref{eqn:ShiftedExp}
%For each single independent task, consider a sequence of timing instants $(t_0, \dots, t_m)$ and number of servers $(n_0, \dots, n_m)$ such that the total number of servers is $N$, and the timing thresholds are $c$ distance apart for finishing each single task. 
%Find the mean service completions time to finish $K$ parallel independent tasks and the mean workload across all the servers. 
%\end{prob}
Next, we will consider the specific case of single forking when $m = 1$ and $t_i - t_{i-1} > 0$. 

We note that the problem is important even when there are stochastic arrivals since this procedure of forking can be used for any arriving job. 
Even though the exact queueing analysis for multi-forking with stochastic arrivals remains open, 
we provide insights on sequential scheduling of $K$ initial jobs assigned to total $N$ servers each. 
In particular, the results in this paper  can provide an understanding of how many servers to use at each forking time to optimize the mean service completion time $\E S$ and the mean server utilization cost $\E W$.

%-----------------System Model-End-----------------

%-----------------Analysis-Start-----------------

\section{Analysis} 
\label{sec:analysis}

We observe that service completion time $S_k$ for each job $k \in [K]$ is independent due to independence of server completion times. 
Further, since we employ the identical forking strategy for each job, 
the service completion time $S_k$ for each job $k \in [K]$ has an identical distribution as well. 
From the \emph{i.i.d.} service completion times for individual job, it follows from~\eqref{eqn:CompletionTime} that $F_S(x) = F_{S_1}^K(x)$. 
From the positivity of service completions times, we have 
\EQN{
\label{eqn:MeanCompletionTime}
\E S = \int_{\R_+}\bar{F}_S(x)dx = \int_{\R_+}(1 - (1-\bar{F}_{S_1}(x))^K)dx. 
} 
From the similar arguments, we can conclude that the server utilization costs $(W_k: k \in [K])$ are \emph{i.i.d.}, and from the linearity of expectations, we have 
\EQN{
\label{eqn:MeanUtilizationCost}
\E W = \E W_1. 
}
It follows that we should first find the complementary distribution of service completion time $F_{S_1}(x)$ and the mean server utilization cost $\E W_1$ for any single task.

%-----------------Single Task-Start-----------------

\subsection{Single Task}
\label{subsec:SingleTask}
At instant $t_i$, we switch on $n_i$ servers that continue being utilized until the service completion time $S_1$ for a single task. 
Hence, the total cost of server utilization in terms of service completion times $S_1$ for single task is 
\EQN{
\label{eqn:ServiceCost}
W_1 = \lambda\sum_{i =0}^mn_i (S_1 - t_i)_+,
}
where $(x)_+ \triangleq \max\{x, 0\}$. 

Let the time-interval $I_i \triangleq [t_i, t_{i+1})$ and we define $t_{m+1} = \infty$. 
Clearly, the disjoint intervals $I_i$ partition the positive reals and any $t \in \R_+$ belongs to a unique interval $I_i$ for some $i \in [m]_0 = \{0,1,\dots,m\}$.  
Let $t \in I_i$, then we have $n_\ell$ servers switched on at time $t_\ell$ for $l \le i$. 
The event that the service completion time is longer than duration $t$ is identical to the event that none of the servers started before this time $t$ have finished until this time $t$. 
Let $T_{\ell,p}$ denote the service completion time for the $p$th server started at time $t_\ell$, 
then for time $t \in I_i$ we can write 
\eq{
P\{S_1 > t \} &= P\bigcap\limits_{\ell=0}^i\{ \min_{p \in [n_\ell]}(T_{\ell,p} + t_\ell) > t\}\\
&= P\bigcap\limits_{\ell=0}^i\bigcap\limits_{p \in [n_\ell]}\{T_{\ell,p} > t -t_\ell \}. 
}
From the \emph{i.i.d.} service completion time for all servers,  we can write the complementary distribution function of service completion time $S_1$ as
\EQN{
\label{eqn:ServiceDist}
\bar{F}_{S_1}(t) = \prod_{\ell = 0}^i\bar{F}(t-t_\ell)^{n_\ell},~ t \in I_i.
} 
For a single task, we have $N_i \triangleq \sum_{\ell=0}^in_\ell$ servers working in parallel during the interval $[t_i, t_{i+1})$. 
If the task is unfinished until time $t_i$, 
then $n_\ell$ servers switched on at instant $t_\ell < t_i$ have been working on this task since then. 
Hence the server utilization until time $t_i$ is denoted by 
\EQN{
\label{eqn:tau}
\tau_i \triangleq \sum_{\ell = 0}^in_\ell(t_i-t_\ell). 
}
Shifted exponential distribution of server completion time $T_n$ defined in~\eqref{eqn:ShiftedExp}, is akin to a constant start-up time $c$ for the server after which the random service time $T_n-c$ is distributed exponentially with rate $\mu$. 
Hence, the servers switched on at time instant $t_i$ only begin the random part of the service at time $t_i+c$. 
Accordingly, we define shifted intervals $\tI_i \triangleq [t_i+c, t_{i+1}+c) = c + I_i$ where $N_i$ servers are working in parallel. 
In the following, we use the notation $[m]_0 = \set{0, 1, \dots, m}$. 

%-----------------Lemma-Start-----------------

\begin{lem}
\label{lem:SingleTaskServiceDist}
Consider a single task being served by $N$ servers started sequentially at times $(t_j: j \in [m]_0)$ in batches of $(n_j: j \in [m]_0)$. 
When the job completion time for each server has an \emph{i.i.d.} shifted exponential distribution as defined in~\eqref{eqn:ShiftedExp}, then the complementary distribution of service completion time for a single task is given by 
\EQN{
\label{eqn:SingleTaskServiceDist}
\bar{F}_{S_1}(t) %= \exp\left(-\mu\sum_{\ell = 0}^in_\ell(t-t_\ell-c)\right),~t \in \tI_i. 
= e^{\left(-\mu N_i(t-t_i-c) - \mu\tau_i\right)},~t \in \tI_i. 
}
\end{lem}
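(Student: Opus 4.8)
The plan is to take the product formula \eqref{eqn:ServiceDist} for $\bar F_{S_1}$ as the starting point, substitute the shifted exponential $\bar F$ from \eqref{eqn:ShiftedExp}, and then rewrite the resulting finite product of exponentials as a single exponential whose exponent is organized around the reference time $t_i+c$. First I would fix $t \in \tI_i = [t_i+c, t_{i+1}+c)$ and reduce to a clean product over the first $i+1$ batches. Since the intervals $I_j = [t_j, t_{j+1})$ partition $\R_+$ and, by the hypothesis $t_j - t_{j-1} \ge c$, we have $t_{i+2} \ge t_{i+1}+c > t$, the point $t$ lies either in $I_i$ or in $I_{i+1}$. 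In the first case \eqref{eqn:ServiceDist} gives $\bar F_{S_1}(t) = \prod_{\ell=0}^i \bar F(t-t_\ell)^{n_\ell}$ directly; in the second case it gives the same product times the extra factor $\bar F(t-t_{i+1})^{n_{i+1}}$, which equals $1$ because $t - t_{i+1} < c$. Hence in all cases $\bar F_{S_1}(t) = \prod_{\ell=0}^i \bar F(t-t_\ell)^{n_\ell}$ for $t \in \tI_i$.

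Next I would evaluate each of these $i+1$ factors. For every $\ell \in \{0,1,\dots,i\}$ the inter-forking hypothesis gives $t - t_\ell \ge (t_i+c) - t_\ell \ge c$, so by \eqref{eqn:ShiftedExp} we have $\bar F(t-t_\ell) = e^{-\mu(t-t_\ell-c)}$; this is exactly the step at which the condition $t_j - t_{j-1} \ge c$ is used (and the reason the statement is phrased on the shifted interval $\tI_i$ rather than on $I_i$), since it guarantees that all $N_i$ active servers have completed their deterministic start-up of length $c$. Therefore $\bar F_{S_1}(t) = \prod_{\ell=0}^i e^{-\mu n_\ell(t-t_\ell-c)} = \exp\big(-\mu\sum_{\ell=0}^i n_\ell(t-t_\ell-c)\big)$.

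Finally I would simplify the exponent by splitting $t - t_\ell - c = (t - t_i - c) + (t_i - t_\ell)$ and summing over $\ell$: the first summand contributes $(t-t_i-c)\sum_{\ell=0}^i n_\ell = N_i(t-t_i-c)$ and the second contributes $\sum_{\ell=0}^i n_\ell(t_i-t_\ell) = \tau_i$, by the definition of $N_i$ and of $\tau_i$ in \eqref{eqn:tau}. This yields $\bar F_{S_1}(t) = e^{-\mu N_i(t-t_i-c) - \mu\tau_i}$ for $t \in \tI_i$, which is precisely \eqref{eqn:SingleTaskServiceDist}. The only delicate part is the bookkeeping in the first two paragraphs, namely confirming which servers are active on $\tI_i$ and which of those are past their start-up phase, and it is there that $t_j - t_{j-1}\ge c$ enters; everything else is a one-line rearrangement. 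As a consistency check one can verify that the piecewise expression is continuous at $t = t_i+c$ using the identity $\tau_i = \tau_{i-1} + N_{i-1}(t_i - t_{i-1})$.
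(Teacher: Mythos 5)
Your proof is correct and follows essentially the same route as the paper's: restrict attention to the batches started by time $t$, substitute the shifted-exponential tail for each of the $N_i$ active servers, and regroup the exponent around $t_i+c$ using the definitions of $N_i$ and $\tau_i$; you are simply more explicit than the paper about which batches contribute a nontrivial factor on $\tI_i$. One small quibble: the inequality $t-t_\ell \ge (t_i+c)-t_\ell \ge c$ for $\ell\le i$ uses only $t_\ell\le t_i$, so the inter-forking hypothesis is not really what is being invoked at that step (indeed every batch $j>i$ contributes the factor $1$ because $t<t_{i+1}+c$ forces $t-t_j<c$), but this is a matter of attribution and does not affect the validity of your argument.
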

\begin{proof}
Let $t \in \tI_i$, then from the definition of service completion time, we can write 
\EQ{
P\{S_1 > t\} = P\bigcap_{\ell =0}^i\bigcap_{p=1}^{n_\ell}\{T_{\ell,p}> c + (t-t_\ell-c)\}.%,~t \in \tI_i.
}
Since the job completion time at each server is \emph{i.i.d.} with the common shifted exponential distribution defined in~\eqref{eqn:ShiftedExp}, 
we get 
\EQ{
P\{S_1 > t\} = \exp(-\mu\sum_{\ell=0}^{i}n_\ell(t-t_\ell-c)),~t \in \tI_i.
}
The result follows from the definition of $\tau_i$ from equation~\eqref{eqn:tau}, 
and the definition of aggregate number of forked servers $N_i = \sum_{\ell = 0}^in_\ell$ at $i$th forking time $t_i$. 
\end{proof}

%-----------------Lemma-End-----------------

%-----------------Lemma-Start-----------------

\begin{lem}
\label{lem:SingleTaskUtilization}
Consider a single task being served by $N$ servers started sequentially at times $(t_j: j \in [m]_0)$ in batches of $(n_j: j \in [m]_0)$. 
When the job completion time for each server has an \emph{i.i.d.} shifted exponential distribution as defined in~\eqref{eqn:ShiftedExp}, 
then the mean server utilization cost is given by 
\EQN{
\label{eqn:SingleTaskUtilization}
\E W_1= \lambda\sum_{i =0}^mn_i\int_{t_i}^{t_i+c}\bar{F}_{S_1}(t)dt + \lambda\sum_{i =0}^mN_i\int_{\tI_i}\bar{F}_{S_1}(t)dt. 
}
\end{lem}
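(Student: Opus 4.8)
The plan is to start from the pathwise cost identity~\eqref{eqn:ServiceCost}, take expectations, rewrite each truncated mean as a tail integral, and then reorganize the resulting semi-infinite integrals over the shifted intervals $\tI_i$. No use is made of the closed form for $\bar{F}_{S_1}$ from Lemma~\ref{lem:SingleTaskServiceDist} beyond its integrability, so the argument is purely a bookkeeping computation.

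First I would apply linearity of expectation to~\eqref{eqn:ServiceCost} to get $\E W_1 = \lambda\sum_{i=0}^m n_i\,\E\bigl[(S_1-t_i)_+\bigr]$. Since $(S_1-t_i)_+$ is a nonnegative random variable, the tail-integral representation of its mean gives
\EQ{
\E\bigl[(S_1-t_i)_+\bigr] = \int_{\R_+}P\{(S_1-t_i)_+ > s\}\,ds = \int_{\R_+}P\{S_1 > t_i+s\}\,ds = \int_{t_i}^{\infty}\bar{F}_{S_1}(t)\,dt,
}
so that $\E W_1 = \lambda\sum_{i=0}^m n_i\int_{t_i}^{\infty}\bar{F}_{S_1}(t)\,dt$. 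Splitting each integral at $t_i+c$, namely $\int_{t_i}^{\infty} = \int_{t_i}^{t_i+c} + \int_{t_i+c}^{\infty}$, immediately produces the first sum in~\eqref{eqn:SingleTaskUtilization}, and it remains to identify $\lambda\sum_{i=0}^m n_i\int_{t_i+c}^{\infty}\bar{F}_{S_1}(t)\,dt$ with the second sum.

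For this last step I would use that, with the convention $t_{m+1}=\infty$, the shifted intervals $\tI_j=[t_j+c,t_{j+1}+c)$ for $j\ge i$ partition $[t_i+c,\infty)$, hence $\int_{t_i+c}^{\infty} = \sum_{j=i}^m\int_{\tI_j}$. Substituting this and interchanging the order of the two finite summations,
\EQ{
\sum_{i=0}^m n_i\sum_{j=i}^m\int_{\tI_j}\bar{F}_{S_1}(t)\,dt = \sum_{j=0}^m\Bigl(\sum_{i=0}^j n_i\Bigr)\int_{\tI_j}\bar{F}_{S_1}(t)\,dt = \sum_{j=0}^m N_j\int_{\tI_j}\bar{F}_{S_1}(t)\,dt,
}
using $N_j = \sum_{\ell=0}^j n_\ell$. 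Adding back the $\int_{t_i}^{t_i+c}$ contributions and the factor $\lambda$ yields~\eqref{eqn:SingleTaskUtilization}.

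There is no real obstacle here; the only point needing a little care is the swap in the order of summation in the last display (a discrete Fubini step), together with correctly handling the convention $t_{m+1}=\infty$ so that $\bigcup_{j\ge i}\tI_j$ is exactly $[t_i+c,\infty)$. Integrability of $\bar{F}_{S_1}$, which justifies the manipulations, follows since $\bar{F}_{S_1}\le\bar{F}(\cdot-t_0)^{n_0}$ decays exponentially, so $\E S_1<\infty$ and all the integrals above are finite.
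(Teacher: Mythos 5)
Your proof is correct and follows essentially the same route as the paper: linearity of expectation on~\eqref{eqn:ServiceCost}, the tail-integral formula for $\E(S_1-t_i)_+$, a split of $[t_i,\infty)$ into $[t_i,t_i+c)$ and the shifted intervals $\tI_j$ for $j\ge i$, and an exchange of the two finite sums to collect the coefficients into $N_j$. You have simply written out explicitly the discrete Fubini step and the integrability remark that the paper leaves implicit.
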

\begin{proof}
From the equation~\eqref{eqn:ServiceCost} for the service utilization cost for a single task, 
the linearity of expectations, and positivity of random variables $(S_1-t_i)_+$, 
we can write the mean server utilization cost as 
\EQ{
\E W_1 = \lambda\sum_{i=0}^mn_i\E(S_1-t_i)_+ = \lambda\sum_{i=0}^mn_i\int_{t_i}^{\infty}\bar{F}_{S_1}(t)dt. 
}
We can write the integral over $[t_i, \infty)$ as the sum of integrals over its partition $\{[t_i, t_i+c),\tI_{i}, \tI_{i+1}, \dots, \tI_m\}$. 
Exchanging summations over indices $i \in [m]_0$ and $j \ge i$, we get the result.
%\EQ{
%\frac{1}{\lambda}\E W_1 
%%&= \sum_{i \in [m]}n_i\int_{t_i}^{t_i+c}\bar{F}_{S_1}(t)dt + \sum_{i \in [m]}n_i\sum_{j=i}^{m}\int_{\tI_j}\bar{F}_{S_1}(t)dt\\
%= \sum_{i \in [m]}n_i\int_{t_i}^{t_i+c}\bar{F}_{S_1}(t)dt + \sum_{i \in [m]}N_i\int_{\tI_i}\bar{F}_{S_1}(t)dt. 
%}
\end{proof}
For general $m$, there is no straightforward way to evaluate the integral $\int_{t_i}^{t_i+c}\bar{F}_{S_1}(t)dt$ when $t_{i+1} - t_i \in (0, c)$. This is because the integration has to account for the servers started between $t_i$ and $t_i+c$, which makes the integral evaluation cumbersome.  
For simplicity, we stick with the case when $t_{i+1} - t_i \ge c$ for all $i \in [m]_0$. 
The results for $\E S$ and $\E W$ in this case will be provided in Corollary \ref{cor:SingleTaskMeans}.
However for the single forking case when $m=1$, we will derive the results when $t_1 -t_0 > 0$ and not necessarily larger than $c$ in Section \ref{sec:singlefork}.  

%-----------------Theorem-End-----------------

%-----------------Single Task-End-----------------

%-----------------Parallel Tasks-Start-----------------

\subsection{Parallel Tasks}
Next, we find the mean of service completion time and the mean of server utilization cost for $K$ parallel tasks on $N$ servers each, using the complementary service distribution $\bar{F}_{S_1}$ for a single task, defined in~\eqref{eqn:SingleTaskServiceDist}. 
Formally, we describe our setup below.

%-----------------Problem-Start-----------------
\begin{prob} 
\label{prob:TimingGreaterC} 
Consider $K$ parallel tasks, where each single task is being served by $N$ servers starting in batches of $(n_j: j \in [m]_0)$, 
sequentially at times $(t_j: j \in [m]_0)$ such that the total number of servers is $N$ and timing thresholds are at least $c$ distance apart. 
That is, we have the following constraints, $t_0 = 0, t_{m+1} = \infty$, and 
\meq{2}{
&\sum_{j = 0}^m n_j = N, && t_{j+1} - t_j \ge c,~j \in [m]_0. 
} 
When the job completion time for each server has an \emph{i.i.d.} shifted exponential distribution as defined in~\eqref{eqn:ShiftedExp}, 
find the mean of the service completion time to finish all $K$ parallel tasks and the mean of the server utilization cost. 
\end{prob}

The time evolution of number of active replicas for a single task $S_k$ is illustrated in Fig. \ref{Fig:Numberofserversforked}. 
When a task is completed from any replica, the number of active replicas for that task becomes zero. 
The overall service completion time $S$ of the $K$ tasks is the maximum of the completion of each of the $K$ tasks, i.e. $S = \max_{k \in [K]}S_k$.
We need the following Lemma to evaluate the mean service completion time. 
%-----------------Lemma-Start-----------------
\begin{lem} 
\label{lem:MultiTaskIntegralTerms}
We can write the following integrals for complementary distribution of service completion times. 
For $i \in [m]_0$, we have 
\EQN{
\label{eqn:MultiTaskStandardIntegralTerm}
\int_{t \in \tI_i}\bar{F}_{S}(t)dt = -\frac{1}{N_i\mu}\sum_{k=1}^K\binom{K}{k}\frac{(-1)^k}{k}\left(e^{-k\mu\tau_i} - e^{-k\mu\tau_{i+1}}\right).
}
For $1 \le i \le m$, we can write 
\EQN{
\label{eqn:MultiTaskLeadingIntegralTerm}
\int_{t_i}^{t_i+c}\bar{F}_{S}(t)dt = -\sum_{k=1}^K\binom{K}{k}\frac{(-e^{-\mu\tau_i})^k}{kN_{i-1}\mu}\left(e^{k\mu N_{i-1} c} - 1\right),
}
where the total number of active servers in interval $\tI_i$ is $N_i = \sum_{\ell = 0}^{i}n_\ell$ 
and server utilization until time $t_i+c$ is $\tau_i = \sum_{\ell = 0}^in_\ell(t_i-t_\ell)$.
\end{lem}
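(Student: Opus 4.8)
The plan is to expand the complementary makespan distribution by the binomial theorem and reduce each piece to Lemma~\ref{lem:SingleTaskServiceDist}. Since $F_S = F_{S_1}^K$, we have $\bar F_S(t) = 1-(1-\bar F_{S_1}(t))^K = -\sum_{k=1}^K\binom{K}{k}(-1)^k\bar F_{S_1}(t)^k$, so every integral of $\bar F_S$ over a subinterval becomes a finite linear combination of integrals of $\bar F_{S_1}(t)^k$, and on the intervals in question each $\bar F_{S_1}(t)$ is a single exponential by Lemma~\ref{lem:SingleTaskServiceDist}, hence so is $\bar F_{S_1}(t)^k$.

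For~\eqref{eqn:MultiTaskStandardIntegralTerm} I would take $t\in\tI_i$, where Lemma~\ref{lem:SingleTaskServiceDist} gives $\bar F_{S_1}(t) = e^{-\mu N_i(t-t_i-c)-\mu\tau_i}$, so $\bar F_{S_1}(t)^k = e^{-k\mu\tau_i}\,e^{-k\mu N_i(t-t_i-c)}$. Substituting $u=t-t_i-c$ and integrating over $[0,t_{i+1}-t_i)$ yields $\tfrac{1}{k\mu N_i}\big(1-e^{-k\mu N_i(t_{i+1}-t_i)}\big)$ (the $i=m$ case being the obvious limit with $t_{m+1}=\infty$). It then remains to use the identity $\tau_{i+1} = \tau_i + N_i(t_{i+1}-t_i)$, which follows directly from $\tau_i = \sum_{\ell=0}^i n_\ell(t_i-t_\ell)$ and $N_i = \sum_{\ell=0}^i n_\ell$ (the term $\ell=i$ contributes nothing to $\tau_i$). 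Reassembling the $k$-sum gives exactly~\eqref{eqn:MultiTaskStandardIntegralTerm}.

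For~\eqref{eqn:MultiTaskLeadingIntegralTerm} the key observation is that $[t_i,t_i+c)\subseteq\tI_{i-1}=[t_{i-1}+c,\,t_i+c)$, because the spacing hypothesis $t_i-t_{i-1}\ge c$ from Problem~\ref{prob:TimingGreaterC} gives $t_{i-1}+c\le t_i$. Hence on $[t_i,t_i+c)$ Lemma~\ref{lem:SingleTaskServiceDist} (applied on $\tI_{i-1}$) gives $\bar F_{S_1}(t) = e^{-\mu N_{i-1}(t-t_{i-1}-c)-\mu\tau_{i-1}}$, which, using $N_{i-1}(t_i-t_{i-1})+\tau_{i-1}=\tau_i$, I would rewrite as $e^{-\mu N_{i-1}(t-t_i-c)-\mu\tau_i}$. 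Raising to the $k$th power, substituting $u=t-t_i$, and integrating over $[0,c)$ produces $e^{k\mu N_{i-1}c}\,\tfrac{1}{k\mu N_{i-1}}\big(1-e^{-k\mu N_{i-1}c}\big)=\tfrac{1}{k\mu N_{i-1}}\big(e^{k\mu N_{i-1}c}-1\big)$. Reassembling the sum and writing $(-1)^k e^{-k\mu\tau_i}=(-e^{-\mu\tau_i})^k$ gives~\eqref{eqn:MultiTaskLeadingIntegralTerm}.

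The computation is entirely routine once the exponential forms are in hand; the only genuine subtlety — and the step I would be most careful about — is the interval bookkeeping in the second part, namely that $[t_i,t_i+c)$ lies inside $\tI_{i-1}$, so that only the batches $0,\dots,i-1$ have switched on their random component over this subinterval. This is exactly where the hypothesis $t_i-t_{i-1}\ge c$ is used, and it is the reason the formula involves $N_{i-1}$ rather than $N_i$.
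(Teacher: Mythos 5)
Your proof is correct and follows essentially the same route as the paper's: expand $\bar F_S = 1-(1-\bar F_{S_1})^K$ by the binomial theorem, integrate the resulting exponentials $\bar F_{S_1}^k$ over each interval using Lemma~\ref{lem:SingleTaskServiceDist}, and use the inclusion $[t_i,t_i+c)\subseteq\tI_{i-1}$ (guaranteed by $t_i-t_{i-1}\ge c$) for the second identity. The only difference is that you spell out the substitutions and the identities $\tau_{i+1}=\tau_i+N_i(t_{i+1}-t_i)$ and $\tau_i=\tau_{i-1}+N_{i-1}(t_i-t_{i-1})$ explicitly, which the paper leaves implicit.
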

\begin{proof}
From the fact that $F_S(x) = F_{S_1}^K(x)$ and the binomial expansion of $(1-x)^K$, we can write 
\EQ{
\bar{F}_S(t) = 1 - (1- \bar{F}_{S_1}(t))^K = -\sum_{k=1}^K\binom{K}{k}(-1)^k\bar{F}^k_{S_1}(t).
}
Using the definition of single task service distribution in~\eqref{eqn:SingleTaskServiceDist} 
and definitions of $N_i$ and $\tau_i$, 
we can integrate $\bar{F}^k_{S_1}(t)$ over interval $\tI_i$, to get 
\EQ{
\int_{t \in \tI_i}\bar{F}^k_{S_1}(t)dt = \frac{1}{k N_i\mu}(e^{-k\mu\tau_i}-e^{-k\mu\tau_{i+1}}). 
}
To integrate $\bar{F}^k_{S_1}(t)$ over the interval $[t_i, t_i+c)$, 
we notice that $[t_i, t_i+c) \subseteq \tI_{i-1}$ since $t_{i-1}+c \le t_i$ by hypothesis. 
Therefore, we can write 
\EQ{
\int_{t_i}^{t_i+c}\bar{F}^k_{S_1}(t)dt = \frac{1}{k N_{i-1}\mu}(e^{-k\mu(\tau_i-N_{i-1}c)}-e^{-k\mu\tau_{i}}). 
}
The result follows from combining the above expressions. 
\end{proof}

%-----------------Lemma-End-----------------

%-----------------Corollary-Start-----------------

\begin{cor}
\label{cor:MultiTaskIntegralTerms}
We can futher simplify the above integrals for complementary distribution of service completion times of Lemma \ref{lem:MultiTaskIntegralTerms} . 
For integers $0 \le i \in m$, we have 
\EQN{
\label{eqn:MultiTaskStandardIntegralTermSimplified}
\int_{t \in \tI_i}\bar{F}_{S}(t)dt =  \frac{1}{N_i\mu}\sum_{k=1}^K\frac{1}{k}\left((1-e^{-\mu\tau_{i+1}})^k- (1-e^{-\mu\tau_i})^k\right).
}
For $1 \le i \le m$, we can write 
\EQN{
\label{eqn:MultiTaskLeadingIntegralTermSimplified}
\int_{t_i}^{t_i+c}\bar{F}_{S}(t)dt =\frac{(e^{\mu N_{i-1} c} - 1)}{N_{i-1}\mu}\sum_{k=1}^K\frac{1}{k}(1-(1-e^{-\mu\tau_i})^k).
}
%where the total number of active servers in interval $\tI_i$ is $N_i = \sum_{\ell = 0}^{i}n_\ell$ 
%and server utilization until time $t_i+c$ is $\tau_i = \sum_{\ell = 0}^in_\ell(t_i-t_\ell)$.
\end{cor}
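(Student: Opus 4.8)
The plan is to trace both displays back to a single elementary identity on alternating binomial sums and then apply it termwise. Set $g_{K}(x) \triangleq \sum_{k=1}^{K}\binom{K}{k}\frac{(-1)^{k-1}}{k}x^{k}$. Moving the leading minus sign inside the sum, the right-hand side of \eqref{eqn:MultiTaskStandardIntegralTerm} becomes exactly $\frac{1}{N_i\mu}\bigl(g_{K}(e^{-\mu\tau_i}) - g_{K}(e^{-\mu\tau_{i+1}})\bigr)$, because $-\binom{K}{k}\frac{(-1)^{k}}{k}e^{-k\mu\tau_i} = \binom{K}{k}\frac{(-1)^{k-1}}{k}(e^{-\mu\tau_i})^{k}$. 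For \eqref{eqn:MultiTaskLeadingIntegralTerm} I would first absorb the $c$-shift into the base: since $(-e^{-\mu\tau_i})^{k}\bigl(e^{k\mu N_{i-1}c}-1\bigr) = (-1)^{k}\bigl(e^{-k\mu(\tau_i-N_{i-1}c)} - e^{-k\mu\tau_i}\bigr)$, that right-hand side equals $\frac{1}{N_{i-1}\mu}\bigl(g_{K}(e^{-\mu(\tau_i-N_{i-1}c)}) - g_{K}(e^{-\mu\tau_i})\bigr)$. Thus everything reduces to a closed form for $g_{K}$.

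The heart of the argument is the identity $g_{K}(x) = \sum_{j=1}^{K}\frac{1}{j}\bigl(1-(1-x)^{j}\bigr)$. I would prove it by recognizing both sides as $\int_{0}^{x}\frac{1-(1-u)^{K}}{u}\,du$: writing $1-(1-u)^{K} = -\sum_{k=1}^{K}\binom{K}{k}(-u)^{k}$ and integrating term by term produces $g_{K}(x)$, while the substitution $v = 1-u$ turns the integral into $\int_{1-x}^{1}\frac{1-v^{K}}{1-v}\,dv = \int_{1-x}^{1}\sum_{j=0}^{K-1}v^{j}\,dv = \sum_{j=1}^{K}\frac{1-(1-x)^{j}}{j}$. (Alternatively, $g_{K}'(x) - g_{K-1}'(x) = (1-x)^{K-1}$ gives $g_{K}(x)-g_{K-1}(x) = \frac{1}{K}\bigl(1-(1-x)^{K}\bigr)$, and one inducts from $g_{0}\equiv 0$.) Substituting this closed form into the two differences from the first paragraph, the constants and the weights $\frac1j$ cancel between the two copies of $g_{K}$, leaving $\frac{1}{N_i\mu}\sum_{k=1}^{K}\frac1k\bigl((1-e^{-\mu\tau_{i+1}})^{k} - (1-e^{-\mu\tau_i})^{k}\bigr)$, which is \eqref{eqn:MultiTaskStandardIntegralTermSimplified}, and the analogous telescoped difference in the arguments $e^{-\mu\tau_i}$ and $e^{-\mu(\tau_i-N_{i-1}c)}$ for the leading term \eqref{eqn:MultiTaskLeadingIntegralTermSimplified}; rewriting the latter uses $\tau_i - N_{i-1}c = \tau_{i-1} + N_{i-1}(t_i - t_{i-1} - c)$ together with $t_i - t_{i-1}\ge c$, which is precisely the hypothesis that places $[t_i,t_i+c)$ inside the single interval $\tI_{i-1}$ used in Lemma~\ref{lem:MultiTaskIntegralTerms}.

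I expect the only genuine obstacle to be packaging the $g_{K}$ identity cleanly; once it is in hand, both claims are pure bookkeeping. Two minor points of care: the index condition ``$0\le i\in m$'' should be read as $0\le i\le m$, with $\tau_{m+1}$ understood through $t_{m+1}=\infty$ so that $e^{-\mu\tau_{m+1}} = 0$ and the $i=m$ instance of \eqref{eqn:MultiTaskStandardIntegralTermSimplified} collapses to $\frac{1}{N_m\mu}\sum_{k=1}^{K}\frac1k\bigl(1-(1-e^{-\mu\tau_m})^{k}\bigr)$; and \eqref{eqn:MultiTaskLeadingIntegralTermSimplified} is stated for $1\le i\le m$ exactly so that $[t_i,t_i+c)$ lies in one $\tI_{i-1}$ on which $\bar F_{S_1}$ has the simple exponential form \eqref{eqn:SingleTaskServiceDist}, which is what made the closed form in Lemma~\ref{lem:MultiTaskIntegralTerms} available in the first place.
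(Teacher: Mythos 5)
Your handling of \eqref{eqn:MultiTaskStandardIntegralTermSimplified} is correct, and your identity $g_K(x)=\sum_{j=1}^K\frac{1}{j}\bigl(1-(1-x)^j\bigr)$, proved via $\int_0^x\frac{1-(1-u)^K}{u}\,du$, is exactly the engine the paper uses, only packaged differently: the paper derives the Pascal-type recursion $h_1(K)=h_1(K-1)+\frac{(1-e^{-\mu\tau_{i+1}})^K-(1-e^{-\mu\tau_i})^K}{KN_i\mu}$ from $\frac{1}{k}\binom{K}{k}=\frac{1}{k}\binom{K-1}{k}+\frac{1}{K}\binom{K}{k}$ and telescopes from $h_1(0)=0$, which is precisely your parenthetical induction $g_K(x)-g_{K-1}(x)=\frac{1}{K}\bigl(1-(1-x)^K\bigr)$. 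Up to that point the two arguments are the same proof.

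The gap is in your last step for \eqref{eqn:MultiTaskLeadingIntegralTermSimplified}. Your machinery, correctly applied, yields $\frac{1}{N_{i-1}\mu}\bigl(g_K(e^{-\mu(\tau_i-N_{i-1}c)})-g_K(e^{-\mu\tau_i})\bigr)$, i.e.
\[
\frac{1}{N_{i-1}\mu}\sum_{k=1}^K\frac{1}{k}\Bigl[(1-e^{-\mu\tau_i})^k-(1-e^{-\mu(\tau_i-N_{i-1}c)})^k\Bigr],
\]
and you then assert that this ``telescoped difference'' is \eqref{eqn:MultiTaskLeadingIntegralTermSimplified}. It is not: the prefactor $(e^{\mu N_{i-1}c}-1)$ cannot be pulled out of the sum once $K\ge 2$, and no rewriting of $\tau_i-N_{i-1}c$ closes that distance. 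Concretely, with $A=e^{-\mu\tau_i}$ and $B=e^{\mu N_{i-1}c}$ and $K=2$, the telescoped difference equals $\frac{A(B-1)}{N_{i-1}\mu}\cdot\frac{4-A-AB}{2}$ while the displayed right-hand side equals $\frac{A(B-1)}{N_{i-1}\mu}\cdot\frac{4-A}{2}$; they differ by $\frac{A^2B(B-1)}{2N_{i-1}\mu}\neq 0$, and agree only at $K=1$. To be fair, the defect originates in the statement itself: the paper's own proof also stops at the recursion for $h_2$, from which only the difference form above follows, so the factored form in \eqref{eqn:MultiTaskLeadingIntegralTermSimplified} does not follow from either argument. (The slip is inert downstream, since Theorem~\ref{thm:general_forking} only invokes the leading integral through Lemma~\ref{lem:MultiTaskIntegralTerms} at $K=1$.) But as written, the final identification in your proof is a step that fails; you should have stopped at the telescoped difference and flagged the mismatch with the stated display rather than asserting it.
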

\begin{proof}
We define the following integrals as a function of number of tasks
\meq{2}{
&h_1(K) = \int_{t \in \tI_i}\bar{F}_S(t)dt, &&h_2(K) = \int_{t=t_i}^{t_i+c}\bar{F}_S(t)dt.
}
We next observe the following identity for binomial coefficients
\EQ{
\frac{1}{k}\binom{K}{k} = \frac{1}{k}\binom{K-1}{k} + \frac{1}{K}\binom{K}{k},~ k \in [K]. 
}
Multiplying with a geometric term in $k$ and summing over all $k \in [K]$, we get 
\EQ{
-\sum_{k=1}^K\binom{K}{k}\frac{\alpha^k}{k} 
= -\sum_{k=1}^{K-1}\binom{K-1}{k}\frac{\alpha^k}{k} +\frac{1- (1+\alpha)^K}{K}.
%-  \frac{1}{K}\sum_{k=0}^K\binom{K}{k}\alpha^k.
}
Hence, we conclude that 
\eq{
h_2(K) &= h_2(K-1) + \frac{(1-e^{-\mu\tau_i})^K-(1-e^{-\mu(\tau_i-N_{i-1}c)})^K}{KN_{i-1}\mu},\\
h_1(K) &= h_1(K-1) + \frac{(1-e^{-\mu\tau_{i+1}})^K- (1-e^{-\mu\tau_i})^K}{KN_i\mu}.
}
The results follow by taking the summation of $h_1(k)$ and $h_2(k)$ over $k \in [K]$ with initial conditions $h_1(0) = h_2(0) = 0$.  
\end{proof}

%-----------------Corollary-End-----------------

Now, we have all the necessary results to compute the means of service completion time and cost server utilization for $K$ parallel tasks. 
%For the ease of further analysis, we would perform some change of variables. 
%We define the amount of work done by $N_m = N$ parallel independent exponential rate $\mu$ servers in the shift $c$ as $\alpha \triangleq c\mu N$, 
%the normalized forking times $t'_i = t_i/c$, 
%the fraction of additional forked servers $x_i = n_i/N$, 
%the fraction of cumulative forked servers $X_i = N_i/N$ for each forking stage $i  \in [m]_0$. 

%-----------------Theorem-Start-----------------

\begin{thm}
\label{thm:general_forking}
For the Problem~\ref{prob:TimingGreaterC}, the mean service completion time is 
\EQN{
\label{eqn:MultiTaskMeanCompletionTime}
\E S 
%= c + \frac{c}{\alpha}\sum_{k=1}^K\frac{1}{k}\left(1 + \sum_{i=1}^m\frac{x_i}{X_iX_{i-1}}(1-e^{-\alpha\tau'_i})^k\right), 
= c + \frac{1}{\mu}\sum_{k=1}^K\frac{1}{k}\left(\frac{1}{N_m} + \sum_{i=1}^{m}\frac{n_i}{N_iN_{i-1}}(1-e^{-\mu \tau_i})^k\right),
%= c -\sum_{k=1}^K\binom{K}{k}\frac{(-1)^k}{k\mu}
%\left(\frac{1}{n_0} -\sum_{i =1}^m\frac{n_i}{N_iN_{i-1}}e^{-k\mu\tau_i}\right),
}
and the mean server utilization cost is 
\EQN{
\label{eqn:SingleTaskMeanUtilizationCost}
\E W_1 = \lambda c n_0 + \frac{\lambda}{\mu} + \frac{\lambda}{\mu}\sum_{i = 1}^m n_ie^{-\mu\tau_{i}}\left(\frac{e^{\mu N_{i-1}c} - 1}{N_{i-1}}\right).
}
\end{thm}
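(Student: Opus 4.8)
The plan is to derive the two formulas in Theorem~\ref{thm:general_forking} by assembling the pieces already established in Lemmas~\ref{lem:SingleTaskServiceDist}--\ref{lem:MultiTaskIntegralTerms} and Corollary~\ref{cor:MultiTaskIntegralTerms}. For the mean service completion time $\E S$, I would start from~\eqref{eqn:MeanCompletionTime}, namely $\E S = \int_{\R_+}\bar F_S(t)\,dt$, and split the positive line into the constant-overhead piece $[0,c)$ plus the union of shifted intervals $\tI_i$ for $i \in [m]_0$. On $[0,c)$ no server has finished (the shift is $c$), so $\bar F_S \equiv 1$ there and contributes exactly $c$. On $\bigcup_i \tI_i$ I would sum~\eqref{eqn:MultiTaskStandardIntegralTermSimplified} over $i \in [m]_0$, giving $\tfrac{1}{\mu}\sum_{k=1}^K \tfrac{1}{k}\sum_{i=0}^m \tfrac{1}{N_i}\big((1-e^{-\mu\tau_{i+1}})^k - (1-e^{-\mu\tau_i})^k\big)$. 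The main computation is then a telescoping-style rearrangement of this double sum: with $\tau_0 = 0$ and $\tau_{m+1} = \infty$ (so $e^{-\mu\tau_{m+1}} = 0$ and $(1-e^{-\mu\tau_{m+1}})^k = 1$), an Abel summation in $i$ collapses the differences, leaving the $\tfrac{1}{N_m}$ boundary term together with interior terms proportional to $\big(\tfrac{1}{N_{i-1}} - \tfrac{1}{N_i}\big)(1-e^{-\mu\tau_i})^k = \tfrac{n_i}{N_i N_{i-1}}(1-e^{-\mu\tau_i})^k$, which is exactly the claimed form.

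For the mean server utilization cost, the starting point is Lemma~\ref{lem:SingleTaskUtilization} together with~\eqref{eqn:MeanUtilizationCost}, which reduces $\E W$ to $\E W_1 = \lambda\sum_{i=0}^m n_i\int_{t_i}^{t_i+c}\bar F_{S_1}(t)\,dt + \lambda\sum_{i=0}^m N_i\int_{\tI_i}\bar F_{S_1}(t)\,dt$. Wait --- note that here the integrands in Lemma~\ref{lem:SingleTaskUtilization} are $\bar F_{S_1}$ (single task), not $\bar F_S$; but the theorem states $\E W_1$, the single-task cost, so I would work directly with the single-task complementary distribution from Lemma~\ref{lem:SingleTaskServiceDist}. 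The integrals $\int_{\tI_i}\bar F_{S_1}\,dt = \tfrac{1}{N_i\mu}(e^{-\mu\tau_i} - e^{-\mu\tau_{i+1}})$ and $\int_{t_i}^{t_i+c}\bar F_{S_1}\,dt = \tfrac{1}{N_{i-1}\mu}(e^{-\mu(\tau_i - N_{i-1}c)} - e^{-\mu\tau_i})$ (the $K=1$ cases of Lemma~\ref{lem:MultiTaskIntegralTerms}, or obtained directly) are what I need. For $i=0$ the leading integral is simply $\int_0^c \bar F_{S_1}\,dt = c$ since $\bar F_{S_1}\equiv 1$ on $[0,c)$, contributing $\lambda c\, n_0$. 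Substituting and simplifying: the second sum becomes $\tfrac{\lambda}{\mu}\sum_{i=0}^m n_i\tfrac{N_i}{N_i}(e^{-\mu\tau_i} - e^{-\mu\tau_{i+1}})\cdot\tfrac{1}{1}$ --- actually $\tfrac{\lambda}{\mu}\sum_{i=0}^m (e^{-\mu\tau_i}-e^{-\mu\tau_{i+1}})\cdot\tfrac{N_i}{N_i}$, wait, it is $\lambda \sum_i N_i \cdot \tfrac{1}{N_i\mu}(e^{-\mu\tau_i}-e^{-\mu\tau_{i+1}}) = \tfrac{\lambda}{\mu}\sum_{i=0}^m(e^{-\mu\tau_i} - e^{-\mu\tau_{i+1}})$, which telescopes to $\tfrac{\lambda}{\mu}(e^{-\mu\tau_0} - e^{-\mu\tau_{m+1}}) = \tfrac{\lambda}{\mu}$. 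The first sum for $i\ge 1$ gives $\lambda\sum_{i=1}^m n_i\tfrac{1}{N_{i-1}\mu}e^{-\mu\tau_i}(e^{\mu N_{i-1}c} - 1)$, matching the stated third term, and the $i=0$ term gives $\lambda c n_0$.

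I would present the $\E S$ derivation first, then note $\E W = \E W_1$ by~\eqref{eqn:MeanUtilizationCost}, then do the $\E W_1$ computation. The one bookkeeping subtlety worth stating explicitly is the boundary convention: $t_0=0$, $\tau_0 = 0$, $N_0 = n_0$, $t_{m+1}=\infty$, hence $\tau_{m+1}=\infty$, so that $e^{-\mu\tau_{m+1}}=0$ and the corresponding $(1-e^{-\mu\tau_{m+1}})^k = 1$; this is what produces the isolated $\tfrac{1}{N_m}$ term in $\E S$ and the clean constant $\tfrac{\lambda}{\mu}$ in $\E W_1$. The main obstacle --- though it is more of a careful-index-manipulation task than a genuine difficulty --- is the Abel/telescoping rearrangement of the double sum $\sum_{k}\sum_{i}$ for $\E S$: one must shift the index on the $(1-e^{-\mu\tau_{i+1}})^k$ term from $i$ to $i-1$, handle the two endpoints ($i=0$ contributing nothing since $\tau_0=0$, and $i=m+1$ contributing the $1/N_m$ term since $N_{m+1}$ is not needed), and then recognize $1/N_{i-1} - 1/N_i = n_i/(N_i N_{i-1})$. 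Everything else is direct substitution of the closed forms already proved.
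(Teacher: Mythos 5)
Your proposal is correct and follows essentially the same route as the paper's proof: both obtain $\E W_1$ by specializing Lemma~\ref{lem:MultiTaskIntegralTerms} to $K=1$ in the decomposition of Lemma~\ref{lem:SingleTaskUtilization} and telescoping with $\tau_0=0$, $\tau_{m+1}=\infty$, and both obtain $\E S$ by summing Corollary~\ref{cor:MultiTaskIntegralTerms} over the partition $\{[0,c),\tI_0,\dots,\tI_m\}$ and rearranging the double sum so that $1/N_{i-1}-1/N_i = n_i/(N_iN_{i-1})$ appears. Your explicit Abel-summation bookkeeping is just a spelled-out version of the paper's ``exchanging summations over indices $k$ and $i$'' step, so no substantive difference remains.
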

\begin{proof}
We will first find the mean server utilization cost for single task. 
From~\eqref{eqn:SingleTaskUtilization}, we have 
\eq{
\frac{1}{\lambda}\E W_1 &= n_0\int_{t_0}^{t_0+c}\bar{F}_{S_1}(t)dt + \sum_{i=1}^mn_i\int_{t_i}^{t_i+c}\bar{F}_{S_1}(t)dt \nonumber\\
&+ \sum_{i=0}^mN_i\int_{\tI_i} \bar{F}_{S_1}(t)dt.
}
First, we notice that $\int_{t_0}^{t_0+c}\bar{F}_{S_1}(t) dt= c$ since $t_0 = 0$ and there is initial startup delay of $c$ for all shifted exponential job completion times. 
Taking $K=1$, and substituting equation~\eqref{eqn:MultiTaskStandardIntegralTerm} for integers $0 \le i \le m$ and equation~\eqref{eqn:MultiTaskLeadingIntegralTerm} for integer $1 \le i \le m$, in the above equation, we get 
\eq{
\frac{1}{\lambda}\E W_1 &= n_0c + \frac{1}{\mu}\sum_{i=1}^m\frac{n_ie^{-\mu\tau_{i}}}{N_{i-1}}(e^{\mu N_{i-1}c}-1) \\&+ \frac{1}{\mu}\sum_{i=0}^m(e^{-\mu\tau_i} - e^{-\mu\tau_{i+1}}). 
}
The result for mean server utilization cost follows from the telescopic sum and the fact that $\tau_0 = 0, \tau_{m+1} = \infty$.%, and $\E W = K \E W_1$.

To compute the mean of service completion time $S$, we use its positivity to write $\E S = \int_{\R_+}\bar{F}_S(t)dt$. 
By writing the integral over positive reals, as the sum of integrals over the partition $\{[0, t_0+c),\tI_0, \tI_1, \dots,\tI_m\}$, we get 
\EQ{
\E S = \int_{0}^{t_0+c}\bar{F}_S(t)dt + \sum_{i=0}^m \int_{\tI_i}\bar{F}_S(t)dt. 
}
Substituting the fact that $t_0 = 0, \tau_0 = 0, \tau_{m+1} = \infty$, $\int_{0}^{c}\bar{F}_{S_1}(t) = c$, and equation~\eqref{eqn:MultiTaskStandardIntegralTermSimplified} in the above equation, 
followed by exchanging summations over indices $k$ and $i$, we get the result. 
\end{proof}

As a special case of Theorem \ref{thm:general_forking}, we can obtain the mean service completion time and the mean server utilization cost for a single task, as is given in the following corollary.

\begin{cor}
	For a single task served by $N$ servers with multiple forks, the mean service completion time is 
	\EQN{
		\label{eqn:SingleTaskMeanCompletionTime}
		\E S = c + \frac{1}{N\mu} +\frac{1}{\mu}\sum_{i=1}^m\frac{n_i}{N_iN_{i-1}}(1 - e^{-\mu\tau_i}),
	}
	and the mean server utilization cost for single task is 
	\EQN{
		\label{eqn:SingleTaskMeanUtilizationCost1}
		\E W = \lambda c n_0 + \frac{\lambda}{\mu} + \frac{\lambda}{\mu}\sum_{i = 1}^mn_ie^{-\mu\tau_{i}}\left(\frac{e^{\mu N_{i-1}c} - 1}{N_{i-1}}\right).
	}
	\label{cor:SingleTaskMeans} 
\end{cor}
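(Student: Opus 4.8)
The plan is to obtain both identities purely as the $K=1$ specialization of Theorem~\ref{thm:general_forking}. A single task is formally the case $K=1$ of Problem~\ref{prob:TimingGreaterC}: the makespan $S=\max_{k\in[K]}S_k$ reduces to $S=S_1$, and correspondingly $F_S = F_{S_1}^K$ collapses to $F_S = F_{S_1}$. So the general formulas must degenerate to the single-task quantities, and the only work is bookkeeping.

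First I would treat the completion time. Setting $K=1$ in~\eqref{eqn:MultiTaskMeanCompletionTime} leaves only the $k=1$ term of the outer sum, whose coefficient is $1/k = 1$ and for which $(1-e^{-\mu\tau_i})^k = 1-e^{-\mu\tau_i}$, so
\EQ{
\E S = c + \frac{1}{\mu}\left(\frac{1}{N_m} + \sum_{i=1}^{m}\frac{n_i}{N_iN_{i-1}}(1-e^{-\mu \tau_i})\right).
}
Then I would invoke the constraint $\sum_{j=0}^m n_j = N$ from Problem~\ref{prob:TimingGreaterC}, which gives $N_m = \sum_{\ell=0}^m n_\ell = N$, and substitute to recover~\eqref{eqn:SingleTaskMeanCompletionTime}.

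Next, the utilization cost. Here there is essentially nothing to compute: the expression~\eqref{eqn:SingleTaskMeanUtilizationCost} already established within Theorem~\ref{thm:general_forking} is the per-task mean $\E W_1$, and its derivation there (from Lemma~\ref{lem:SingleTaskUtilization} together with the $K=1$ cases of Lemma~\ref{lem:MultiTaskIntegralTerms}) does not involve $K$ at all. Combining with~\eqref{eqn:MeanUtilizationCost}, namely $\E W = \E W_1$, yields~\eqref{eqn:SingleTaskMeanUtilizationCost1} verbatim.

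I do not expect any real obstacle. The only point one should state explicitly — hardly a difficulty — is that the $K=1$ instance of Problem~\ref{prob:TimingGreaterC} is a legitimate model of ``a single task served by $N$ servers with multiple forks,'' which is immediate from $F_S = F_{S_1}$ and from the fact that $\E W_1$ in Theorem~\ref{thm:general_forking} was obtained by the single-task argument of Section~\ref{subsec:SingleTask}. No further calculation is needed.
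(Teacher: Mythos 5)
Your proposal is correct and matches the paper's own route: the paper presents this corollary precisely as the $K=1$ specialization of Theorem~\ref{thm:general_forking}, where the outer sum in~\eqref{eqn:MultiTaskMeanCompletionTime} collapses to its $k=1$ term and $N_m = N$ by the constraint $\sum_{j=0}^m n_j = N$, while~\eqref{eqn:SingleTaskMeanUtilizationCost} is already the per-task quantity $\E W_1$ and carries over verbatim via $\E W = \E W_1$. No gap.
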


%-----------------Theorem-End-----------------

We show that making the forking instants smaller and increasing number of servers at any forking instant can reduce the service completion time, 
irrespective of the common service time distribution. 

%-----------------Theorem-Start-----------------

\begin{prop}
\label{thm:Monotone}
For $K$ parallel tasks, each forked sequentially on $N$ identical servers with random \emph{i.i.d.} execution times with the common distribution function $F$, 
the following statements are true. 
\begin{enumerate}[(i)]
\item Consider two increasing sequences of forking times $t = (t_0, \dots, t_m)$ and $t' = (t'_0, \dots, t'_m)$ each with identical sequence of forked replicas such that $t'_i  \ge t_i$ at each stage $0 \le i \le m$. 
Then $\E S^{(t)} \le \E S^{(t')}$. 
\item Consider sequences of forked replicas $n = (n_0, \dots, n_m)$ and $n' = (n'_0, \dots, n'_m)$ with identical sequence of forking instants $t = (t_0, \dots, t_m)$ such that $n'_j \le n_j$ for stages $0 \le j \le m$. 
Then $\E S^{(n)} \le \E S^{(n')}$. 
\end{enumerate}
\end{prop}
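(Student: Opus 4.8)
Since the statement is for an arbitrary service distribution $F$, the closed forms of Theorem~\ref{thm:general_forking} are not available, and I would instead run a sample-path coupling. Recall from the discussion preceding Lemma~\ref{lem:SingleTaskServiceDist} that a single task's completion time admits the representation
\[
S_1 = \min_{\ell \in [m]_0}\ \min_{p \in [n_\ell]}\ (t_\ell + T_{\ell,p}),
\]
where $T_{\ell,p}$ is the service duration of the $p$th server of batch $\ell$ and the $(T_{\ell,p})$ are i.i.d.\ with distribution $F$; moreover $S = \max_{k\in[K]} S_1^{(k)}$ with $S_1^{(k)}$ i.i.d.\ copies of $S_1$. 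Because $\min$ and $\max$ are coordinatewise nondecreasing, the plan is to build the two compared systems on one probability space, establish an almost-sure inequality between the two $S_1$'s, push it through the $\max$ over the $K$ tasks, and take expectations (the inequality of means then holds in $[0,\infty]$, so heavy-tailed $F$ with possibly infinite mean causes no trouble).

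\textbf{Part (i).} Realize both systems with the \emph{same} array $(T_{\ell,p})$. Since $t'_\ell \ge t_\ell$ for every $\ell$ and the batch sizes agree, each term satisfies $t'_\ell + T_{\ell,p} \ge t_\ell + T_{\ell,p}$, so minimizing over the identical index set gives $S_1^{(t')} \ge S_1^{(t)}$ everywhere on the sample space. Applying the same coupling independently to each of the $K$ tasks, $S^{(t')} = \max_k S_1^{(t'),(k)} \ge \max_k S_1^{(t),(k)} = S^{(t)}$ a.s., and taking expectations yields $\E S^{(t)} \le \E S^{(t')}$.

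\textbf{Part (ii).} Now the forking times are common and $n'_j \le n_j$ for all $j$. Couple the systems so that, for each batch $j$, the $n'_j$ servers of the $n'$-system are identified with the first $n'_j$ servers of batch $j$ in the $n$-system; the $n$-system simply owns the extra servers indexed $p \in \set{n'_j+1,\dots,n_j}$, run with fresh independent durations. Then the index set over which $S_1^{(n')}$ is a minimum is a sub-collection of the one for $S_1^{(n)}$, hence $S_1^{(n)} \le S_1^{(n')}$ a.s.; propagating through the $\max$ over tasks and taking expectations gives $\E S^{(n)} \le \E S^{(n')}$.

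\textbf{Alternative route and the point needing care.} One can instead argue at the distribution level: by~\eqref{eqn:ServiceDist}, for each $t$ one has $\bar F_{S_1}(t) = \prod_{\ell:\, t_\ell \le t} \bar F(t-t_\ell)^{n_\ell}$, a product of factors in $[0,1]$; since $\bar F$ is nonincreasing, enlarging a $t_\ell$ either increases a factor (smaller argument) or drops a $\le 1$ factor entirely, and since $r\mapsto r^{n}$ is nonincreasing on $[0,1]$, shrinking an $n_\ell$ increases a factor — so in both regimes $\bar F_{S_1}(t)$ never decreases. As $F_S = F_{S_1}^K$ and $x\mapsto 1-(1-x)^K$ is nondecreasing on $[0,1]$, this ordering transfers to $\bar F_S(\cdot)$, and integrating via $\E S = \int_{\R_+}\bar F_S(t)\,dt$ gives the claim. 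I do not expect a genuine obstacle in either approach; the only place a reader should slow down is bookkeeping with the set of already-started servers (equivalently, handling $t$ near a $t_\ell$, a null set for the integral) in the distributional version, and, in the coupling for (ii), noting that the two configurations need not use the same total number of servers — which is exactly why one embeds the smaller one as a sub-collection. Note also that neither argument invokes the separation hypothesis $t_{i+1}-t_i\ge c$, so the proposition holds for all increasing positive forking sequences and every distribution $F$.
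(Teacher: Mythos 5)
Your proposal is correct, and your primary argument takes a genuinely different route from the paper's. The paper (Appendix~\ref{apdx:proof:stoc}) works entirely at the level of distribution functions: it notes that $\bar{F}_S = 1-(1-\bar{F}_{S_1})^K$ is monotone in $\bar{F}_{S_1}$, and then shows pointwise monotonicity of $\bar{F}_{S_1}(u) = \prod_{\ell}\bar{F}(u-t_\ell)^{n_\ell}$ under each hypothesis, with a little bookkeeping to track that the active stage index for the delayed sequence $t'$ satisfies $j(u)\le i(u)$; this is exactly your ``alternative route,'' so that part matches the paper essentially verbatim. Your main argument instead couples the two systems on a common probability space via the representation $S_1=\min_{\ell}\min_{p}(t_\ell+T_{\ell,p})$ and derives an almost-sure ordering of the completion times, which then passes through the $\max$ over the $K$ tasks and the expectation. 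The coupling buys a strictly stronger conclusion (pathwise, hence stochastic, dominance of $S$ itself, not just of its mean), avoids the interval/stage-index bookkeeping entirely, and in part~(ii) makes transparent why unequal total server counts are harmless (the smaller system is embedded as a sub-collection of minimands); the paper's distributional argument is marginally more direct for the specific quantity $\E S=\int_{\R_+}\bar{F}_S(u)\,du$ and reuses the product formula~\eqref{eqn:ServiceDist} already derived in Section~\ref{subsec:SingleTask}. Your observations that the inequality of means holds in $[0,\infty]$ and that the separation hypothesis $t_{i+1}-t_i\ge c$ is not needed are both accurate and consistent with the paper's statement.
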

\begin{proof}
The detailed proof is given in Appendix \ref{apdx:proof:stoc}, which uses stochastic dominance.
\end{proof} 

%-----------------Theorem-End-----------------

Second condition in the above theorem 
%conditions are 
is very strict in that for a fixed forking time sequence $t$, 
the two forked replica sequence is such that the number of forked replicas at each forking time are always larger for one sequence.  
We would like the theorem to hold for the following weaker condition: 
%theorem to be true for weaker conditions, where 
%either for a fixed server sequence $n$ with the two forking time sequences $t \le t'$ are ordered point-wise, 
%or 
for a fixed forking time sequence $t$ and the two server sequences $n, n'$ such that the cumulative number of server sequences $N \le N'$ are point-wise ordered. 
%then the stochastic dominance of service completion completion times should hold. 
Notice that, in this case we would have to use specific properties of the service-time distribution at each server, 
and it links the forking instant sequence and the server sequence. 
In the following result, we will show that the result could be refined for the shifted exponential distribution. 
%Even though we present the result for the change of the servers in the following theorem, similar statement could be  made for the change in the forking times. 

%-----------------Theorem-Start-----------------

\begin{thm}
\label{thm:cor_par}
Let there be $K$ parallel tasks, 
each forked sequentially on $N$ identical servers with random \emph{i.i.d.} execution times 
with the common distribution function $F$ being the shifted exponential as defined in~\eqref{eqn:ShiftedExp}. 
%\begin{enumerate}[(i)]
%\item Consider two increasing sequences of forking times $t = (t_0, \dots, t_m)$ 
%and $t' = (t'_0, \dots, t'_m)$ each with identical sequence of forked replicas such that at each stage 
%$i \in [m]_0$, 
%\EQ{
%t_i \le t'_i, \text{ and }\sum_{\ell=0}^i n_\ell t_\ell \le \sum_{\ell=0}^i n_\ell t'_\ell.
%} 
%Then $\E S^{(t)} \le \E S^{(t')}$. 
%\item 
Consider sequences of forked replicas $n = (n_0, \dots, n_m)$ and 
$n' = (n'_0, \dots, n'_m)$ with identical sequence of forking instants 
$t = (t_0, \dots, t_m)$ such that for each stage $0 \le i \le m$, 
\EQ{
\sum_{j=0}^in'_j \le \sum_{j=0}^in_j, \text{ and } \sum_{j=0}^in'_jt_j \ge \sum_{j=0}^in_jt_j. 
}  
Then $\E S^{(n)} \le \E S^{(n')}$. 
%\end{enumerate}
\end{thm}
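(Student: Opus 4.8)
The plan is to reduce the statement to a \emph{pointwise} stochastic ordering of the single-task completion times and then integrate, exactly as in the proof of Proposition~\ref{thm:Monotone}; the difference is that here the shifted-exponential form of $\bar{F}_{S_1}$ (Lemma~\ref{lem:SingleTaskServiceDist}) must be used to get the pointwise inequality out of the cumulative hypotheses. Concretely, since $F_S = F_{S_1}^K$ and $x \mapsto 1 - (1-x)^K$ is nondecreasing on $[0,1]$, from~\eqref{eqn:MeanCompletionTime} it is enough to prove that $\bar{F}_{S_1}^{(n)}(t) \le \bar{F}_{S_1}^{(n')}(t)$ for every $t \ge 0$; monotonicity of the integrand then yields $\E S^{(n)} \le \E S^{(n')}$ upon integrating.

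So the work is to establish that pointwise inequality. I would fix $t$, dispose of $t < c$ (both sides equal $1$), and for $t \ge c$ take the unique $i \in [m]_0$ with $t \in \tI_i$ --- the same index for both profiles, since the forking instants $t$ are common. By Lemma~\ref{lem:SingleTaskServiceDist} the exponent is $-\tfrac1\mu\log\bar{F}_{S_1}^{(n)}(t) = N_i(t - t_i - c) + \tau_i$, and substituting $\tau_i = t_iN_i - A_i$ with $A_i \triangleq \sum_{j=0}^i n_jt_j$ this becomes the affine function $N_i(t-c) - A_i$ of $t$. The required inequality is therefore equivalent to
\EQ{
(N_i - N'_i)(t-c) + (A'_i - A_i) \ge 0 ,
}
where $A'_i$ is the corresponding quantity for $n'$. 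Here the second summand is $\ge 0$ by the hypothesis $\sum_{j\le i}n'_jt_j \ge \sum_{j\le i}n_jt_j$, while $N_i - N'_i = \sum_{j\le i}(n_j-n'_j) \ge 0$ by the first hypothesis and $t-c \ge t_i \ge 0$ since $t \in \tI_i$; so both summands are nonnegative and we are done.

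The only step with any content is the middle one --- observing that for the shifted exponential $-\log\bar{F}_{S_1}(t)$ is \emph{affine} in $t$ on each slab $\tI_i$, with slope $\mu N_i$ and intercept $-\mu A_i$, so that the two cumulative hypotheses enter cleanly, one controlling the slope and the other the intercept. I do not expect a genuine obstacle beyond this; the case $t<c$ and the inequality $t-c\ge t_i$ for $t\in\tI_i$ are the only fiddly points. Two remarks. The first hypothesis alone already gives $\tau_i^{(n)} = \sum_{j<i}N_j(t_{j+1}-t_j) \ge \sum_{j<i}N'_j(t_{j+1}-t_j) = \tau_i^{(n')}$ (forking times increasing), hence $N_i(t-t_i-c)+\tau_i^{(n)} \ge N'_i(t-t_i-c)+\tau_i^{(n')}$, so the conclusion in fact follows from the cumulative-replica hypothesis by itself; the intercept hypothesis is used above only to phrase the comparison in a single symmetric line. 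And I would avoid the alternative route of expanding~\eqref{eqn:MultiTaskMeanCompletionTime} for the two profiles and comparing termwise, since both the weights $\tfrac{1}{N_{i-1}}-\tfrac{1}{N_i}$ and the exponents $\tau_i$ vary between the profiles without being separately ordered, which makes an Abel-summation argument awkward.
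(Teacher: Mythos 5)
Your proof is correct and is essentially the paper's own argument: reduce to the pointwise ordering of $\bar{F}_{S_1}$ via monotonicity of $x\mapsto 1-(1-x)^K$, then compare the exponent $\mu N_i(t-t_i-c)+\mu\tau_i$ on each $\tI_i$ using $N_i\ge N'_i$ and $\tau_i\ge\tau'_i$; your rewriting of the exponent as $\mu(N_i(t-c)-A_i)$ is just the substitution $\tau_i=t_iN_i-A_i$. Your closing remark that the telescoping identity $\tau_i=\sum_{j<i}N_j(t_{j+1}-t_j)$ makes the second hypothesis redundant is a valid observation beyond what the paper states, but it does not change the proof.
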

\begin{proof}
%We will show this by stochastic dominance. 
%We know that $\E S = \int_{u \in \R_+}\bar{F}_S(u)du$, and we will show the monotonicity on the complementary distribution function of service times. Since $\bar{F}_S = 1- (1-\bar{F}_{S_1})^K$ is a monotonically increasing function of $\bar{F}_{S_1}$, 
Following the arguments in Theorem~\ref{thm:Monotone}, it suffices to show the monotonicity of the complementary distribution function of service times for single task. 
%From the hypothesis $\sum_{\ell=0}^i n_\ell t'_\ell  \ge \sum_{\ell=0}^i n_\ell t_\ell $ for each forking stage $i$, it follows that
%For any time $u \in \tI_i = [t_i+c, t_{i+1}+c) = c + I_i$,   	
%\begin{eqnarray}
%\bar{F}_{S_1^{(t)}}(u) &=& e^{-\mu(u-c) \sum_{\ell=0}^i n_\ell + \mu\sum_{\ell = 0}^i n_\ell t_\ell}\nonumber\\
%&\le&	e^{-\mu(u-c) \sum_{\ell=0}^i n_\ell + \mu\sum_{\ell = 0}^i n_\ell t'_\ell}\nonumber\\
%&=&  \bar{F}_{S_1^{(t')}}(u).
%\end{eqnarray}
%This proves the first part of the result. 
%We now proceed to proving the second part of the result. 
It follows from the theorem hypothesis 
%$\sum_{j=0}^in'_j \le \sum_{j=0}^in_j$ and $\sum_{j=0}^in'_jt_j \ge \sum_{j=0}^in_jt_j$, 
that $N_i = \sum_{\ell=0}^in_\ell \ge \sum_{\ell=0}^in'_\ell = N'_i$ and $\tau'_i = \sum_{\ell=0}^in'_\ell(t_i-t_\ell) \le \sum_{\ell=0}^in_\ell(t_i-t_\ell) = \tau_i$ for all stages $i \in [m]_0$.  
Therefore, for any time $u \in \tI_i $, 	
\eq{
\bar{F}_{S_1^{(n)}}(u) &= e^{-\mu N_i(u-t_i-c)  - \mu\tau_i}\\
&\le e^{-\mu N'_i (u-t_i -c) - \mu\tau'_i} =  \bar{F}_{S_1^{(n')}}(u).
}
Hence, the result follows. 
\end{proof}

%-----------------Theorem-End-----------------

%-----------------Remark-Start-----------------
\begin{rem}
For single-fork case starting with forking points $0 = t_0 < t_1$, 
the condition $\sum_{j=0}^in'_jt_j \ge \sum_{j=0}^in_jt_j$ in Theorem \ref{thm:cor_par} 
reduces to $n'_1 \ge n_1$. 
Hence, if both the systems have identical number of servers, 
i.e. $n_0+n_1=n'_0+n'_1$, 
then $n_0'\le n_0$, and both the theorem conditions hold.  
\end{rem}
%-----------------Remark-End-----------------

%We show that this is true for shifted exponential service and when $t_i \ge t_{i-1}+c$ at each forking point.  
%\begin{lem}
%Consider $K$ parallel tasks, each forked sequentially on $N$ identical servers with random \emph{i.i.d.} execution times with the common shifted exponential distribution function $F$ defined in~\eqref{eqn:ShiftedExp}. 
%For a fixed forking time sequence $t$ such that $t_i - t_{i-1} \ge c$ at each forking point $i \in [m]$, and the two server sequences $n, n'$ such that the cumulative number of server sequences $N \ge N'$ are point-wise ordered, we have $\E S^{(n)} \le \E S^{(n')}$.
%\end{lem}
%\begin{proof}
%It suffices to show that $\bar{F}_{S_1^{(n)}} \le \bar{F}_{S_1^{(n')}}$. 
%We can write the complementary distribution function $\bar{F}_{S_1}$ for single task with server sequence $n$ and forking instant sequence $t$, at time $u \in \tI_i = c + I_i$ as 
%\eq{
%\bar{F}_{S_1}(u) &= \prod_{\ell=0}^i\bar{F}(u-t_\ell)^{n_\ell} = \prod_{\ell=0}^ie^{-\mu n_\ell(u-t_\ell-c)}\\
%&= e^{-\mu N_i(u-t_i-c)}e^{-\mu \sum_{\ell=0}^i n_\ell(t_i- t_\ell)}
%}
%\end{proof}

\if 0
The next result provides the  partial derivative of server utilization $\tau_i$ until time $t_i+c$ with respect to the following free variables: the number of servers $n_k$ switched on at $k$th forking instant $t_k$, 
and the inter-forking intervals $s_{i+1} = t_{i+1}-t_i$.

%-----------------Lemma-Start-----------------

\begin{lem} 
\label{lem:Partial}
The partial derivatives of server utilization $\tau_i$ until time $t_i+c$ with respect to the number of servers $n_k$ switched on at instant $t_k$ and the inter-forking intervals $s_{i+1} = t_{i+1}-t_i$ are given by 
\meqn{2}{
\label{eqn:PartialDerivativeTau}
&\pd{\tau_i}{n_k} = (t_i-t_k)1_{\{k \le i-1\}},&&\pd{\tau_i}{s_k} = N_{k-1}1_{\{1 \le k \le i\}}.
}
The partial derivative of the number of on servers until time $t_i+c$ with respect to $n_k$ is 
\EQN{
\label{eqn:PartialDerivativeNumServer}
\pd{N_i}{n_k} = 1_{\{k \le i\}}. 
}
\end{lem}
\begin{proof}
We can write the server utilization $\tau_i$ in terms of the number of servers $n_k$, 
the increments $s_{j+1} = t_{j+1}-t_j$, and the aggregate number of servers $N_i = \sum_{j=0}^{i}n_k$, 
to get 
\EQ{
\tau_i = \sum_{k=0}^{i-1}n_k(t_i-t_k) =  \sum_{k=0}^{i-1}n_k\sum_{j=k}^{i-1}(t_{j+1}-t_j)= \sum_{j=0}^{i-1}N_js_{j+1}.
}
\end{proof}
\fi
%-----------------Lemma-End-----------------

%-----------------Single Forking Parallel Tasks-Start-----------------

\subsection{Single Forking Parallel Tasks} \label{sec:singlefork}
We consider the single forking case for $K$ parallel tasks when $m=1$ and $t_1 > 0$. 
Formally, we define the problem below. 

%-----------------Problem-Start-----------------
\begin{prob} 
\label{prob:SingleFork} 
Consider $K$ parallel tasks, where each single task is being served by $N$ servers starting in two batches of $(n_0, n_1)$, sequentially at times $(0, t_1)$ such that the total number of servers is $N = n_0 + n_1$. 
When the job completion time for each server has an \emph{i.i.d.} shifted exponential distribution as defined in~\eqref{eqn:ShiftedExp}, 
find the mean of the service completion time to finish all $K$ parallel tasks and the mean of the server utilization cost. 
\end{prob}
%-----------------Problem-End-----------------

Since $n_1 = N-n_0$, we have only two variables $n_0$ and $t_1$ in this case. 
Further, we have $t_2 = \infty$ and we can write $\tau = n_0t_1$. 
For the ease of further analysis, we would define following normalized constants.  
We define the amount of work done by all servers $N_1 = N$ in parallel each having independent random execution time distributed exponentially with rate $\mu$ in the shift-interval $c$ as $\alpha \triangleq c\mu N$. 
We denote the normalized forking time by $u \triangleq t_1/c$ and 
the initial fraction of servers by $x \triangleq n_0/N$.

%-----------------Theorem-Start-----------------

\begin{thm}
\label{thm:SingleFork}
For the Problem~\ref{prob:SingleFork}, the scaled mean service completion time is 
\EQN{
\label{eqn:MultiTaskMeanCompletionTimeSingleFork}
\frac{1}{c}\E S = 1 + \frac{1}{\alpha}\sum_{k=1}^K\frac{1}{k}\left(1 + \frac{1-x}{x}(1-e^{-\alpha xu})^k\right),
}
and the scaled and shifted mean server utilization cost $\frac{\mu}{\lambda}\E W_1 -(1+\alpha)$ for single task equals
\EQN{
\label{eqn:SingleTaskMeanUtilizationCostSingleFork}
\begin{cases}
\alpha(1- x)\left(\frac{e^{-\alpha x (u-1)} - e^{-\alpha xu})}{ \alpha x}-1\right), & u \ge 1,\\
\alpha (1-x)\left(\frac{(1-e^{-\alpha x u})}{\alpha x}  -u\right), & u \le 1.
\end{cases}
}
\end{thm}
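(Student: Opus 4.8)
The plan is to specialise the single-task identities of Lemma~\ref{lem:SingleTaskServiceDist} and Lemma~\ref{lem:SingleTaskUtilization} to $m=1$ and then pass to the normalised variables $\alpha=c\mu N$, $u=t_1/c$, $x=n_0/N$. With $m=1$, $t_0=0$, $t_2=\infty$ the relevant aggregates collapse to $N_0=n_0$, $N_1=N$, $\tau_0=0$, $\tau_1=n_0t_1$, so that $\mu\tau_1=\alpha xu$, $\mu n_0c=\alpha x$, $n_1/n_0=(1-x)/x$, and $1/(c\mu N)=1/\alpha$; these substitutions are exactly what turn the raw expressions into~\eqref{eqn:MultiTaskMeanCompletionTimeSingleFork} and~\eqref{eqn:SingleTaskMeanUtilizationCostSingleFork}.

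For $\E S$ I would first observe that the completion-time formula~\eqref{eqn:MultiTaskMeanCompletionTime} of Theorem~\ref{thm:general_forking} is already valid for any strictly increasing forking schedule and does not require the separation $t_{i+1}-t_i\ge c$: that hypothesis enters only through the leading integrals $\int_{t_i}^{t_i+c}\bar F_S$ needed for the cost, whereas~\eqref{eqn:MultiTaskMeanCompletionTime} is assembled from $\int_0^{c}\bar F_S=c$ (which holds since $\bar F_{S_1}\equiv1$ on $[0,c)$ for any schedule with $t_0=0$) together with the $\tI_i$-integrals of Corollary~\ref{cor:MultiTaskIntegralTerms}, whose derivation uses only the form of $\bar F_{S_1}$ on $\tI_i$ from Lemma~\ref{lem:SingleTaskServiceDist}. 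Setting $m=1$ in~\eqref{eqn:MultiTaskMeanCompletionTime} leaves only the $i=1$ term of the inner sum, namely $\tfrac{n_1}{N_1N_0}(1-e^{-\mu\tau_1})^k$; dividing by $c$ and inserting the normalised constants produces~\eqref{eqn:MultiTaskMeanCompletionTimeSingleFork}.

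For the cost I would start from~\eqref{eqn:SingleTaskUtilization} with $m=1$, namely
\EQ{\tfrac1\lambda\E W_1 = n_0\!\int_0^{c}\!\bar F_{S_1} + n_1\!\int_{t_1}^{t_1+c}\!\bar F_{S_1} + n_0\!\int_{\tI_0}\!\bar F_{S_1} + N\!\int_{\tI_1}\!\bar F_{S_1},}
and evaluate each integral using the explicit $\bar F_{S_1}$ of Lemma~\ref{lem:SingleTaskServiceDist}. The first integral is $c$; on $\tI_0=[c,t_1+c)$ one has $\bar F_{S_1}(t)=e^{-\mu n_0(t-c)}$, so $\int_{\tI_0}\bar F_{S_1}=\tfrac1{\mu n_0}(1-e^{-\mu n_0t_1})$; on $\tI_1=[t_1+c,\infty)$ one has $\bar F_{S_1}(t)=e^{-\mu N(t-t_1-c)-\mu n_0t_1}$, so $\int_{\tI_1}\bar F_{S_1}=\tfrac1{\mu N}e^{-\mu n_0t_1}$, whence the last two terms combine to $1/\mu$. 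The remaining integral $\int_{t_1}^{t_1+c}\bar F_{S_1}$ is the only place the two regimes appear: the $n_1$ servers forked at $t_1$ stay inside their start-up window $c$ throughout $[t_1,t_1+c)$ and contribute the factor $1$, so the integrand is governed by the first batch alone. If $u\ge1$ (i.e. $t_1\ge c$) the whole interval lies past the shift, $\bar F_{S_1}(t)=e^{-\mu n_0(t-c)}$, and the integral equals $\tfrac1{\mu n_0}(e^{-\mu n_0(t_1-c)}-e^{-\mu n_0t_1})$; if $u\le1$ one splits at $t=c$, with $\bar F_{S_1}\equiv1$ on $[t_1,c)$ and $\bar F_{S_1}(t)=e^{-\mu n_0(t-c)}$ on $[c,t_1+c)$, giving $(c-t_1)+\tfrac1{\mu n_0}(1-e^{-\mu n_0t_1})$. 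Collecting yields $\tfrac1\lambda\E W_1=n_0c+\tfrac1\mu+n_1\int_{t_1}^{t_1+c}\bar F_{S_1}$; multiplying by $\mu/\lambda$, subtracting $1+\alpha$, substituting $\mu n_0c=\alpha x$, $\mu n_1(c-t_1)=\alpha(1-x)(1-u)$, $n_1/n_0=(1-x)/x$, $\mu n_0t_1=\alpha xu$, and factoring out $\alpha(1-x)$, gives the two branches of~\eqref{eqn:SingleTaskMeanUtilizationCostSingleFork}.

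The only genuinely delicate step is the regime $u\le1$, which lies outside Theorem~\ref{thm:general_forking}: there one cannot recycle the closed form~\eqref{eqn:MultiTaskLeadingIntegralTerm} for the leading integral and must instead recognise that $[t_1,t_1+c)$ straddles the shift boundary $t=c$ while the newly forked batch is still in its constant start-up phase, so $\bar F_{S_1}$ is piecewise on that interval and has to be integrated on the two pieces separately. Everything else is routine bookkeeping together with the telescoping that collapses the $\tI_0$ and $\tI_1$ contributions to $1/\mu$.
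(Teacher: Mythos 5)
Your proposal is correct and follows essentially the same route as the paper: specialize the general $m$-fork formulas to $m=1$, and for the cost handle the $t_1<c$ regime by splitting $\int_{t_1}^{t_1+c}\bar F_{S_1}$ at the shift boundary $t=c$, where $\bar F_{S_1}\equiv 1$ on $[t_1,c)$ and equals $e^{-\mu n_0(t-c)}$ on $[c,t_1+c)$. The only addition beyond the paper's proof is your explicit remark that \eqref{eqn:MultiTaskMeanCompletionTime} remains valid without the separation hypothesis, which the paper uses implicitly when it quotes that formula for $t_1<c$.
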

\begin{proof}
The result for the mean service completion time $\E{S}$ can be obtained by substituting $m=1$ in the equation \eqref{eqn:MultiTaskMeanCompletionTime}. 
To compute the mean server utilization cost for $t_1 \ge c$,  we substitute $m=1$ in the equation \eqref{eqn:SingleTaskMeanUtilizationCost}.  
For $t_1 < c$, we need to evaluate the integral $\int_{t_1}^{t_1+c}\bar{F}^k_{S_1}(t)dt$. 
In this case, we have 
\EQ{
\int_{t_1}^{t_1+c}\bar{F}_{S_1}(t)dt = \int_{t_1}^{c}\bar{F}_{S_1}(t)dt + \int_{c}^{t_1+c}\bar{F}_{S_1}(t)dt.
}
Since $\bar{F}_{S_1}(t) = 1$ for $t \le c$ due to initial startup delay $c$, 
and there are $n_0$ parallel independent servers working at the exponential rate $\mu$ in the interval $[t_1, t_1+c)$, 
we have 
\EQ{
\int_{t_1}^{t_1+c}\bar{F}_{S_1}(t)dt = c - t_1 + \frac{1}{n_0\mu}(1 - e^{-\mu n_0 t_1}). 
}
The result follows from aggregating both the cases. 
\end{proof}

%-----------------Theorem-End-----------------

%-----------------Single Forking Parallel Tasks-End-----------------

%-----------------Analysis-End-----------------

%-----------------Optimal Single Forking-Start-----------------

\section{Optimal Single Forking}
\label{sec:OptSingleFork} 
We have the expression for mean of service completion and server utilization for single forking case in Theorem~\ref{thm:SingleFork}. 
We study the impact of forking time and initial number of servers on these two performance metrics. 
%For the ease of analysis, we define constants $\alpha = \mu c N$, 
%the initial fraction of total servers $x = n_0/N$, 
%and the normalized forking time $t = t_1/c$.

%-----------------Proposition-Start-----------------

\begin{prop} 
\label{prop:PartialDer}
Consider the single forking for $K$ parallel tasks, 
each forked sequentially over $N$ parallel servers, 
each forked task having \emph{i.i.d.} random service times with the common shifted exponential  distribution with shift $c$ and rate $\mu$.  

The partial derivative of the mean service completion time with respect to normalized forking time $u$ is
\EQ{
\pd{\E S}{u} 
= c(1-x)(1 - (1 - e^{-\alpha x u})^K).
}

The partial derivative of the mean service completion time with respect to the initial fraction of servers $x$ is  
\EQ{
\pd{\E S}{x} 
%= c\sum_{k=1}^K\binom{K}{k}\frac{(-1)^k}{kx^2\alpha}(1 - e^{-k\alpha xu}) 
= -\frac{c}{\alpha x^2}\sum_{k=1}^K\frac{1}{k}(1-e^{-\alpha xu})^k+ \frac{u}{x}\pd{\E S}{u}. 
}

The partial derivative of the mean server utilization cost with respect to the normalized forking time $u$ is 
\EQ{
\pd{\E W_1}{u} = \begin{cases}
-\frac{\lambda}{\mu} \alpha (1-x)e^{-\alpha xu}(e^{\alpha x} - 1), & u \ge 1,\\
-\frac{\lambda}{\mu} \alpha (1-x)(1- e^{-\alpha xu}), & u \le 1.
\end{cases} 
}

The scaled partial derivative $\frac{\mu}{\alpha\lambda}\pd{\E W_1}{x}$ of the mean server utilization cost with respect to the initial fraction of servers $x$ equals 
\EQ{
 \begin{cases}
1 - e^{-\alpha x u}\left[(\frac{1}{x}-1)((u-1)e^{\alpha  x}-u) +\frac{(e^{\alpha  x}-1)}{\alpha x^2}\right], & u \ge 1\\
u  + (\frac{1}{x}-1)ue^{-\alpha  x u}-\frac{1}{\alpha x^2}(1-e^{-\alpha xu}),& u \le 1.
\end{cases}
}
% for $t_1 > c$ equals
%\EQ{
%%\pd{\E W_1}{n_0} &= 
%c\lambda - \frac{\lambda}{n_0}\left( \left(\frac{N}{n_0\mu} + n_2 t_1\right)(1 - e^{-\mu n_0 c})- c n_2 \right)e^{-\mu n_0 (t_1-c)}.
%}
%The partial derivative of mean server utilization cost with respect to initial number of servers $n_0$ for $t_1 < c$
%\EQ{
%\pd{\E W_1}{n_0} = \lambda t_1 - \frac{\lambda }{n_0\mu}\Big(\frac{N}{n_0}(1-e^{-\mu n_0 t_1}) - {\mu t_1}{n_2} e^{-\mu n_0 t_1}\Big).
%}
\end{prop}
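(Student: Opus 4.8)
The plan is to read off all four identities by directly differentiating the closed-form expressions for $\E S$ and $\E W_1$ supplied by Theorem~\ref{thm:SingleFork}. Recall from~\eqref{eqn:MultiTaskMeanCompletionTimeSingleFork} that $\E S = c + \frac{c}{\alpha}\sum_{k=1}^K\frac1k\bigl(1+\frac{1-x}{x}(1-e^{-\alpha xu})^k\bigr)$, and from~\eqref{eqn:SingleTaskMeanUtilizationCostSingleFork} that $\frac{\mu}{\lambda}\E W_1 = (1+\alpha)$ plus a term equal to $\alpha(1-x)\bigl(\frac{e^{-\alpha x(u-1)}-e^{-\alpha xu}}{\alpha x}-1\bigr)$ on $u\ge1$ and to $\alpha(1-x)\bigl(\frac{1-e^{-\alpha xu}}{\alpha x}-u\bigr)$ on $u\le1$. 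Each summand is smooth in $(x,u)$ on $x>0$, and the utilization expression is handled separately on each of its two branches, so differentiation under the finite sum is immediate; the only non-elementary ingredient is the finite geometric sum $\sum_{k=1}^K y^{k-1}=\frac{1-y^K}{1-y}$.

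For $\pd{\E S}{u}$ I would differentiate term by term, using $\pd{}{u}(1-e^{-\alpha xu})^k = k\alpha x\,e^{-\alpha xu}(1-e^{-\alpha xu})^{k-1}$ so that the factor $k$ cancels the $1/k$; pulling out $c(1-x)e^{-\alpha xu}$ leaves $\sum_{k=1}^K(1-e^{-\alpha xu})^{k-1}$, and the geometric-sum identity with $y=1-e^{-\alpha xu}$ (so $1-y=e^{-\alpha xu}$) absorbs the leftover $e^{-\alpha xu}$ and gives $c(1-x)\bigl(1-(1-e^{-\alpha xu})^K\bigr)$. For $\pd{\E S}{x}$ the product rule splits the derivative in two: differentiating $\frac{1-x}{x}=\frac1x-1$ contributes $-\frac1{x^2}$, hence the first claimed term $-\frac{c}{\alpha x^2}\sum_k\frac1k(1-e^{-\alpha xu})^k$; differentiating $(1-e^{-\alpha xu})^k$ in $x$ repeats the $u$-computation verbatim but with the chain-rule factor $\alpha u$ in place of $\alpha x$, so the same geometric collapse delivers exactly $\frac{u}{x}$ times $\pd{\E S}{u}$.

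For $\pd{\E W_1}{u}$ I would differentiate each branch: on $u\le1$ this is one line, and on $u\ge1$ I would write $e^{-\alpha x(u-1)}=e^{\alpha x}e^{-\alpha xu}$ so the derivative becomes $\alpha(1-x)\bigl(e^{-\alpha xu}-e^{-\alpha x(u-1)}\bigr)=-\alpha(1-x)e^{-\alpha xu}(e^{\alpha x}-1)$, matching the claim up to the factor $\lambda/\mu$; one can note in passing that the two branches agree at $u=1$. The computation of $\pd{\E W_1}{x}$ is the only laborious step and the main obstacle: on each branch the utilization term has the form $\alpha(1-x)g(x)$, so the product rule gives $\alpha\bigl((1-x)g'(x)-g(x)\bigr)$, and after computing $g'$ by the quotient rule one must recombine the two $\frac{e^{-\alpha xu}(e^{\alpha x}-1)}{\alpha x^2}$-type contributions — the one carrying the factor $(1-x)$ coming from $(1-x)g'$, the other coming from $-g$ — at which point the identity $(1-x)+x=1$ collapses everything to the stated compact form (rewriting $\frac{1-x}{x}=\frac1x-1$ along the way, and using the $u\le1$ form of $g$ in the second case). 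Everything apart from this bookkeeping is routine differentiation.
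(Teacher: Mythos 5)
Your proposal is correct and is exactly the paper's approach: the paper's proof consists of the single sentence that the results follow by taking partial derivatives of the expressions in Theorem~\ref{thm:SingleFork}, and your write-up simply fills in that routine differentiation (the geometric-sum collapse for $\pd{\E S}{u}$, the product rule split for $\pd{\E S}{x}$, and the $(1-x)+x=1$ recombination for $\pd{\E W_1}{x}$), all of which checks out.
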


\begin{proof}
%From the expression for the means of service completion times and server utilization cost for single forking in Theorem~\ref{thm:SingleFork}, 
%we can write the scaled and shifted mean server utilization for single task in terms of  $\alpha, x, t$ as 
%\begin{eqnarray}
%&& \frac{\mu}{\lambda}\E W_1 -(1+\alpha)=\nonumber\\
%&& \begin{cases}
%\alpha(1-x)\left(\frac{(e^{-\alpha  x(t-1)}-e^{-\alpha x t})}{\alpha x}-1\right), & t \ge 1\\
%\alpha(1-x)\left( \frac{(1-e^{-\alpha  x t}) }{\alpha x}-t\right),& t \le 1.
%\end{cases}
%\end{eqnarray}
%We can also re-write the mean service completion time in terms of $\alpha,x,t$ as 
%\EQ{
%\E S = c\left(1 - \frac{1}{\alpha x}\sum_{k=1}^K\binom{K}{k}\frac{(-1)^k}{k}\left(1 - (1-x)e^{-k\alpha x t}\right)\right).
%}
Results follow by taking partial derivatives of mean server utilization and mean service completion task with respect to normalized forking time $u$ and initial fraction of servers $x$. 
%and application of Lemma~\ref{lem:Partial}. 
\end{proof}
%We can write the first derivative of mean server utilization for single task, 
%with respect to fraction of initial servers as 
%\EQ{
%\frac{\mu}{\alpha\lambda}\pd{\E W_1}{x} = \begin{cases}
%1 - (\frac{1}{x}-1)\left[(t-1)e^{-\alpha  x(t-1)}-te^{-\alpha x t})\right]-\frac{1}{\alpha x^2}\left(e^{-\alpha  x(t-1)}-e^{-\alpha x t}\right), & t \ge 1\\
%t + (\frac{1}{x}-1)te^{-\alpha  x t}-\frac{1}{\alpha x^2}(1-e^{-\alpha xt}),& t \le 1.
%\end{cases}
%}

%-----------------Proposition-End-----------------

Even though the initial fraction of servers $x$ lie in the set $\{\frac{1}{N}, \dots, 1\}$, 
we approximate it by a real number $x \in [\frac{1}{N},1]$ to get insight on the dependence of the above two performance metrics on this fraction.

%-----------------Theorem-Start-----------------

\begin{thm}
\label{thm:OptSingleFork}
The following statements are true for the single forking problem. 
\begin{enumerate}[(i)]
\item The mean service completion time is an increasing function of forking time $t_1$. 
\item The mean service completion time is a decreasing function of initial fraction $x$.
\item The mean server utilization cost is a decreasing function of forking time $t_1$. 
\item There exists a unique optimal initial fraction of servers $x^\ast \in [\frac{1}{N}, 1]$ that minimizes the mean server utilization cost. 
For normalized forking time $u \ge v_3$, the optimal initial fraction is $x^\ast = 1/N$. 
For normalized forking time $u < v_3$, the optimal initial fraction is the unique solution to the following implicit equation, where $e^{\alpha xu}$ equals
\EQN{
\label{eqn:ImplicitFraction}
 \begin{cases}
(\frac{1}{x}-1)((u-1)e^{\alpha  x}-u)+\frac{(e^{\alpha  x}-1)}{\alpha x^2}, & u \in [1, v_3),\\
-(\frac{1}{x}-1)+\frac{1}{\alpha x^2u}(e^{\alpha xu}-1), & u  < v_3\wedge1,
\end{cases}
}
where the normalized forking point threshold $v_3$ is the unique solution to the implicit equation, 
where $\frac{c\mu}{N} e^{c\mu v_3} + \frac{(N-1)c\mu}{N}$ equals
\EQ{
\begin{cases}
\left(\frac{c\mu(N-1)(v_3-1)}{N}+1\right)(e^{c\mu}-1),&(1-\frac{c\mu}{N})\frac{(e^{c\mu}-1)}{c\mu} > 1,\\
\frac{1}{v_3}(e^{c\mu v_3} -1), & (1-\frac{c\mu}{N})\frac{(e^{c\mu}-1)}{c\mu} \le 1.
\end{cases}
}

%for $t_1 > c$ 
%\EQ{
%c(n_0(e^{\mu n_0 (t_1-c)}-1) + N)= \left(\frac{N}{n_0\mu} + (N-n_0)t_1\right)(1 - e^{-\mu n_0 c}),
%}
%and the unique solution to the following implicit equation for $t_1 < c$
%\EQ{
%\left(n_0- \frac{e^{-\mu n_0 t_1}}{(N-n_0)}\right)t_1 = \frac{N}{n_0\mu}(1-e^{-\mu n_0 t_1}).
%}
\end{enumerate}
\end{thm}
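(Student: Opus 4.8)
The plan is to deduce parts~(i)--(iii) from the closed-form partial derivatives in Proposition~\ref{prop:PartialDer} by a sign inspection, and to prove part~(iv) by first establishing strict convexity of $\E W_1$ in the initial fraction $x$ and then performing a one-dimensional analysis of the boundary derivative at $x=1/N$ viewed as a function of the normalized forking time $u$.

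For~(i), $\pd{\E S}{u}=c(1-x)\bigl(1-(1-e^{-\alpha xu})^K\bigr)\ge 0$ because $x\le 1$ and $1-e^{-\alpha xu}\in[0,1)$, so $\E S$ is nondecreasing in $u$ and hence in $t_1=cu$. For~(iii), each branch of $\pd{\E W_1}{u}$ in Proposition~\ref{prop:PartialDer} is $-\tfrac{\lambda}{\mu}$ times a product of nonnegative quantities, hence $\le 0$, and the two branches agree at $u=1$, so $\E W_1$ is nonincreasing in $t_1$. Part~(ii) needs one genuine inequality: writing $y=1-e^{-\alpha xu}$ and $\beta=\alpha xu=-\ln(1-y)$, the formula for $\pd{\E S}{x}$ is $\le 0$ if and only if $(1-x)(1-y^K)\le \tfrac{1}{\beta}\sum_{k=1}^K \tfrac{y^k}{k}$; bounding $1-x\le 1$ and using $\beta=\sum_{k\ge 1}\tfrac{y^k}{k}$ reduces this to the termwise inequality $\sum_{j\ge 1}\tfrac{y^{K+j}}{K+j}\le \sum_{j\ge 1}\tfrac{y^{K+j}}{j}$.

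For~(iv) I would first show $\E W_1$ is strictly convex in $x$ on $(0,1]$. Using Theorem~\ref{thm:SingleFork} and the identity $\partial_\beta\bigl(\tfrac{1-x}{x}e^{-\beta x}\bigr)=-(1-x)e^{-\beta x}$, one rewrites $\tfrac{\mu}{\lambda}\E W_1$, in both regimes $u\le 1$ and $u\ge 1$, as an affine function of $x$ plus $\int_{\alpha(u-1)_+}^{\alpha u}(1-x)e^{-\beta x}\,d\beta$; since $\partial_x^2\bigl[(1-x)e^{-\beta x}\bigr]=\beta e^{-\beta x}\bigl(2+\beta(1-x)\bigr)>0$ for $x\le 1$, $\beta>0$, this integral and hence $\E W_1$ is strictly convex in $x$. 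Next I would check $\pd{\E W_1}{x}\big|_{x=1}>0$ for every $u>0$ (the two branches reduce to $1-\tfrac{1}{\alpha}\bigl(e^{-\alpha(u-1)}-e^{-\alpha u}\bigr)$ when $u\ge 1$ and $u-\tfrac{1}{\alpha}(1-e^{-\alpha u})$ when $u\le 1$, both positive by $1-e^{-s}<s$). Strict convexity together with $\pd{\E W_1}{x}\big|_{x=1}>0$ then forces a unique minimizer $x^\ast$ on $[1/N,1]$: it is the unique interior zero of $\pd{\E W_1}{x}$ if $\pd{\E W_1}{x}\big|_{x=1/N}<0$, and $x^\ast=1/N$ otherwise. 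The implicit characterization~\eqref{eqn:ImplicitFraction} then follows by setting $\pd{\E W_1}{x}=0$ and rearranging each branch to isolate $e^{\alpha xu}$ (the two cases being the regimes $u\ge 1$ and $u\le 1$), and substituting $x=1/N$ into the same equation gives the displayed implicit equation for $v_3$.

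The step I expect to be the main obstacle is showing that $v_3$ is well defined, i.e.\ that $D(u):=\pd{\E W_1}{x}\big|_{x=1/N}$ changes sign exactly once on $(0,\infty)$, so that ``$x^\ast=1/N$'' is equivalent to ``$u\ge v_3$''. For $u\le 1$, $D(u)$ is a positive multiple of $\Phi(c\mu u)$ with $\Phi(w)=w-N+e^{-w}\bigl((N-1)w+N\bigr)$; since $\Phi(0)=\Phi'(0)=0$ and $\Phi''(w)=e^{-w}\bigl((N-1)w-(N-2)\bigr)$ is negative then positive, $\Phi'$ and then $\Phi$ each have at most one zero in $(0,\infty)$, giving a single sign change. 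For $u\ge 1$, $D(u)$ is a positive multiple of $1-(N-1)(e^{c\mu}-1)(u+c_0)e^{-c\mu u}$ for a constant $c_0$, and the unimodality of $(u+c_0)e^{-c\mu u}$ together with its decay at infinity controls the sign on $[1,\infty)$. Whether the crossing lies in $(0,1)$ or in $[1,\infty)$ is decided by the sign of $D(1)$, which is a positive multiple of $1-(1-\tfrac{c\mu}{N})\tfrac{e^{c\mu}-1}{c\mu}$; this is precisely the case split in the definition of $v_3$, and matching the two rearranged equations to the displayed forms is routine algebra.
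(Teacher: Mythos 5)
Your overall architecture is the same as the paper's: parts (i) and (iii) by sign inspection of the derivatives in Proposition~\ref{prop:PartialDer}, and part (iv) by combining strict convexity of $\E W_1$ in $x$, positivity of $\left.\pd{\E W_1}{x}\right\rvert_{x=1}$, and a single-sign-change analysis of $\left.\pd{\E W_1}{x}\right\rvert_{x=1/N}$ as a function of $u$ (the paper's Appendices B--E). Two of your sub-arguments are genuinely different and cleaner. For part (ii) the paper reduces to the same inequality $\beta(1-y^K)\le\sum_{k=1}^K y^k/k$ (after dropping the factor $1-x$, exactly as you do), but proves it by a recursion in $K$: it shows the increments $h(K)-h(K-1)$ change sign at most once, so $h(K)\le h(1)\vee\lim_{K}h(K)$, and then evaluates $h(1)<0$ and the limit $=0$ via the logarithm series. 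Your termwise comparison $\sum_{j\ge1} y^{K+j}/(K+j)\le\sum_{j\ge1} y^{K+j}/j$ using $\beta=-\ln(1-y)=\sum_{k\ge1}y^k/k$ reaches the same conclusion in one line. For convexity in $x$, the paper computes $\partial^2\E W_1/\partial x^2$ explicitly in both regimes and checks positivity against the Taylor series of the exponential; your representation of $\tfrac{\mu}{\lambda}\E W_1$ as an affine term plus $\int_{\alpha(u-1)_+}^{\alpha u}(1-x)e^{-\beta x}\,d\beta$, with $\partial_x^2\bigl[(1-x)e^{-\beta x}\bigr]=\beta e^{-\beta x}\bigl(2+\beta(1-x)\bigr)>0$, is shorter and unifies the two cases.

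The one place your sketch falls short of a proof is the uniqueness of the sign change of $D(u)=\left.\pd{\E W_1}{x}\right\rvert_{x=1/N}$ on $[1,\infty)$. Unimodality of $\psi(u)=(u+c_0)e^{-c\mu u}$ with decay at infinity only makes $D=1-a\psi$ quasi-convex there with $D(\infty)>0$, which still permits two zeros on $[1,\infty)$ (a dip below zero after a positive $D(1)$), so the statement that the sign of $D(1)$ ``decides'' where the unique crossing lies is not yet justified. To close this you need what the paper establishes via its $v_0,v_1,v_2$ bookkeeping: $D'$ is continuous at $u=1$, and the second-derivative expression is negative-then-positive on $(0,1)$ (your $\Phi''$) and then linearly decreasing to $-\infty$ on $(1,\infty)$, so $D'$ vanishes at $0$ and $\infty$ and has exactly one interior zero $v_2$, whence $D$ decreases then increases globally; $D(1)>0$ then forces $v_2<1$ and rules out any dip on $[1,\infty)$. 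Your ingredients ($\Phi(0)=\Phi'(0)=0$, the sign pattern of $\Phi''$, and the structure of $D$ for $u\ge1$) are exactly the right ones, so the gap is fillable by the same argument the paper uses, but ``unimodality controls the sign'' needs to be replaced by this explicit monotonicity chain.
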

\begin{proof} 
	The proof is provided in Appendix \ref{apdx:OptSingleFork}. 
\end{proof}
%-----------------Theorem-End-----------------

%-----------------Numerical studies-Start-----------------

\section{Numerical studies: Single Forking}\label{Num_SingleFork}

We numerically evaluate the behavior of mean service completion time and mean server cost utilization for single forking below, 
with total number of servers $N = 12$ for each of the $K = 10$ parallel tasks, taking $\lambda = 1$. 
We have analytically studied the case when service time at each server is an \emph{i.i.d.} random variable having a shifted exponential distribution, 
and we present the numerical studies for the single forking case with shifted exponential distribution. % in Section~\ref{subsec:ShiftExp}. 
We note that the insights obtained from this study hold for heavy-tailed distribution such as Pareto distribution as well, and the supporting numerical results are presented in Appendix~\ref{subsec:ParetoSFP}. 

%-----------------Shifted Exponential distribution-Start-----------------

%\subsection{Shifted exponential distribution}
%\label{subsec:ShiftExp}

We have taken the job completion times at each server to be an \emph{i.i.d.} random variable having a shifted exponential distribution with the shift parameter $c = 8$ and the exponential rate $\mu = 0.01$. 
From the discussion in Appendix~\ref{subsec:ImpactParam}, we observe that $c\mu = 0.08 < N-2 = 10 < x'$, and hence $cv_3 \ge 1$. 
Specifically, we can numerically compute the forking time threshold $v_3 \approx \frac{y}{c\mu} \approx 47$ where $e^{y} = 1 + 11y$ is satisfied by $y \approx 3.741$.  
That is, for any forking point $t_1 \le 47c$, we can have optimal number of initial servers $n_0^\ast \ge 1$. 
For the given system parameters, we plot the mean service completion time in Fig.~\ref{Fig:MeanServiceShiftedExpSFP} and mean server utilization in Fig.~\ref{Fig:MeanCostShiftedExpSFP} as a function of initial servers $n_0 \in \{1, \dots, 11\}$ for values of forking times in $t_1 \in \{c, 2c, \cdots, 9c\}$. 
%\red{[These statements are restated.]}. 
We corroborate the analytical results obtained in Theorem~\ref{thm:OptSingleFork}, 
by observing that mean service completion time increases and the mean server utilization cost decreases with increase in the forking time $t_1$. 
We further observe that the optimal number of initial servers $n_0^\ast \ge 1$ for mean server utilization cost for different values of forking time $t_1$.  
In addition, we notice the decrease in the mean service completion time as the number of initial server $n_0$ increases. 

These results point to an interesting tradeoff between the two metrics. 
First observation is that forking time gives a true tradeoff between these two metrics. 
Second and more interesting observation is that there exist a minimum number of initial servers for each forking time, 
until which point we can decrease both the mean service completion time and the mean server utilization cost. 
This also points to the sub-optimality of single-forking with unit server in~\cite{Wang2015Sigmetrics}. 

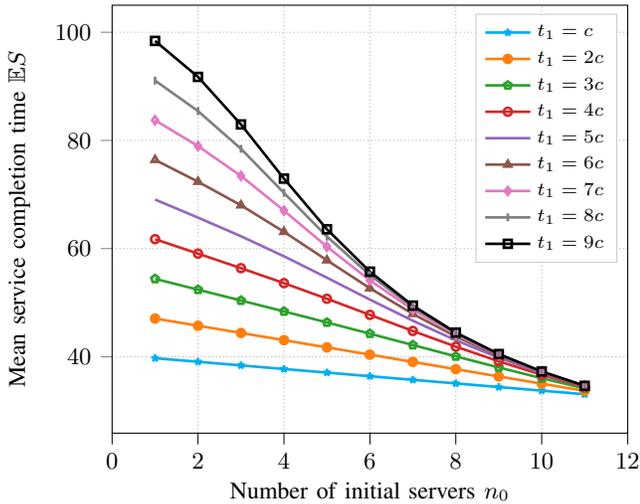
\begin{figure}[hhh]
\centering
% This file was created by matplotlib2tikz v0.7.3.
\begin{tikzpicture}

\definecolor{color0}{rgb}{0.12156862745098,0.466666666666667,0.705882352941177}
\definecolor{color1}{rgb}{1,0.498039215686275,0.0549019607843137}
\definecolor{color2}{rgb}{0.172549019607843,0.627450980392157,0.172549019607843}
\definecolor{color3}{rgb}{0.83921568627451,0.152941176470588,0.156862745098039}
\definecolor{color4}{rgb}{0.580392156862745,0.403921568627451,0.741176470588235}
\definecolor{color5}{rgb}{0.549019607843137,0.337254901960784,0.294117647058824}
\definecolor{color6}{rgb}{0.890196078431372,0.466666666666667,0.76078431372549}
\definecolor{color7}{rgb}{0.737254901960784,0.741176470588235,0.133333333333333}

\begin{axis}[
font=\small,
legend cell align={left},
legend style={draw=white!80.0!black, font=\scriptsize},
tick align=outside,
tick pos=left,
x grid style={white!69.01960784313725!black, densely dotted},
xlabel={Number of initial servers $n_0$},
xmajorgrids,
xmin=0,xmax=12,
xtick style={color=black},
y grid style={white!69.01960784313725!black, densely dotted},
ylabel={Mean service completion time $\E S$},
ymajorgrids,
ymax=105,
]
\addplot [semithick, cyan, mark=star, mark size=1.5, mark options={solid}, line width=1pt]
table {%
1	39.7414021164
2	39.0747354465
3	38.4080686651
4	37.7414008012
5	37.0747278397
6	36.4080399522
7	35.7413215718
8	35.0745587627
9	34.4077567532
10	33.7409672971
11	33.0743223684
};
\addlegendentry{$t_1=c$}
\addplot [semithick, color1, mark=*, mark size=1.5, mark options={solid}, line width=1pt]
table {%
1	47.0747354426
2	45.7413988285
3	44.4079822904
4	43.0740287016
5	41.7383583811
6	40.3993569707
7	39.0560788708
8	37.7095849697
9	36.3637877082
10	35.0254658401
11	33.7035572006
};
\addlegendentry{$t_1=2c$}
\addplot [semithick, color2, mark=pentagon, mark size=1.5, mark options={solid}, line width=1pt]
table {%
1	54.4080683505
2	52.4079246287
3	50.4052605146
4	48.3906451583
5	46.3497162425
6	44.2752255585
7	42.1769536167
8	40.0815025174
9	38.0245917959
10	36.0420010412
11	34.1631231974
};
\addlegendentry{$t_1=3c$}
\addplot [semithick, color3, mark=o, mark size=1.5, mark options={solid}, line width=1pt]
table {%
1	61.7413948831
2	59.0729685795
3	56.3819333459
4	53.6141335294
5	50.7298481825
6	47.7549362518
7	44.7725099217
8	41.8815910512
9	39.1634560957
10	36.668337999
11	34.4169237928
};
\addlegendentry{$t_1=4c$}
\addplot [semithick, color4, mark=, mark size=1.5, mark options={solid}, line width=1pt]
table {%
1	69.0746756566
2	65.730531633
3	62.2830276247
4	58.5820393536
5	54.6224783662
6	50.5777300737
7	46.6685007636
8	43.0587418228
9	39.8295736159
10	36.9966018018
11	34.5364881043
};
\addlegendentry{$t_1=5c$}
\addplot [semithick, color5, mark=triangle, mark size=1.5, mark options={solid}, line width=1pt]
table {%
1	76.4077516434
2	72.3645097212
3	68.0095391092
4	63.1018037205
5	57.84559459
6	52.674230721
7	47.9268803048
8	43.758237418
9	40.1854490406
10	37.1550231554
11	34.5888431678
};
\addlegendentry{$t_1=6c$}
\addplot [semithick, color6, mark=diamond, mark size=1.5, mark options={solid}, line width=1pt]
table {%
1	83.7401617295
2	78.9441393969
3	73.4373850844
4	67.0285039713
5	60.3585154649
6	54.1344049135
7	48.7122974158
8	44.1516804104
9	40.3668548161
10	37.2285609054
11	34.6110579737
};
\addlegendentry{$t_1=7c$}
\addplot [semithick, white!49.80392156862745!black, mark=|, mark size=1.5, mark options={solid}, line width=1pt]
table {%
1	91.0708483351
2	85.4232306489
3	78.4486711892
4	70.3021578556
5	62.2299532945
6	55.1084060529
7	49.1846568222
8	44.3664150076
9	40.4571779648
10	37.2620931423
11	34.6203598682
};
\addlegendentry{$t_1=8c$}
\addplot [semithick, black, mark=square, mark size=1.5, mark options={solid}, line width=1pt]
table {%
1	98.3977717988
2	91.74385266
3	82.9567758989
4	72.940392659
5	63.578389081
6	55.7402095558
7	49.4626102823
8	44.4817325557
9	40.5016324944
10	37.2772603173
11	34.6242333286
};
\addlegendentry{$t_1=9c$}
\end{axis}

\end{tikzpicture}
\caption{
Mean service completion time $\E S$ as a function of initial number of servers $n_0 \in \{1, \dots, 11\}$ for single forking of $K = 10$ parallel tasks at different forking times $t_1 \in \{c, 2c, \dots, 9c\}$. 
%The service distribution of each replica is assumed to be \emph{i.i.d.} shifted exponential with shift parameter $c = 8$ and exponential rate $\mu = .01$.
}
\label{Fig:MeanServiceShiftedExpSFP}
\end{figure}

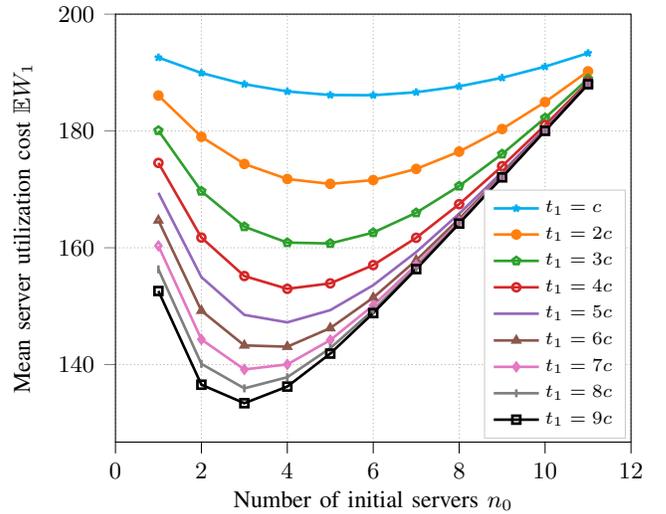
\begin{figure}[hhh]
\centering
% This file was created by matplotlib2tikz v0.7.3.
\begin{tikzpicture}

\definecolor{color0}{rgb}{0.12156862745098,0.466666666666667,0.705882352941177}
\definecolor{color1}{rgb}{1,0.498039215686275,0.0549019607843137}
\definecolor{color2}{rgb}{0.172549019607843,0.627450980392157,0.172549019607843}
\definecolor{color3}{rgb}{0.83921568627451,0.152941176470588,0.156862745098039}
\definecolor{color4}{rgb}{0.580392156862745,0.403921568627451,0.741176470588235}
\definecolor{color5}{rgb}{0.549019607843137,0.337254901960784,0.294117647058824}
\definecolor{color6}{rgb}{0.890196078431372,0.466666666666667,0.76078431372549}
\definecolor{color7}{rgb}{0.737254901960784,0.741176470588235,0.133333333333333}

\begin{axis}[
font=\small,
legend cell align={left},
legend style={at={(0.99,0.59)},draw=white!80.0!black, font=\scriptsize},
tick align=outside,
tick pos=left,
x grid style={white!69.01960784313725!black, densely dotted},
xlabel={Number of initial servers $n_0$},
xmajorgrids,
xmin=0,xmax=12,
xtick style={color=black},
y grid style={white!69.01960784313725!black, densely dotted},
ylabel={Mean server utilization cost $\E W_1$},
ymajorgrids,
ymax=200
]

\addplot [semithick, cyan, mark=star, mark size=1.5, mark options={solid}, line width=1pt]
table {%
1	192.572018974701
2	189.928105516894
3	188.011641680034
4	186.770192585262
5	186.15519355501
6	186.121660819386
7	186.627924010799
8	187.635378797848
9	189.108258134668
10	191.013420717656
11	193.320155348349
};
\addlegendentry{$t_1=c$}
\addplot [semithick, color1, mark=*, mark size=1.5, mark options={solid}, line width=1pt]
table {%
1	186.069813162467
2	178.99737594626
3	174.353340778124
4	171.771322606128
5	170.938751468578
6	171.589050583103
7	173.494947801841
8	176.462756179493
9	180.327483242595
10	184.948648922451
11	190.206709525987
};
\addlegendentry{$t_1=2c$}
\addplot [semithick, color2, mark=pentagon, mark size=1.5, mark options={solid}, line width=1pt]
table {%
1	180.067520689624
2	169.682822633775
3	163.609340753851
4	160.879907613587
5	160.738865308703
6	162.596512729299
7	165.993272755974
8	170.571516916142
9	176.053421254802
10	182.223571294105
11	188.915305402424
};
\addlegendentry{$t_1=3c$}
\addplot [semithick, color3, mark=o, mark size=1.5, mark options={solid}, line width=1pt]
table {%
1	174.526706392149
2	161.745483881546
3	155.157810995458
4	152.971117104384
5	153.901677148457
6	157.032079655177
7	161.708247975726
8	167.465111084353
9	173.973011940131
10	180.999114986221
11	188.379653039895
};
\addlegendentry{$t_1=4c$}
\addplot [semithick, color4, mark=, mark size=1.5, mark options={solid}, line width=1pt]
table {%
1	169.411890141857
2	154.981729962913
3	148.509602218873
4	147.228156491708
5	149.318572866126
6	153.588900884094
7	159.260602982431
8	165.827126823247
9	172.960368012895
10	180.448931301793
11	188.157473593317
};
\addlegendentry{$t_1=5c$}
\addplot [semithick, color5, mark=triangle, mark size=1.5, mark options={solid}, line width=1pt]
table {%
1	164.690319652448
2	149.218039071055
3	143.279935969024
4	143.057911172861
5	146.246426192608
6	151.458319045528
7	157.862485977177
8	164.963430131664
9	172.467461296828
10	180.201717836794
11	188.065317355549
};
\addlegendentry{$t_1=6c$}
\addplot [semithick, color6, mark=diamond, mark size=1.5, mark options={solid}, line width=1pt]
table {%
1	160.331760753059
2	144.306545676036
3	139.166134792814
4	140.029691550219
5	144.187104692987
6	150.13995038894
7	157.063868871455
8	164.508009409521
9	172.227537840805
10	180.090637666651
11	188.027092522918
};
\addlegendentry{$t_1=7c$}
\addplot [semithick, white!49.80392156862745!black, mark=|, mark size=1.5, mark options={solid}, line width=1pt]
table {%
1	156.308303786343
2	140.121247084923
3	135.930104172718
4	137.830752787191
5	142.806700210559
6	149.324165759965
7	156.607691542122
8	164.267869512983
9	172.110754557328
10	180.040726128866
11	188.011237515541
};
\addlegendentry{$t_1=8c$}
\addplot [semithick, black, mark=square, mark size=1.5, mark options={solid}, line width=1pt]
table {%
1	152.594184891385
2	136.554770885536
3	133.384552327686
4	136.233995521833
5	141.88138741435
6	148.819371780265
7	156.347118916884
8	164.141245564828
9	172.053910030637
10	180.018299429296
11	188.004661129416
};
\addlegendentry{$t_1=9c$}
\end{axis}

\end{tikzpicture}
\caption{
Mean server utilization cost $\E W$ as a function of initial number of servers $n_0 \in \{1, \dots, 11\}$ for single forking of $K = 10$ parallel tasks at different forking times $t_1 \in \{c, 2c, \dots, 9c\}$. 
%The service distribution of each replica is assumed to be \emph{i.i.d.} shifted exponential with shift parameter $c = 8$ and exponential rate $\mu = .01$.  
}
\label{Fig:MeanCostShiftedExpSFP}
\end{figure}

%We now consider a single forking point where $n_0$ servers are used at time $t_0=0$, and $n_1$ servers are opened at time $t_1=c$.  
%We assume that $n_0+n_1 = N = 12$, and $K=10$, we have taken the parameters of the shifted exponential to be $c = 8$ and $\mu = 0.01$. 

%We have plotted the mean service completion time with respect to mean server utilization cost for $\lambda =1$ in Fig.~\ref{Fig:Tradeoff} for initial number of servers $n_0 \in \{1, \dots, 11\}$ and for forking times $t_1 \in \{c, 2c, 3c, 4c\}$. 
%We see that there is an optimal number of initial servers for each forking time $t_1$. 
%Further, the mean server utilization cost decreases with forking time $t_1$, 
%though at the cost of increase in mean service completion time. 
% 
%\begin{figure}[hhh]
%\centering
%\input{Figures/TradeoffSingleFork_ShiftedExp}
%\caption{ 
%Mean server utilization cost $\E W$ as a function of mean service completion time $\E S$ when we vary the number of initial servers $n_0 \in \{1, \dots, 11\}$ for single forking of $K = 10$ parallel tasks at different forking times $t_1 \in \{c, 2c, 3c, 4c\}$. 
%%The service distribution of each replica is assumed to be \emph{i.i.d.} shifted exponential with shift parameter $c = 8$ and exponential rate $\mu = .01$.   
%}
%\label{Fig:Tradeoff}
%\end{figure}

The authors of~\cite{Wang2015Sigmetrics} considered a single fork analysis where at $t=0$, one copy of the task is started and when $pn$ jobs are complete, each unfinished job is replicated $r$ times. 
The analysis considered two possibilities, one where the currently running job is kept running at the forking point and second where it is killed. 
It was shown that keeping the currently running job performed better for both mean service completion time and mean server utilization cost, when the service distribution is shifted exponential. 
We compare our results with the baseline results obtained in the~\cite[Theorem~2]{Wang2015Sigmetrics} for the case when the straggler job is kept running at the forking point. 
We restate the above-mentioned Theorem, adapted to our notation, for easy reference. 

\begin{lem}\cite[Theorem~2]{Wang2015Sigmetrics}
\label{lem:SingleForkSingleStart}
Consider $K$ parallel computing tasks, each started on a single server each, i.e. $t_0=0, n_0=1$. 
If $r$ replicas of each unfinished task are started, after $(1-p)K$ tasks are completed, 
and the execution time of each task is assumed to be \emph{i.i.d.} ShiftedExp$(c,\mu)$, 
then the mean service completion time and the mean server utilization cost metrics for $K \to \infty$,  are  
\eq{
\E S &= \frac{2r+1}{r+1}c + \frac{1}{(r+1)\mu}(\ln K - r \ln p + \gamma_{EM})\\
\E W &= c + \frac{1}{\mu} + pc + pr\frac{(1-e^{-\mu c})}{\mu},
}
where $\gamma_{EM}\approx 0.577$ is the Euler-Mascheroni constant.
\end{lem}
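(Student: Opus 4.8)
Since this is a restatement of \cite[Theorem~2]{Wang2015Sigmetrics}, I only sketch the argument; the plan is to split the evolution at the (random) forking instant $t_1$ and use memorylessness to decouple the two phases as $K\to\infty$. \emph{Before the fork}, all $K$ tasks run one server each, so $t_1$ is the $\lceil(1-p)K\rceil$-th smallest of $K$ i.i.d.\ $\mathrm{ShiftedExp}(c,\mu)$ service times; by the R\'enyi representation $\E t_1 = c + \tfrac1\mu(H_K - H_{\lfloor pK\rfloor})$ with $H_n$ the $n$-th harmonic number, a Chernoff bound gives concentration of $t_1$ about this mean, and $H_n=\ln n+\gamma_{EM}+o(1)$ yields $t_1\to c-\tfrac1\mu\ln p$ in probability and in $L^1$. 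I would also record that $t_1\ge c$ (as $p\le1$), so a task still running at $t_1$ is past its start-up delay; hence, conditioned on survival, its residual time on the original server is $\mathrm{Exp}(\mu)$, these being i.i.d.\ across the $\lfloor pK\rfloor$ survivors and independent of $t_1$.

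\emph{After the fork}, a survivor holds its original server (residual $\mathrm{Exp}(\mu)$) plus $r$ fresh replicas, each needing start-up $c$ then $\mathrm{Exp}(\mu)$, so its remaining time is $Y\triangleq\min\{E_0,\,c+E\}$ with $E_0\sim\mathrm{Exp}(\mu)$, $E\sim\mathrm{Exp}(r\mu)$ independent, and the $Y$'s are i.i.d.\ across survivors. Integrating the tail ($\bar G_Y(y)=e^{-\mu y}$ for $y\le c$, $\bar G_Y(y)=e^{-(r+1)\mu y+r\mu c}$ for $y\ge c$) gives $\E Y=\tfrac{1-e^{-\mu c}}{\mu}+\tfrac{e^{-\mu c}}{(r+1)\mu}$. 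Since the makespan is $S=t_1+\max\{Y_1,\dots,Y_{\lfloor pK\rfloor}\}$ and $\bar G_Y$ has an exponential tail with rate $(r+1)\mu$ and prefactor $e^{r\mu c}$, the standard Gumbel-type estimate gives $\E[\max_{j\le n}Y_j]=\tfrac{1}{(r+1)\mu}(\ln n+r\mu c+\gamma_{EM})+o(1)$; substituting $n=\lfloor pK\rfloor$ and the limit for $\E t_1$, adding, and collecting the terms in $c$, $\ln p$ and $\ln K$ reproduces the claimed $\E S$.

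For the cost, the same split applies: up to $t_1$ the number of busy servers at time $t$ is $\#\{k:T_k^{(0)}>t\}\approx K\bar F(t)$, contributing $\int_0^{t_1}\bar F(t)\,dt$ per task; after $t_1$ each of the $\lfloor pK\rfloor$ survivors keeps its original server plus $r$ replicas busy until it finishes, contributing the expected server-time of that commitment per survivor. Evaluating these expectations with the $\mathrm{ShiftedExp}(c,\mu)$ law and the memoryless residuals --- with the bookkeeping of cancelled replicas as in \cite{Wang2015Sigmetrics} --- yields $\E W$ as the base cost $c+1/\mu$, plus the overhead $pc$ charged to the surviving fraction, plus the replica cost $pr\,\tfrac{1-e^{-\mu c}}{\mu}$.

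The main obstacle is rigor of the asymptotic decoupling, not the algebra: one must (i) control the fluctuations of the $\lceil(1-p)K\rceil$-th order statistic so that the post-fork phase effectively starts from a deterministic configuration, (ii) establish uniform integrability of the normalized maximum of the $Y_j$'s so that convergence in distribution upgrades to convergence of expectations, and (iii) justify that the tasks alive at $t_1$ are exactly the $\lfloor pK\rfloor$ largest, whose residuals conditioned to exceed $t_1$ are i.i.d.\ $\mathrm{Exp}(\mu)$ --- the step where the shifted-exponential memorylessness is indispensable. Full details are in \cite[Theorem~2]{Wang2015Sigmetrics}.
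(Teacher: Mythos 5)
The paper offers no proof of this lemma---it is imported verbatim from \cite[Theorem~2]{Wang2015Sigmetrics}---so the only question is whether your reconstruction actually delivers the stated formulas. Your latency argument does: the R\'enyi/order-statistic limit $t_1\to c-\tfrac1\mu\ln p$, the memoryless residual $Y=\min\{E_0,\,c+E\}$ with tail $e^{r\mu c}e^{-(r+1)\mu y}$ for $y\ge c$, and the Gumbel estimate $\E[\max_{j\le pK}Y_j]\approx\frac{\ln(pK)+r\mu c+\gamma_{EM}}{(r+1)\mu}$ combine to give exactly $\frac{2r+1}{r+1}c+\frac{1}{(r+1)\mu}(\ln K-r\ln p+\gamma_{EM})$, and the caveats you list (concentration of the order statistic, uniform integrability of the maximum) are the right ones. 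That half is correct.

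The cost half has a genuine gap: the bookkeeping you describe does not produce the stated formula. Under your own accounting---original server on from $0$ until the task completes, and for each straggler $r$ replicas on from $t_1$ until the task completes---one gets, writing $q=e^{-\mu c}$ and using memorylessness,
\begin{equation*}
\E W_1=\E[T^{(0)}]-\tfrac{p}{\mu}+p(r+1)\E Y
= c+\tfrac1\mu-\tfrac p\mu+\tfrac{p(r+1-rq)}{\mu}
= c+\tfrac1\mu+\tfrac{pr(1-e^{-\mu c})}{\mu},
\end{equation*}
where $\E Y=\frac{1-q}{\mu}+\frac{q}{(r+1)\mu}$ is exactly the value you computed. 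This is the target expression \emph{minus} the term $pc$. So the extra $pc$ in the lemma cannot come from ``each surviving task keeps its $r{+}1$ servers busy until it finishes''; it must reflect a different charging convention in \cite{Wang2015Sigmetrics} (e.g., an overhead charged per forked task), and your sketch neither identifies that convention nor derives the term---it simply asserts that the bookkeeping ``yields \dots the overhead $pc$.'' As written, the sketch therefore proves a different cost formula than the one stated. To close the gap you would need to pin down the precise cost definition used in the cited theorem and show where $pc$ arises, or else note explicitly that the natural server-time accounting (the one this paper itself uses in~\eqref{eqn:ServiceCost}) gives $c+\tfrac1\mu+pr\tfrac{1-e^{-\mu c}}{\mu}$ and that the discrepancy lies in the model, not the algebra.
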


Though our model is quite different than the one studied here, we will make broad comparisons. 
We let $n_0 =1$ for this model and let $t_1$ to be the mean time to finish $(1-p)K$ tasks with $K$ parallel servers working at rate $\mu$. 
Then
\EQ{
{\mu K (t_1-c) \approx K (1-p)}. 
}
Further, at instant $t_1$, we have $n_1 = N-1 = rp$ new servers being started per job. 
Therefore, we can take the forking point to be $t_1 = c + \frac{(1-p)}{\mu}$ and the total number of servers to be $N = 1 + rp$. 
Given total number of available servers $N$ and forking time $t_1$, 
we can compute the fraction of completed tasks $p = 1 - \mu(t_1-c)$ and the number of replicas $r = \frac{(N-1)}{p}$.  
In Fig.~\ref{Fig:TradeoffShiftedExpSFP}, we have plotted the mean of service completion times with respect to mean server utilization cost when $\lambda =1$ for the single forking proposed in~\cite{Wang2015Sigmetrics} as the baseline curve and our proposed single forking varying the initial number of servers $n_0 \in \{1, \dots, 11\}$, when the forking time $t_1 \in \{2c, 4c, 6c, 8c\}$. 
We see that our trade-off curves are well inside the baseline curve. 
Specifically, we observe significant reduction in the mean server utilization cost for optimal server initialization when compared to single-server initialization of~\cite{Wang2015Sigmetrics}, for the identical mean service completion time in both the cases. 

\begin{figure}[hhh]
\centering
% This file was created by matplotlib2tikz v0.7.3.
\begin{tikzpicture}

\definecolor{color0}{rgb}{0.12156862745098,0.466666666666667,0.705882352941177}
\definecolor{color1}{rgb}{1,0.498039215686275,0.0549019607843137}
\definecolor{color2}{rgb}{0.172549019607843,0.627450980392157,0.172549019607843}
\definecolor{color3}{rgb}{0.83921568627451,0.152941176470588,0.156862745098039}
\definecolor{color4}{rgb}{0.580392156862745,0.403921568627451,0.741176470588235}
\definecolor{color5}{rgb}{0.549019607843137,0.337254901960784,0.294117647058824}
\definecolor{color6}{rgb}{0.890196078431372,0.466666666666667,0.76078431372549}
\definecolor{color7}{rgb}{0.737254901960784,0.741176470588235,0.133333333333333}

\begin{axis}[
font=\small,
legend cell align={left},
legend style={at={(0.99,0.72)}, draw=white!80.0!black, font=\scriptsize},
tick align=outside,
tick pos=left,
x grid style={white!69.01960784313725!black, densely dotted},
xlabel={Mean service completion time $\E S$},
xmajorgrids,
xmin=38,xmax=100,
xtick style={color=black},
y grid style={white!69.01960784313725!black, densely dotted},
ylabel={Mean server utilization cost $\E W_1$},
ymajorgrids,
ymin=125,ymax=215
]

%\addplot [semithick, color0, mark=star, mark size=1.5, mark options={solid}, line width=1pt]
%table {%
%39.7414021164	192.572018974701
%39.0747354465	189.928105516894
%38.4080686651	188.011641680034
%37.7414008012	186.770192585262
%37.0747278397	186.15519355501
%36.4080399522	186.121660819386
%35.7413215718	186.627924010799
%35.0745587627	187.635378797848
%34.4077567532	189.108258134668
%33.7409672971	191.013420717656
%33.0743223684	193.320155348349
%};
%\addlegendentry{$t_1=c$}
\addplot [semithick, color1, mark=*, mark size=1.5, mark options={solid}, line width=1pt]
table {%
47.0747354426	186.069813162467
45.7413988285	178.99737594626
44.4079822904	174.353340778124
43.0740287016	171.771322606128
41.7383583811	170.938751468578
40.3993569707	171.589050583103
39.0560788708	173.494947801841
37.7095849697	176.462756179493
36.3637877082	180.327483242595
35.0254658401	184.948648922451
33.7035572006	190.206709525987
};
\addlegendentry{$t_1=2c$}
%\addplot [semithick, color2, mark=pentagon, mark size=1.5, mark options={solid}, line width=1pt]
%table {%
%54.4080683505	180.067520689624
%52.4079246287	169.682822633775
%50.4052605146	163.609340753851
%48.3906451583	160.879907613587
%46.3497162425	160.738865308703
%44.2752255585	162.596512729299
%42.1769536167	165.993272755974
%40.0815025174	170.571516916142
%38.0245917959	176.053421254802
%36.0420010412	182.223571294105
%34.1631231974	188.915305402424
%};
%\addlegendentry{$t_1=3c$}
\addplot [semithick, color3, mark=o, mark size=1.5, mark options={solid}, line width=1pt]
table {%
61.7413948831	174.526706392149
59.0729685795	161.745483881546
56.3819333459	155.157810995458
53.6141335294	152.971117104384
50.7298481825	153.901677148457
47.7549362518	157.032079655177
44.7725099217	161.708247975726
41.8815910512	167.465111084353
39.1634560957	173.973011940131
36.668337999	180.999114986221
34.4169237928	188.379653039895
};
\addlegendentry{$t_1=4c$}
%\addplot [semithick, color4, mark=, mark size=1.5, mark options={solid}, line width=1pt]
%table {%
%69.0746756566	169.411890141857
%65.730531633	154.981729962913
%62.2830276247	148.509602218873
%58.5820393536	147.228156491708
%54.6224783662	149.318572866126
%50.5777300737	153.588900884094
%46.6685007636	159.260602982431
%43.0587418228	165.827126823247
%39.8295736159	172.960368012895
%36.9966018018	180.448931301793
%34.5364881043	188.157473593317
%};
%\addlegendentry{$t_1=5c$}
\addplot [semithick, color5, mark=triangle, mark size=1.5, mark options={solid}, line width=1pt]
table {%
76.4077516434	164.690319652448
72.3645097212	149.218039071055
68.0095391092	143.279935969024
63.1018037205	143.057911172861
57.84559459	146.246426192608
52.674230721	151.458319045528
47.9268803048	157.862485977177
43.758237418	164.963430131664
40.1854490406	172.467461296828
37.1550231554	180.201717836794
34.5888431678	188.065317355549
};
\addlegendentry{$t_1=6c$}
%\addplot [semithick, color6, mark=diamond, mark size=1.5, mark options={solid}, line width=1pt]
%table {%
%83.7401617295	160.331760753059
%78.9441393969	144.306545676036
%73.4373850844	139.166134792814
%67.0285039713	140.029691550219
%60.3585154649	144.187104692987
%54.1344049135	150.13995038894
%48.7122974158	157.063868871455
%44.1516804104	164.508009409521
%40.3668548161	172.227537840805
%37.2285609054	180.090637666651
%34.6110579737	188.027092522918
%};
%\addlegendentry{$t_1=7c$}
\addplot [semithick, white!49.80392156862745!black, mark=|, mark size=1.5, mark options={solid}, line width=1pt]
table {%
91.0708483351	156.308303786343
85.4232306489	140.121247084923
78.4486711892	135.930104172718
70.3021578556	137.830752787191
62.2299532945	142.806700210559
55.1084060529	149.324165759965
49.1846568222	156.607691542122
44.3664150076	164.267869512983
40.4571779648	172.110754557328
37.2620931423	180.040726128866
34.6203598682	188.011237515541
};
\addlegendentry{$t_1=8c$}
%\addplot [semithick, color7, mark=square, mark size=1.5, mark options={solid}, line width=1pt]
%table {%
%98.3977717988	152.594184891385
%91.74385266	136.554770885536
%82.9567758989	133.384552327686
%72.940392659	136.233995521833
%63.578389081	141.88138741435
%55.7402095558	148.819371780265
%49.4626102823	156.347118916884
%44.4817325557	164.141245564828
%40.5016324944	172.053910030637
%37.2772603173	180.018299429296
%34.6242333286	188.004661129416
%};
%\addlegendentry{$t_1=9c$}
\addplot [semithick, color=black, mark=square, mark size=1.5, mark options={solid}, line width=1pt]
table {%
39.3298757749504	200.572018974701
45.3021475158061	199.932018974701
52.0602933843543	199.292018974701
59.7626698417928	198.652018974701
68.6199071247242	198.012018974701
78.9209734243304	197.372018974701
91.0779113700539	196.732018974701
105.708072672076	196.092018974701
123.799422150789	195.452018974701
};
\addlegendentry{Baseline}

\end{axis}

\end{tikzpicture}
\caption{
Mean server utilization cost $\E W$ as a function of mean service completion time $\E S$ when we vary the number of initial servers $n_0 \in \{1, \dots, 11\}$ for single forking of $K = 10$ parallel tasks at different forking times $t_1 \in \{2c, 4c, 6c, 8c\}$. 
The service distribution of each replica is assumed to be \emph{i.i.d.} shifted exponential with shift parameter $c = 8$ and exponential rate $\mu = .01$.  
We have plotted the same curve for initial servers $n_0 = 1$ for single forking for $K=10$ parallel tasks for these different values of forking times $t_1$, for the baseline $(r, p)$ model where $r = \frac{(N-1)}{p}$ and $p = 1 - \mu (t_1-c)$.
}
\label{Fig:TradeoffShiftedExpSFP}
\end{figure}
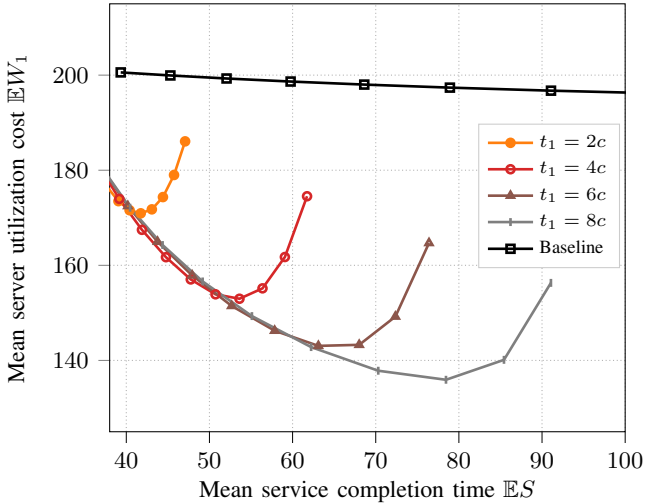
%-----------------Shifted Exponential distribution-End-----------------

%-----------------Numerical studies-End-----------------

%-----------------Optimal Single Forking-End-----------------

%-----------------Tradeoff-Start-----------------

\section{Numerical Studies: Multiple Forking}
\label{sec:MultiFork}
In a multi-forking scenario the free variables are number of forked servers $(n_0, \dots, n_{m-1}, n_m)$ under the constraint of finite number of servers $N$ per task, i.e. $\sum_{i=0}^mn_i = N$, 
and the forking instants $(t_0=0, t_1, \dots, t_m)$. 
It is a multi-dimensional optimization problem and not easy to evaluate. 
The single forking results in Section~\ref{sec:OptSingleFork} leads us to believe that even for the general case of multiple forking points with \emph{i.i.d.} execution times having shifted exponential distribution, 
there should be a tradeoff between the two metrics of mean server utilization cost $\E W$ and the mean service completion time $\E S$. 
We attempt two approaches to understand this tradeoff.

\subsection{Joint Cost for large $N$}
To explore this tradeoff, we formulate the joint optimization in terms of a tradeoff parameter $\beta$ as  
\begin{eqnarray}
MP: && \min \E S + \beta \E W \label{opt_form}\\
& \text{such that} &  \eqref{eqn:MultiTaskMeanCompletionTime},  \eqref{eqn:SingleTaskMeanUtilizationCost}, t_0=0 \nonumber\\
& \text{variables} & n_0,  \cdots n_{m},  t_{0}, \cdots, t_m  \nonumber
\end{eqnarray}
We note from Fig.~\ref{Fig:TradeoffShiftedExpSFP} that based on the value of $\beta$, 
the tradeoff point chosen will be different. 
Thus, finding the forking instants and the number of servers added at each forking point, are important.  
For the optimization problem, we chose the total number of servers $N$, to be unbounded. 
%We note that the problem $MP$ in~\eqref{opt_form} provides a tradeoff between the two objectives by varying $\beta$. 
For $(n_0, \dots, n_m)$ an integer sequence, the above problem is a mixed-integer programming problem, and known to be hard. 
As such, we relax the integer constraints and allow $n_i$ to be real valued, 
in which case the problem reduces to a linear programming problem and can be solved using interior point algorithm~\cite{Byrd1999SJO}. 
We round off the values of $n_i$ to nearest integers to get a heuristic integral solution. 

%For small values of $m$, standard optimization solvers (e.g., MOSEK \cite{Mosek2010Web}) can be used to solve the problem. 
%%\input{Pareto}
%\input{eval}
%\section{Numerical Comparisons}\label{sec:numerical}
%In this section, we numerically evaluate the performance of the proposed multi-forking policy. 
%\subsection{Single Forking Point} 
%\subsection{Tradeoff between Mean service completion time and Mean server utilization cost}
%In this subsection, we consider a multi-fork analysis with the problem in \eqref{opt_form}.

For this multi-objective optimization defined in~\eqref{opt_form}, 
changing the value of $\beta$ provides a tradeoff between the two metrics. 
For numerically solving the multi-objective optimization, 
we take the parameters of shifted exponential distribution as shift $c=1$ and service rate $\mu=1$, 
the server utilization cost per unit time $\lambda=1$, 
the number of parallel tasks $K=25$, and the number of forking points $m=4$.  
We depict the tradeoff between mean service completion time and mean server utilization cost for the proposed heuristic algorithm in  Fig.~\ref{Fig:tradeoffK25}. 

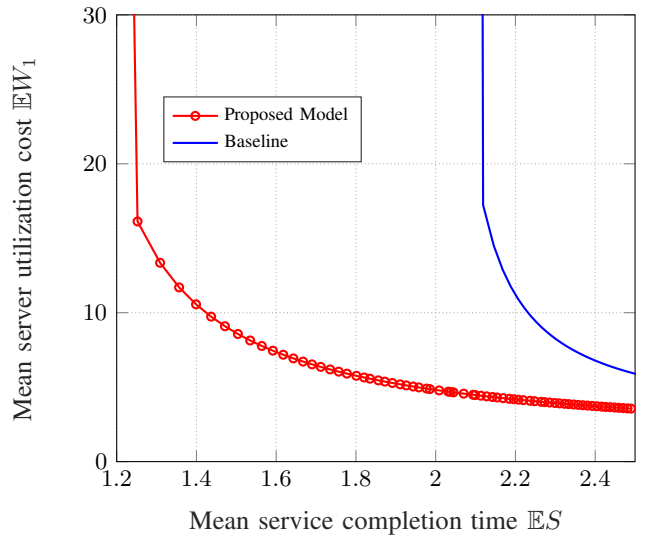
\begin{figure}[hhh]
\centering
% This file was created by matlab2tikz.
%
%The latest updates can be retrieved from
%  http://www.mathworks.com/matlabcentral/fileexchange/22022-matlab2tikz-matlab2tikz
%where you can also make suggestions and rate matlab2tikz.
%
\begin{tikzpicture}

\begin{axis}[%
font=\small,
width=.38\textwidth,
%height=.4\textwidth,
at={(1.011in,0.65in)},
scale only axis,
xmin=1.2,xmax=2.5,
xlabel style={font=\color{white!15!black}},
xlabel={Mean service completion time $\E S$},
xmajorgrids,
x grid style={white!69.01960784313725!black, densely dotted},
ymin=0,ymax=30,
ylabel style={font=\color{white!15!black}},
ylabel={Mean server utilization cost $\E W_1$},
ymajorgrids,
y grid style={white!69.01960784313725!black, densely dotted},
axis background/.style={fill=white},
legend style={at={(0.09,0.67)}, anchor=south west, legend cell align=left, align=left, draw=white!15!black, font=\scriptsize}
]
\addplot [color=red,mark=o,thick, mark size=1.5]
  table[row sep=crcr]{%
1.00917623883405	416.852098735475\\
1.25227370932592	16.12626182045\\
1.30891255001364	13.3528751987717\\
1.35659803205205	11.7010073939055\\
1.39900041899084	10.5637949136272\\
1.4368174283858	9.73581942868348\\
1.47151684717597	9.09294132446147\\
1.50405886735417	8.57046135842363\\
1.53478523305827	8.13549653538723\\
1.56390957840487	7.76696818749409\\
1.59170392223562	7.44910069762601\\
1.6177298615035	7.17738985187484\\
1.64318700432579	6.93289067098871\\
1.66742204497086	6.7174589997561\\
1.6898522792524	6.53155841106533\\
1.71180715420784	6.36094383866832\\
1.73515879025206	6.19065843764254\\
1.75674426070497	6.04259942998588\\
1.77735860689073	5.9088775036643\\
1.80105181388769	5.76368458510118\\
1.82056068458248	5.65042774946927\\
1.83514373746858	5.56922324449205\\
1.85624883399265	5.45659956093562\\
1.87251646550195	5.37350849927893\\
1.89136993712275	5.28100390061144\\
1.91103534836442	5.18859507980321\\
1.92629194670911	5.11960634136355\\
1.94372249597603	5.04351723517373\\
1.95778100327758	4.98416565438304\\
1.9771036262667	4.90537715249598\\
1.98548765306008	4.8721521937167\\
2.00883083819036	4.78255504612763\\
2.03166838405504	4.69882244784334\\
2.03745944704775	4.67817574803445\\
2.04460929078445	4.65300041973356\\
2.07019101135234	4.5656795256406\\
2.09376319145498	4.48883397020418\\
2.09904186627793	4.47207717444888\\
2.11315028332057	4.42807097555288\\
2.12779173991109	4.38356634711127\\
2.14235333724661	4.34043596955226\\
2.15383734115338	4.30718901320221\\
2.16915821551749	4.26385097161539\\
2.18447775087502	4.22163770053289\\
2.1957083286734	4.19137877208008\\
2.20749308527746	4.16023190876239\\
2.21975464748735	4.12846373293953\\
2.23665471763329	4.08571028219182\\
2.24821394887919	4.05713470119787\\
2.26457922365106	4.01757146274657\\
2.2732190879187	3.99709469792006\\
2.28592293812841	3.967485892445\\
2.29910985709269	3.93736373164738\\
2.31048724080631	3.91186213710568\\
2.32360389209382	3.88300616255641\\
2.33395000989965	3.86064556342953\\
2.34611207453824	3.83479975396933\\
2.35746637088651	3.81108855401744\\
2.36911195458585	3.78717760434382\\
2.38069451154672	3.76379615170078\\
2.39247781209588	3.74040860436228\\
2.40433435072126	3.71727183467715\\
2.41647782364804	3.69397664687183\\
2.42632716669644	3.67537369180968\\
2.43761954936651	3.65435885268111\\
2.44856056496393	3.63431041132272\\
2.45908192423459	3.61531454422696\\
2.47025299723267	3.59544322296988\\
2.48115392812755	3.57634139519312\\
2.48989888184028	3.56121957279944\\
};
\addlegendentry{Proposed Model}

\addplot [color=blue,mark=,thick, mark size=1.5]
  table[row sep=crcr]{%
2.00133942524772	1331.80835994475\\
2.11891527874686	17.2310324186215\\
2.14563864061222	14.5035899987267\\
2.16816744010092	12.8777136593177\\
2.18801593415112	11.7681660399904\\
2.20596023938237	10.94912369424\\
2.22246169821821	10.3125658739148\\
2.23782090028425	9.7994510134975\\
2.25224636847176	9.37443634692747\\
2.26589020239503	9.01488111241357\\
2.27886753827335	8.70554593486093\\
2.29126714443009	8.4357340643761\\
2.30316032527967	8.19768658367201\\
2.31460417730449	7.98561761129356\\
2.32564598855372	7.79512846629463\\
2.33632539084899	7.62279058014798\\
2.34667599414625	7.46589142711712\\
2.3567263936138	7.3222558898486\\
2.36650124933118	7.19011520933922\\
2.37602208487298	7.06801329119694\\
2.38530774926655	6.95474000011643\\
2.39437448762842	6.84928443028041\\
2.40323736045078	6.75078009915608\\
2.41190987775265	6.65849839885789\\
2.42040388663571	6.57181013709964\\
2.42872943171512	6.49017149740688\\
2.43689633632504	6.41311158229782\\
2.44491372735986	6.3402143190143\\
2.45278961586135	6.27111730501971\\
2.4605286793933	6.20552288879224\\
2.4681411228854	6.14311751489969\\
2.4756320209753	6.08365859031337\\
2.48300648262594	6.02692542632587\\
2.49026954283002	5.97271744737874\\
2.49742702506606	5.92084626128082\\
2.5044835375175	5.8711478538989\\
2.51144186645306	5.8234837613931\\
2.51830673295001	5.77771402909861\\
2.5250819537112	5.73371544827206\\
2.53177122229897	5.69137498231266\\
2.53837818476752	5.65058831062116\\
2.54490478916308	5.61126884196312\\
2.55135185230865	5.57334233323531\\
2.55772602710245	5.53670680927167\\
2.56402813311735	5.50129957288359\\
2.57026059065509	5.46705332243257\\
2.57642565682058	5.43390597858012\\
2.58252556461106	5.40179951553617\\
2.58856195256567	5.37068254589162\\
2.59453788875647	5.34049961560555\\
2.60045386405097	5.31121131398537\\
2.60631059234159	5.28277781586902\\
2.61211295654588	5.2551463166038\\
2.61786023728095	5.22828874898072\\
2.623555021493	5.20216484210081\\
2.62919850304385	5.17674280546362\\
2.63479079384921	5.15199720137027\\
2.64033488759505	5.12789167994156\\
2.64583207885107	5.10439851573969\\
2.65128234053783	5.08149781008077\\
2.65669272273854	5.05914077304214\\
2.66205077795945	5.03735908249693\\
2.6674108292116	5.015919933161\\
2.67907859147558	4.97001223359574\\
2.69167395596555	4.92119126243714\\
2.70407492818838	4.87385762532696\\
2.71627967502164	4.82797352614435\\
2.72829827557106	4.78345862589434\\
2.74013553250708	4.74025545913885\\
2.75179645241834	4.69830827910769\\
};
\addlegendentry{Baseline}

\end{axis}
\end{tikzpicture}%
\caption{Tradeoff between mean service completion time and mean server utilization cost, obtained by changing the value of $\beta$. 
The service distribution of each replica is assumed to be \emph{i.i.d.} shifted exponential with shift parameter $c = 1$ and exponential rate $\mu = 1$. 
%The number of forking points are limited to $m=4$. The result shows an orders of magnitude improvement in the tradeoff as compared to the baseline approach in \cite{Wang2015Sigmetrics}. 
}
\label{Fig:tradeoffK25}
\end{figure}

We compare the performance of multi-forking obtained by the proposed heuristic algorithm to the baseline single-forking approach proposed in~\cite{Wang2015Sigmetrics}. 
We can compute the linear cost of the optimization problem in~\eqref{opt_form} for any tradeoff parameter $\beta$, by obtaining the mean service completion time and the mean server utilization cost from Lemma~\ref{lem:SingleForkSingleStart}. 
Fig.~\ref{Fig:tradeoffK25} shows a significant improvement of the proposed model as compared to that in~\cite{Wang2015Sigmetrics} for the tradeoff between the two metrics. 
For a service completion time lower than $2$, there is significant reduction in the mean server utilization cost, 
thus showing the huge savings that the multi-forking can provide. 
The performance gains are due to two factors, 
initializing the task on multiple servers at time $t=0$ 
and multi-forking. 
We observed that the improvement due to multi-forking was small in this setting and the corresponding tradeoff curve for single forking looks very similar, 
and hence we do not provide the tradeoff curve for single-forking in this setting.  

We note that the lowest mean service completion point in Fig.~\ref{Fig:tradeoffK25} corresponds to starting large number of servers at $t=0$ since having large number of servers at $t=0$ achieves the lowest completion time. 
However, if the number of total servers is bounded by a number $N$ as is the case in our single forking analysis, 
the points on the very left in the mean service completion time may not be achievable. 
In other words, the curve will get truncated on the left side with an upper bound on $N$. 

\subsection{Comparison with optimal single forking}
\label{subsec:MultiVsSingle}
In general, finding the optimal forking points and the corresponding number of servers to be forked at each forking point, is not an easy task. 
In the following, we compare optimal single forking to sub-optimal two-forking to quantify potential gains of multi-forking for \emph{i.i.d.} shifted exponential execution times. 
We assume the system parameters to be $c = 8, \mu= 0.01, K=10, N=12, \lambda = 1$. 
%Then single forking sequence $((0, n_0), (t_1, n_1))$ can be thought of as 
%Let $n_0^\ast$ be the optimal number of initial servers for single forking point $t$. 
We consider two different setups for the comparison, depending on the location of the other forking with respect to $t$. 
The single forking can be thought of as two-forking with zero forked servers at this other forking point. 

As a first case, we take the other forking point $s < t$. 
In this case, the single forking can be thought of as a two-forking sequence $((0, n_0),(s, 0),(t, N-n_0))$. %\violet{$((0, n_0),(s, 0),(t, N-n_0))$}. 
For all possible values of $0 \le m_0, m_1$ such that $m_0+ m_1 \le N$, we consider two-forking sequences $((0, m_0),(s, m_1),(t, N-m_0-m_1))$. %\violet{$((0, m_0),(s, m_1),(t, N-m_0-m_1))$}. 
We plot the tradeoff curve between mean service completion time and  mean server utilization cost for the single and two-forking sequences in Fig.~\ref{Fig:TradeoffShiftedExpTFPBefore} for the values of forking points $t = 9c$ and $s \in  \{c,  2c, 3c, 4c, 5c\}$, varying the number of forked servers $n_0 \in [N]$ in single-forking case and $m_0, m_1$ in two-forking case. 
We observe that for \emph{some} feasible choice of forked servers $m_0, m_1$ and forking point $s < t$, 
the two-forking system achieves better tradeoff points as compared to the single-forking system. 

\begin{figure}[hhh]
\centering
\input{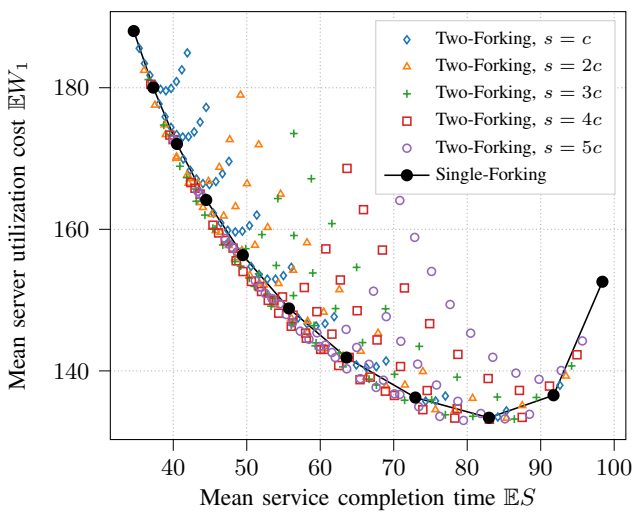}
\caption{ 
This figure illustrates  achievable points with shifted exponential
execution times for mean server utilization cost and mean service completion time for two-forking with different values of forked servers $m_0$, $m_1$ and $N-m_0-m_1$ at forking points $t_0=0$, $s \in \{c, 2c, 3c, 4c, 5c\}$ and $t = 9c$, respectively. 
For comparison, we also plot the tradeoff points for single-forking at forking time $t=9c$ varying the number of initial servers $n_0$. 
%We see that two-forking can achieve better tradeoff points than single-forking. 
}
\label{Fig:TradeoffShiftedExpTFPBefore}
\end{figure}

For the other case, we take the second forking point $s > t$. 
In this case, the single forking can be thought of as a two-forking sequence $((0, n_0),(t, N-n_0),(s, 0))$. %\violet{$((0, n_0),(t, N-n_0),(s, 0))$}
For all possible values of $0 \le m_0, m_1$ such that $m_0+ m_1 \le N$, we consider two-forking sequences $((0, m_0),(t, m_1),(s, N-m_0-m_1))$. %\violet{$((0, m_0),(t, m_1),(s, N-m_0-m_1))$}. 
We plot the tradeoff curve between mean service completion time and  mean server utilization cost for the single and two-forking sequences in Fig.~\ref{Fig:TradeoffShiftedExpTFPAfter} for the values of forking points $t = 9c$ and $s \in \{10c, 12c, \dots , 18c\}$, varying the number of forked servers $n_0 \in [N]$ in single-forking case and $m_0, m_1$ in two-forking case. 
We observe that for \emph{any} choice of forked servers $m_0, m_1$ and forking point $s > t$, 
the two-forking system achieves better tradeoff points as compared to the single-forking system. 

\begin{figure}[hhh]
\centering
\input{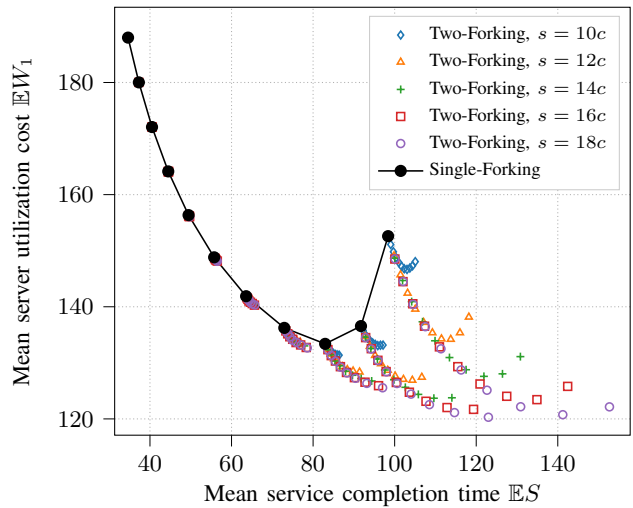}
\caption{ 
This figure illustrates achievable points with shifted exponential execution times for mean server utilization cost and mean service completion time for two-forking with different values of forked servers $m_0$, $m_1$ and $N-m_0-m_1$ at forking points $t_0=0$, $t=9c$ and $s \in \{10c, 12c, \dots , 18c\}$, respectively. 
For comparison, we also plot the tradeoff points for single-forking at forking time $t=9c$ varying the number of initial servers $n_0$. 
%We see that two-forking always achieves better tradeoff points than single-forking in this case.
}
\label{Fig:TradeoffShiftedExpTFPAfter}
\end{figure}

Looking closely, we observe that setting the other forking point $s < t$ in two-forking can achieve better tradeoff points for the mean service completion time below a threshold. 
In contrast, setting the other forking point $s > t$ helps two-forking achieve significantly better tradeoff points when the mean service completion time is above that threshold. 
We also remark that at this threshold, the mean server utilization cost is minimum for single-forking. 
Hence, two-forking can further reduce the mean server utilization cost when compared to single-forking. 
Thus, an investigation of optimal forking points and the number of forked-servers at the different forking points is an important future research direction. 

%-----------------Tradeoff-End-----------------

%-----------------Experimental Setup-Start-----------------
\section{Experiments on Intel DevCloud servers}
\label{sec:IntelDevCloud}

Intel DevCloud is  a cloud computing service made available by Intel~\cite{intel} for several profiles of researchers, students and professional engineers.  
Intel DevCloud is a compute cluster, consisting of multiple servers called compute nodes, storage servers, and the login node. 
Each node has Intel Xeon processor of the Skylake architecture (Intel Xeon Scalable Processors family), an Intel Xeon Gold 6128 CPU, on-platform memory of 192 GB and a Gigabit Ethernet interconnect. 
To maximize the utilization of the compute cluster, one can submit jobs either by running Jupyter Notebook session on one of the compute nodes or by accessing the login node using an SSH client in a text-based terminal to a job queue dedicated to the authenticating account. 
For best performance, we created new environments with core Python 3 using Intel Distribution for Python. 
When a job is submitted to the queue, the scheduler picks the first available compute node for that job. 

%We can reserve a full node for our job and even use multiple nodes at once by submitting scripts to the job queue.

\subsection{Setup}
In our experiment, 
we reserved one node per job and submitted multiple single-node jobs at forking points  
by launching a distributed-memory parallel job explicitly requesting multiple compute nodes, which correspond to the servers on which the job is forked. 
This ensures that all the forked jobs start at the same forking time on the compute nodes to which the jobs are forked. 
Single node jobs are submitted through a job script file using the \emph{qsub} command. 
We submitted a parallel job using the command \emph{mpirun}, which launches the single node job at multiple nodes, by creating MPI program using Message Passing Interface (MPI) library called Intel MPI which is installed on all nodes. 
From here after, in this section,  we refer \emph{parallel job} to \emph{replicated single-node jobs} which is \emph{requesting multiple compute nodes at once}. 

\subsection{Objective}
In this experimental set up, we have $K$ jobs. 
For both single-forking and two-forking, we take each of the $K$ jobs to  be an identical algebraic computation with approximate mean completion time of 600 seconds. 
As a test-case, each algebraic job is taken as the repeated addition of two numbers in a loop, that runs \num{6e9} times. 
%\blue{[Machine learning related job was also tested by Vaneet and it is found more random than my test-case.]}.
This section aims at answering the following questions. 
\begin{enumerate}
\item Given $K$ jobs, $KN$ servers, and a forking mechanism, is it possible to get a tradeoff between the avearge server utilization cost and average service completion time on real cloud setup? 
\item  Are the tradeoff curves for this practical setup qualitatively similar to the one predicted by the analytical study for random execution times modeled to be distributed as a shifted exponential?
\end{enumerate}

%\red{[Since we have access to only 20 compute nodes at a time, I could only implement real time forking for $K=1$.But the following explanation is given in general.]}

\subsection{Experiment}
To cater to this requirement, we initialized a parallel job requesting $n_0$ compute nodes at time $t_0= 0$ for each of the $K$ jobs.
In the single-forking experiment, at  time $t_1$ seconds, we initialized a parallel job requesting $n_1$ compute nodes for each of the unfinished jobs and waited for the completion. 
Similarly, in the two-forking experiment, at times $t_1,t_2$ seconds, we again initialized parallel jobs requesting $n_1,n_2$ compute nodes, respectively, for each of the unfinished jobs and waited for the completion. 
As soon as one of the replica of a $i$th job is finished we logged that time stamp $S_i$ into a log file and killed the other replicas of that particular $i$th job immediately using the \emph{qdel} command. 
% \violet{and noting the completion time of killed jobs same as $S_i$.}
%At the end of \violet{$i$th} job completion \violet{we have the set of logged values $\{\{n_l\}, \{t_l\},S_i\}$ for forking points $l \in {0,1,2, \dots, m}$.
%At the end of one run $i.e.$ at the end of $K$ job completions we have $K$ many those sets.}

Using the observed job completion times $(S_i)$ and forking time and server sequences, 
we compute the two performance metrics: the server utilization cost and the service completion time, by using the equations~\eqref{eqn:ServiceCost}, \eqref{eqn:UtilizationCost}, and \eqref{eqn:CompletionTime} for each run $j \in [J]$ for $J = \num{1e4}$ runs. 
In addition, we also computed the empirical distribution of job completion times, which is plotted in Fig.~\ref{Fig:fitDistIntelDevCloud}. 

\subsection{Evaluations}
We evaluate the single forking setup on Intel DevCloud with $K=10$ parallel tasks and with each task being replicated on total number of $N=12$ servers. 
We ran this experiment $J = \num{1e4}$ times for the given system parameters. 
For the $k$th task in $j$th run, we denote the service completion time  and the server utilization cost by $S^{(j)}_k$ and $W^{(j)}_k$ respectively. 
Hence, we computed the empirical average of service completion time and server utilization costs as 
\meqn{2}{
&\hat{S} \triangleq \frac{1}{J}\sum_{j=1}^J \max_{k \in [K]}S^{(j)}_k,&
&\hat{W} \triangleq \frac{1}{J}\sum_{j=1}^J \frac{1}{K}\sum_{k=1}^KW^{(j)}_k.
}
We plot the empirical average of service completion time $\hat{S}$ in Fig.~\ref{Fig:MeanServiceIntelDevCloudSFPjob600IFT10} and empirical average of server utilization cost $\hat{W}$ in Fig.~\ref{Fig:MeanCostIntelDevCloudSFPjob600IFT10} as a function of initial servers $n_0 \in \{1, \dots, 11\}$ for values of forking times  $t_1 \in \{10, 20, \cdots, 90\}$ seconds. 
In Fig.~\ref{Fig:TradeoffIntelDevCloudSFPjob600IFT10}, we plot the empirical average of service
completion times with respect to empirical average of server utilization cost when $\lambda=1$ for the single forking varying the initial number of servers $n_0 \in \{1, \dots, 11\}$ for the forking times $t_1 \in \{10, 20, \cdots, 90\}$ seconds. 
We also evaluate the two-forking setup on Intel DevCloud with the same parameters.
Given the first forking point at $t$, the second forking point at $s>t$, two-forking sequences $((0, m_0),(t, m_1),(s, N-m_0-m_1))$, the tradeoff is plotted in Fig.~\ref{Fig:TradeoffIntelDevCloudTFPjob600IFT10}.

\subsection{Results}
From Fig.~\ref{Fig:fitDistIntelDevCloud}, 
we observe that the empirical distribution of the job execution times at each node has characteristics of a shifted exponential distribution.  
The empirical distribution has a distinct constant shift corresponding to the start delay, 
and the random part of the job execution time doesn't have long tails.

\begin{figure}[th]
\centering
\scalebox{0.9}{\input{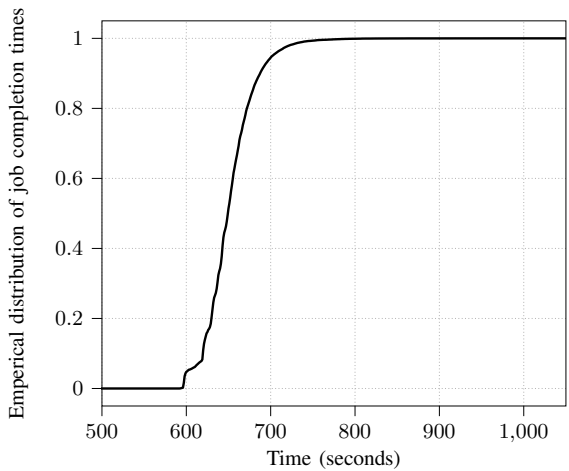}}
\caption{ 
This figure illustrates the empirical distribution of job completion times that are collected during the Intel DevCloud experiments. 
The job here is a algebraic computation of addition of two numbers, repeated \num{6e9} times.}
\label{Fig:fitDistIntelDevCloud}
\end{figure}

We substantiate the analytical results obtained in Theorem~\ref{thm:OptSingleFork} for single-forking, by observing that the mean service completion time $\E{S}$ decreases with increase in initial number of servers $n_0$. 
Further, the tradeoff suggests that the number of initial servers $n_0$ is an important consideration for an efficient system design. %provides a true tradeoff between these metrics. 
The insights obtained in this experiment for two-forking are identical to those obtained from the shifted exponential service distribution. 
%other service time distributions; 
%either the light-tailed distributions such as shifted exponential, or heavy-tailed distributions such as %Pareto and Weibull. 
%We observe
%significant reduction in the mean server utilization cost for
%optimal server initialization.
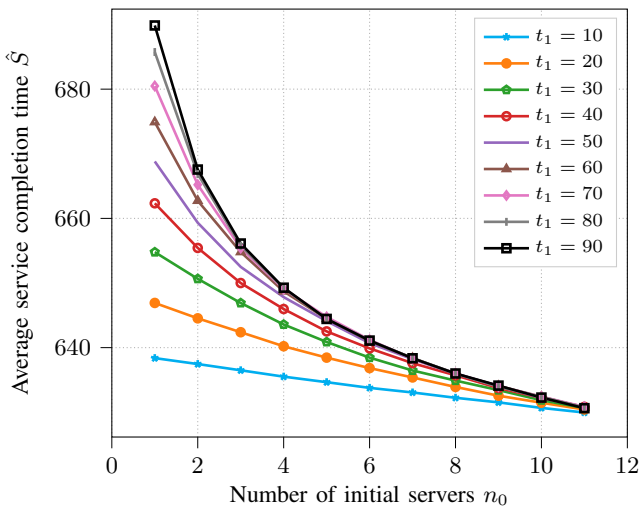
\begin{figure}[hhh]
\centering
% This file was created by matplotlib2tikz v0.7.3.
\begin{tikzpicture}

\definecolor{color0}{rgb}{0.12156862745098,0.466666666666667,0.705882352941177}
\definecolor{color1}{rgb}{1,0.498039215686275,0.0549019607843137}
\definecolor{color2}{rgb}{0.172549019607843,0.627450980392157,0.172549019607843}
\definecolor{color3}{rgb}{0.83921568627451,0.152941176470588,0.156862745098039}
\definecolor{color4}{rgb}{0.580392156862745,0.403921568627451,0.741176470588235}
\definecolor{color5}{rgb}{0.549019607843137,0.337254901960784,0.294117647058824}
\definecolor{color6}{rgb}{0.890196078431372,0.466666666666667,0.76078431372549}
\definecolor{color7}{rgb}{0.737254901960784,0.741176470588235,0.133333333333333}
\definecolor{color8}{rgb}{0.0901960784313725,0.745098039215686,0.811764705882353}

\begin{axis}[
font=\small,
legend cell align={left},
legend style={draw=white!80.0!black, font=\scriptsize},
tick align=outside,
tick pos=left,
x grid style={white!69.01960784313725!black, densely dotted},
xlabel={Number of initial servers $n_0$},
xmajorgrids,
xmin=0, xmax=12,
xtick style={color=black},
y grid style={white!69.01960784313725!black, densely dotted},
ylabel={Average service completion time $\hat{S}$},
ymajorgrids,
ymin=626.167189548699, ymax=692.388325459635,
ytick style={color=black}
]
%\addplot [semithick, color0, mark=star, mark size=1.5, mark options={solid}, line width=1pt]
%table {%
%1 629.4855865056
%2 629.633981678993
%3 629.606294665352
%4 629.624820752053
%5 629.177687639252
%6 629.177241181015
%7 630.048540201591
%8 629.787928963291
%9 629.707706718502
%10 629.720380881359
%11 629.286891053226
%};
%\addlegendentry{t1=0}
\addplot [semithick, cyan, mark=star, mark size=1.5, mark options={solid}, line width=1pt]
table {%
1 638.378658101251
2 637.460747963631
3 636.492769168902
4 635.509995835308
5 634.654544388505
6 633.772520556717
7 633.062316968387
8 632.237930033438
9 631.530410464861
10 630.671655688999
11 629.957007888997
};
\addlegendentry{$t_1=10$}
\addplot [semithick, color1, mark=*, mark size=1.5, mark options={solid}, line width=1pt]
table {%
1 646.914954018928
2 644.53527589139
3 642.384857798996
4 640.211981528136
5 638.452963917803
6 636.835530702397
7 635.385611631356
8 633.915689039942
9 632.560790596517
10 631.47323860812
11 630.426269325911
};
\addlegendentry{$t_1=20$}
\addplot [semithick, color2, mark=pentagon, mark size=1.5, mark options={solid}, line width=1pt]
table {%
1 654.780890738964
2 650.649936166435
3 646.907542452888
4 643.58176011062
5 640.876091818277
6 638.454402620934
7 636.46109371816
8 634.898456778064
9 633.425867317188
10 631.862001073811
11 630.659644858147
};
\addlegendentry{$t_1=30$}
\addplot [semithick, color3, mark=o, mark size=1.5, mark options={solid}, line width=1pt]
table {%
1 662.332681409663
2 655.427566809016
3 649.995944125862
4 645.97415824616
5 642.514121112189
6 639.88070366068
7 637.596766720947
8 635.70968209406
9 633.744290443865
10 632.224203564704
11 630.838035379965
};
\addlegendentry{$t_1=40$}
\addplot [semithick, color4, mark=, mark size=1.5, mark options={solid}, line width=1pt]
table {%
1 668.812468453207
2 659.323833570841
3 652.497636832661
4 647.817737948413
5 644.107329123314
6 640.697020319931
7 638.219186602921
8 635.945001253521
9 634.095201933457
10 632.213263906098
11 630.821864474689
};
\addlegendentry{$t_1=50$}
\addplot [semithick, color5, mark=triangle, mark size=1.5, mark options={solid}, line width=1pt]
table {%
1 674.899067694738
2 662.731592365307
3 654.789744278347
4 648.705937144218
5 644.465188797369
6 641.007159818525
7 638.39102940696
8 635.921757775856
9 634.072804906206
10 632.402629046642
11 630.722563854433
};
\addlegendentry{$t_1=60$}
\addplot [semithick, color6, mark=diamond, mark size=1.5, mark options={solid}, line width=1pt]
table {%
1 680.483991662043
2 665.204865576183
3 655.414538629173
4 649.066472696908
5 644.749471460323
6 641.208277378277
7 638.288406308252
8 636.006499900092
9 633.994862979797
10 632.406054602755
11 630.833990407614
};
\addlegendentry{$t_1=70$}
\addplot [semithick, white!49.80392156862745!black, mark=|, mark size=1.5, mark options={solid}, line width=1pt]
table {%
1 685.745231278547
2 666.870416359735
3 655.774699661082
4 649.178285305811
5 644.469112583095
6 641.102739098247
7 638.384537810041
8 636.008103315065
9 633.949838917115
10 632.366943022823
11 630.732266117769
};
\addlegendentry{$t_1=80$}
\addplot [semithick, black, mark=square, mark size=1.5, mark options={solid}, line width=1pt]
table {%
1 689.853191449812
2 667.558067478614
3 656.111264691326
4 649.273526051673
5 644.44208362044
6 641.092644525419
7 638.352126392742
8 635.998671789611
9 634.122405592344
10 632.2727418556
11 630.63006029068
};
\addlegendentry{$t_1=90$}
\end{axis}

\end{tikzpicture}
\caption{Empirical average of service completion time $\hat{S}$ as a function of initial number of servers $n_0 \in \{1, \dots, 11\}$ for single forking of $K = 10$ parallel tasks at different forking times $t_1 \in \{10, 20, \dots, 90\}$ when jobs are executed on Intel DevCloud.
}
\label{Fig:MeanServiceIntelDevCloudSFPjob600IFT10}
\end{figure}

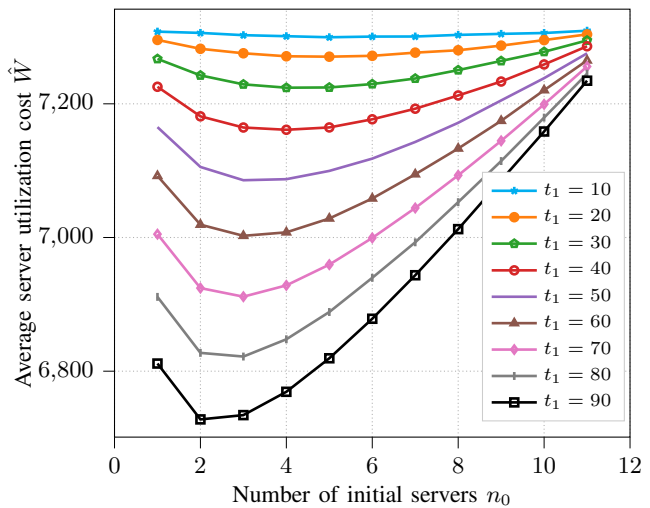
\begin{figure}[hhh]
\centering
% This file was created by matplotlib2tikz v0.7.3.
\begin{tikzpicture}

\definecolor{color0}{rgb}{0.12156862745098,0.466666666666667,0.705882352941177}
\definecolor{color1}{rgb}{1,0.498039215686275,0.0549019607843137}
\definecolor{color2}{rgb}{0.172549019607843,0.627450980392157,0.172549019607843}
\definecolor{color3}{rgb}{0.83921568627451,0.152941176470588,0.156862745098039}
\definecolor{color4}{rgb}{0.580392156862745,0.403921568627451,0.741176470588235}
\definecolor{color5}{rgb}{0.549019607843137,0.337254901960784,0.294117647058824}
\definecolor{color6}{rgb}{0.890196078431372,0.466666666666667,0.76078431372549}
\definecolor{color7}{rgb}{0.737254901960784,0.741176470588235,0.133333333333333}
\definecolor{color8}{rgb}{0.0901960784313725,0.745098039215686,0.811764705882353}

\begin{axis}[
font=\small,
legend cell align={left},
legend style={at={(0.99,0.62)},draw=white!80.0!black, font=\scriptsize},
tick align=outside,
tick pos=left,
x grid style={white!69.01960784313725!black, densely dotted},
xlabel={Number of initial servers $n_0$},
xmajorgrids,
xmin=0, xmax=12,
xtick style={color=black},
y grid style={white!69.01960784313725!black, densely dotted},
ylabel={Average server utilization cost $\hat{W}$},
ymajorgrids,
ymin=6701.37330816515, ymax=7341.59697167051,
ytick style={color=black}
]
%\addplot [semithick, color0, mark=star, mark size=1.5, mark options={solid}, line width=1pt]
%table {%
%1 7311.55097428014
%2 7310.33823237714
%3 7309.46332917071
%4 7310.46665641419
%5 7310.63068895984
%6 7307.84550975519
%7 7312.49589605663
%8 7310.96398216408
%9 7311.27512297238
%10 7312.03263638521
%11 7308.44526885931
%};
%\addlegendentry{t1=0}
\addplot [semithick, cyan, mark=star, mark size=1.5, mark options={solid}, line width=1pt]
table {%
1 7307.9892764678
2 7306.07230436371
3 7302.69302813559
4 7301.15840131874
5 7299.56136304972
6 7300.58589821215
7 7300.69124736053
8 7303.11916501983
9 7304.63148378023
10 7305.94016694025
11 7309.35325389488
};
\addlegendentry{$t_1=10$}
\addplot [semithick, color1, mark=*, mark size=1.5, mark options={solid}, line width=1pt]
table {%
1 7295.58984964736
2 7282.31355027532
3 7275.60009098331
4 7271.13408964781
5 7270.50105096655
6 7271.79885272985
7 7276.56326113669
8 7280.16554598951
9 7287.06208608747
10 7295.35014518034
11 7303.90951062514
};
\addlegendentry{$t_1=20$}
\addplot [semithick, color2, mark=pentagon, mark size=1.5, mark options={solid}, line width=1pt]
table {%
1 7267.21126218898
2 7242.53747196635
3 7229.13545496
4 7224.11093104681
5 7224.47109561416
6 7229.56602110402
7 7237.87885499171
8 7250.4349804771
9 7264.12615521876
10 7277.82228469412
11 7294.83387922979
};
\addlegendentry{$t_1=30$}
\addplot [semithick, color3, mark=o, mark size=1.5, mark options={solid}, line width=1pt]
table {%
1 7225.49952730707
2 7181.26692130312
3 7164.54027415197
4 7161.16283729591
5 7164.66596088697
6 7176.84397149558
7 7192.6846103366
8 7212.49944537383
9 7233.34418277136
10 7258.89553793101
11 7286.08521111305
};
\addlegendentry{$t_1=40$}
\addplot [semithick, color4, mark=, mark size=1.5, mark options={solid}, line width=1pt]
table {%
1 7165.11043650016
2 7105.55174760541
3 7085.68494331664
4 7087.2510642473
5 7099.63305077226
6 7118.01600202755
7 7142.93741193424
8 7171.59002868464
9 7205.05774687772
10 7238.38127646829
11 7275.51622955031
};
\addlegendentry{$t_1=50$}
\addplot [semithick, color5, mark=triangle, mark size=1.5, mark options={solid}, line width=1pt]
table {%
1 7092.17095600215
2 7019.20937717582
3 7002.51060763006
4 7007.70965581953
5 7028.42661767629
6 7058.35660674568
7 7094.57179676663
8 7133.12551048772
9 7174.55982622696
10 7220.36695507708
11 7264.88378500596
};
\addlegendentry{$t_1=60$}
\addplot [semithick, color6, mark=diamond, mark size=1.5, mark options={solid}, line width=1pt]
table {%
1 7004.99955634886
2 6924.33922744118
3 6911.47552632934
4 6928.3704112745
5 6959.45256433905
6 6999.37910391562
7 7044.15660736222
8 7093.05039444712
9 7144.60708369146
10 7199.19513500962
11 7255.84807241737
};
\addlegendentry{$t_1=70$}
\addplot [semithick, white!49.80392156862745!black, mark=|, mark size=1.5, mark options={solid}, line width=1pt]
table {%
1 6911.16257179051
2 6827.40049082258
3 6821.89822626605
4 6847.74102167389
5 6888.54380907057
6 6939.7151166973
7 6992.82682945825
8 7052.87828554049
9 7114.37951805918
10 7179.36039554578
11 7245.02668916475
};
\addlegendentry{$t_1=80$}
\addplot [semithick, black, mark=square, mark size=1.5, mark options={solid}, line width=1pt]
table {%
1 6811.44200714113
2 6727.8719920613
3 6734.29896812102
4 6769.17575958707
5 6819.31446597747
6 6878.44263415537
7 6943.54518419414
8 7012.52238048582
9 7085.67452053799
10 7158.58063462978
11 7234.56531322731
};
\addlegendentry{$t_1=90$}
\end{axis}

\end{tikzpicture}
\caption{Empirical and task average of server utilization cost $\hat{W}$ as a function of initial number of servers $n_0 \in \{1, \dots, 11\}$ for single forking of $K = 10$ parallel tasks at different forking times $t_1 \in \{10, 20, \dots, 90\}$ when jobs are executed on Intel DevCloud.
}
\label{Fig:MeanCostIntelDevCloudSFPjob600IFT10}
\end{figure}

\begin{figure}[hhh]
\centering
% This file was created by matplotlib2tikz v0.7.3.
\begin{tikzpicture}

\definecolor{color0}{rgb}{0.12156862745098,0.466666666666667,0.705882352941177}
\definecolor{color1}{rgb}{1,0.498039215686275,0.0549019607843137}
\definecolor{color2}{rgb}{0.172549019607843,0.627450980392157,0.172549019607843}
\definecolor{color3}{rgb}{0.83921568627451,0.152941176470588,0.156862745098039}
\definecolor{color4}{rgb}{0.580392156862745,0.403921568627451,0.741176470588235}
\definecolor{color5}{rgb}{0.549019607843137,0.337254901960784,0.294117647058824}
\definecolor{color6}{rgb}{0.890196078431372,0.466666666666667,0.76078431372549}
\definecolor{color7}{rgb}{0.737254901960784,0.741176470588235,0.133333333333333}
\definecolor{color8}{rgb}{0.0901960784313725,0.745098039215686,0.811764705882353}

\begin{axis}[
font=\small,
legend cell align={left},
legend style={draw=white!80.0!black, font=\scriptsize},
tick align=outside,
tick pos=left,
x grid style={white!69.01960784313725!black, densely dotted},
xlabel={Average service completion time $\hat{S}$},
xmajorgrids,
xmin=626.167189548699, xmax=692.388325459635,
xtick style={color=black},
y grid style={white!69.01960784313725!black, densely dotted},
ylabel={Average server utilization cost $\hat{W}$},
ymajorgrids,
ymin=6701.37330816515, ymax=7341.59697167051,
ytick style={color=black}
]
%\addplot [semithick, color0, mark=*, mark size=3, mark options={solid}]
%table {%
%629.4855865056 7311.55097428014
%629.633981678993 7310.33823237714
%629.606294665352 7309.46332917071
%629.624820752053 7310.46665641419
%629.177687639252 7310.63068895984
%629.177241181015 7307.84550975519
%630.048540201591 7312.49589605663
%629.787928963291 7310.96398216408
%629.707706718502 7311.27512297238
%629.720380881359 7312.03263638521
%629.286891053226 7308.44526885931
%};
%\addlegendentry{t1=0}
\addplot [semithick, cyan, mark=star, mark size=1, mark options={solid}, line width=1pt]
table {%
638.378658101251 7307.9892764678
637.460747963631 7306.07230436371
636.492769168902 7302.69302813559
635.509995835308 7301.15840131874
634.654544388505 7299.56136304972
633.772520556717 7300.58589821215
633.062316968387 7300.69124736053
632.237930033438 7303.11916501983
631.530410464861 7304.63148378023
630.671655688999 7305.94016694025
629.957007888997 7309.35325389488
};
\addlegendentry{$t_1=10$}
\addplot [semithick, color1, mark=*, mark size=1, mark options={solid}, line width=1pt]
table {%
646.914954018928 7295.58984964736
644.53527589139 7282.31355027532
642.384857798996 7275.60009098331
640.211981528136 7271.13408964781
638.452963917803 7270.50105096655
636.835530702397 7271.79885272985
635.385611631356 7276.56326113669
633.915689039942 7280.16554598951
632.560790596517 7287.06208608747
631.47323860812 7295.35014518034
630.426269325911 7303.90951062514
};
\addlegendentry{$t_1=20$}
\addplot [semithick, color2, mark=pentagon, mark size=1, mark options={solid}, line width=1pt]
table {%
654.780890738964 7267.21126218898
650.649936166435 7242.53747196635
646.907542452888 7229.13545496
643.58176011062 7224.11093104681
640.876091818277 7224.47109561416
638.454402620934 7229.56602110402
636.46109371816 7237.87885499171
634.898456778064 7250.4349804771
633.425867317188 7264.12615521876
631.862001073811 7277.82228469412
630.659644858147 7294.83387922979
};
\addlegendentry{$t_1=30$}
\addplot [semithick, color3, mark=o, mark size=1, mark options={solid}, line width=1pt]
table {%
662.332681409663 7225.49952730707
655.427566809016 7181.26692130312
649.995944125862 7164.54027415197
645.97415824616 7161.16283729591
642.514121112189 7164.66596088697
639.88070366068 7176.84397149558
637.596766720947 7192.6846103366
635.70968209406 7212.49944537383
633.744290443865 7233.34418277136
632.224203564704 7258.89553793101
630.838035379965 7286.08521111305
};
\addlegendentry{$t_1=40$}
\addplot [semithick, color4, mark=*, mark size=1, mark options={solid}, line width=1pt]
table {%
668.812468453207 7165.11043650016
659.323833570841 7105.55174760541
652.497636832661 7085.68494331664
647.817737948413 7087.2510642473
644.107329123314 7099.63305077226
640.697020319931 7118.01600202755
638.219186602921 7142.93741193424
635.945001253521 7171.59002868464
634.095201933457 7205.05774687772
632.213263906098 7238.38127646829
630.821864474689 7275.51622955031
};
\addlegendentry{$t_1=50$}
\addplot [semithick, color5, mark=, mark size=1, mark options={solid}, line width=1pt]
table {%
674.899067694738 7092.17095600215
662.731592365307 7019.20937717582
654.789744278347 7002.51060763006
648.705937144218 7007.70965581953
644.465188797369 7028.42661767629
641.007159818525 7058.35660674568
638.39102940696 7094.57179676663
635.921757775856 7133.12551048772
634.072804906206 7174.55982622696
632.402629046642 7220.36695507708
630.722563854433 7264.88378500596
};
\addlegendentry{$t_1=60$}
\addplot [semithick, color6, mark=triangle, mark size=1, mark options={solid}, line width=1pt]
table {%
680.483991662043 7004.99955634886
665.204865576183 6924.33922744118
655.414538629173 6911.47552632934
649.066472696908 6928.3704112745
644.749471460323 6959.45256433905
641.208277378277 6999.37910391562
638.288406308252 7044.15660736222
636.006499900092 7093.05039444712
633.994862979797 7144.60708369146
632.406054602755 7199.19513500962
630.833990407614 7255.84807241737
};
\addlegendentry{$t_1=70$}
\addplot [semithick, white!49.80392156862745!black, mark=diamond, mark size=1, mark options={solid}, line width=1pt]
table {%
685.745231278547 6911.16257179051
666.870416359735 6827.40049082258
655.774699661082 6821.89822626605
649.178285305811 6847.74102167389
644.469112583095 6888.54380907057
641.102739098247 6939.7151166973
638.384537810041 6992.82682945825
636.008103315065 7052.87828554049
633.949838917115 7114.37951805918
632.366943022823 7179.36039554578
630.732266117769 7245.02668916475
};
\addlegendentry{$t_1=80$}
\addplot [semithick, black, mark=square, mark size=1, mark options={solid}, line width=1pt]
table {%
689.853191449812 6811.44200714113
667.558067478614 6727.8719920613
656.111264691326 6734.29896812102
649.273526051673 6769.17575958707
644.44208362044 6819.31446597747
641.092644525419 6878.44263415537
638.352126392742 6943.54518419414
635.998671789611 7012.52238048582
634.122405592344 7085.67452053799
632.2727418556 7158.58063462978
630.63006029068 7234.56531322731
};
\addlegendentry{$t_1=90$}
\end{axis}

\end{tikzpicture}
\caption{
Empirical and task average of server utilization cost $\hat{W}$ as a function of empirical average of service completion time $\hat{S}$ by varying the number of initial servers $n_0 \in \{1, \dots, 11\}$ for single forking of $K = 10$ parallel tasks at different forking times $t_1$  when jobs are executed on Intel DevCloud.
}
\label{Fig:TradeoffIntelDevCloudSFPjob600IFT10}
\end{figure}
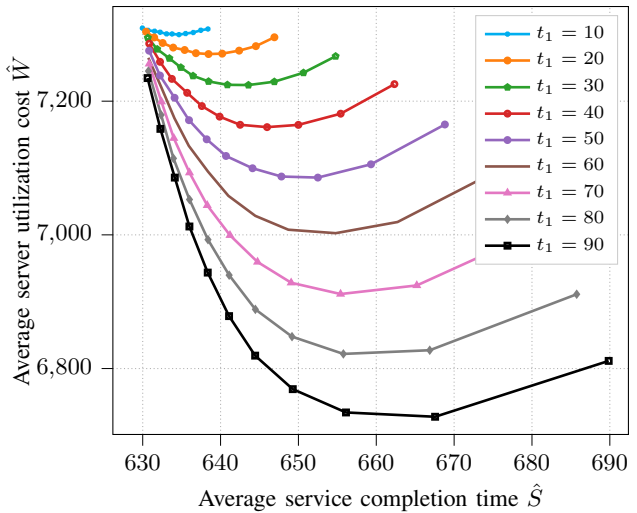

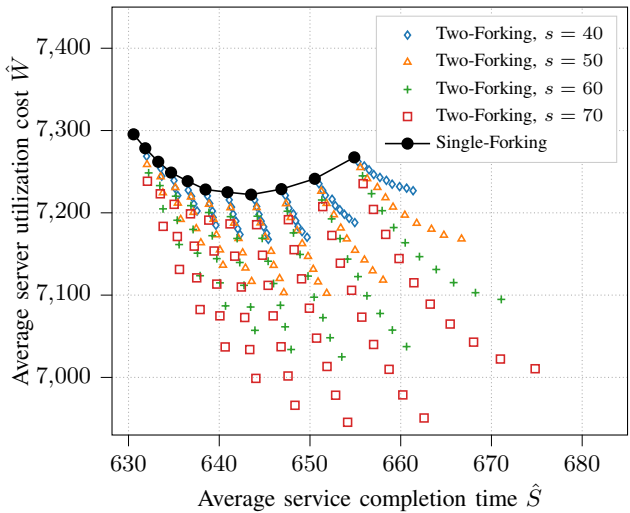
\begin{figure}[hhh]
\centering
% This file was created by matplotlib2tikz v0.7.3.
\begin{tikzpicture}

\definecolor{color0}{rgb}{0.12156862745098,0.466666666666667,0.705882352941177}
\definecolor{color1}{rgb}{1,0.498039215686275,0.0549019607843137}
\definecolor{color2}{rgb}{0.172549019607843,0.627450980392157,0.172549019607843}
\definecolor{color3}{rgb}{0.83921568627451,0.152941176470588,0.156862745098039}
\definecolor{color4}{rgb}{0.580392156862745,0.403921568627451,0.741176470588235}

\begin{axis}[
font=\small,
legend cell align={left},
legend style={draw=white!80.0!black, font=\scriptsize},
tick align=outside,
tick pos=left,
x grid style={white!69.01960784313725!black, densely dotted},
xlabel={Average service completion time $\hat{S}$},
xmajorgrids,
xmin=628.180830227019, xmax=685,
xtick style={color=black},
y grid style={white!69.01960784313725!black, densely dotted},
ylabel={Average server utilization cost $\hat{W}$},
ymajorgrids,
ymin=6930, ymax=7450,
ytick style={color=black}
]
\addplot [semithick, color0, mark=diamond, mark size=1.5, mark options={solid}, only marks]
table {%
631.998068711223 7268.70724178534
633.547875248274 7254.11634699918
633.661052156203 7244.53828275843
635.01313748663 7239.3245968806
635.127633388791 7229.96504477318
635.430265252223 7220.97978009435
636.582156185993 7227.49948918544
636.948925378616 7220.01750509526
637.201100765683 7210.78074747088
637.555521030801 7202.24514464579
638.698303612767 7220.5128531866
638.947775976329 7211.39810296894
639.248115182594 7202.09518285697
639.326178645859 7194.32661105631
639.632311180294 7184.94555462526
641.10063123057 7215.71726293455
641.35150087591 7206.92498171507
641.528042654749 7198.40294192708
641.828671275151 7189.88210506727
642.043312992962 7180.50034360049
642.287621580695 7173.61820504659
643.915271521803 7215.65456292849
644.126622945895 7207.60797074336
644.384333392429 7197.93987291567
644.61900121776 7189.8872134322
644.899842186307 7182.91190360005
645.229148474105 7174.92507458789
645.407842980821 7167.72888720377
647.286952811038 7221.12489341405
647.522648905993 7213.12717563727
647.690202525458 7205.52452262451
648.169350954884 7198.66416574494
648.429426261805 7190.2787957014
648.682424860698 7183.4829217986
649.131251974839 7177.16669333035
649.683115280462 7170.30832668254
650.993447849442 7235.24626610889
651.401147248937 7227.31696645488
651.702535992513 7221.2636686503
652.437061025711 7216.47878161329
652.788087671708 7208.78731030326
653.289920639898 7203.71459098399
653.787552628409 7198.23770499978
654.314916016965 7193.30028073961
654.939351286743 7188.16205801411
655.294913166469 7262.03743492641
655.981907800922 7256.94244563207
656.511753953945 7252.22213999679
657.027434616602 7246.47601655434
657.710780003251 7242.97971745098
658.419961485911 7239.43213666639
659.058119790138 7234.78287821704
659.874874767337 7231.84191603338
660.652974965594 7229.97611963108
661.367347951027 7226.86831851319
};
\addlegendentry{Two-Forking, $s=40$}
\addplot [semithick, color1, mark=triangle, mark size=1.5, mark options={solid}, only marks]
table {%
632.035911977633 7258.71222430443
633.53748147722 7243.87098245753
633.786273064202 7224.38641599602
635.163037867853 7230.72347057505
635.432960485213 7211.81491579798
635.750371688489 7192.00314424099
636.876325210081 7219.24867636524
637.157239501322 7200.54106658691
637.467062782933 7180.86387397589
637.971959035016 7163.66502061524
638.866246593555 7210.43432508498
639.368015859799 7192.12095359307
639.687195351003 7173.28295026276
640.078568050524 7154.96667370052
640.412382903233 7136.3591927072
641.231422966218 7206.22658028821
641.632666176943 7187.43337652591
642.124109742804 7168.82837248258
642.580847782349 7151.49764713034
642.89139832111 7132.97170391833
643.567551794218 7116.69338310717
644.123463456521 7206.40622587513
644.543006259573 7187.8174238112
645.069836468582 7170.56797315389
645.713087805102 7153.25117619514
645.99469333673 7136.15813925332
646.653093765431 7120.21317759751
647.118496937812 7103.10609012653
647.356050039631 7211.15754392975
647.861104980843 7194.2705423229
648.580266580085 7178.46851663707
649.140547798689 7162.56718848658
649.782953042908 7146.99341456765
650.361435394088 7131.43963152543
650.888334697372 7116.09020178425
651.84281169979 7102.2531641866
651.245005882279 7226.25441944186
651.824231561946 7211.42517776496
652.634180489027 7197.65070114808
653.390058514378 7181.59605833268
654.059125472939 7168.32215459548
654.965732023032 7154.48404984606
656.004931479558 7142.78808807812
656.905069177307 7129.5705181833
658.058272966604 7118.26512836633
655.5602887963 7255.06915702618
656.300375285237 7241.51431059263
657.426883713111 7230.84544583028
658.320304695356 7218.38535206483
659.155063507154 7207.01124809102
660.453957456327 7197.21760890104
661.762555401291 7187.30661594633
663.389201984141 7180.11300655011
664.75973795422 7172.83685554474
666.703019661435 7168.29075104893
};
\addlegendentry{Two-Forking, $s=50$}
\addplot [semithick, color2, mark=+, mark size=1.5, mark options={solid}, only marks]
table {%
632.217543233144 7248.51464614548
633.45158376227 7233.08568843693
633.803945123972 7204.82331662645
635.204207398658 7220.19873338795
635.372411500497 7191.00096020642
635.579667388019 7161.45187059232
636.884279078307 7208.75878829764
637.119639197257 7179.80178791585
637.604831957662 7150.95894799583
637.890056130609 7123.57518085821
638.859925849191 7200.34689944118
639.233511072164 7172.18958639783
639.727001020912 7144.30092238535
640.079164247552 7114.89078951509
640.690751190864 7086.94412307359
641.249557519963 7195.3570486076
641.974154055615 7168.83818689951
642.259782838392 7139.63699229659
642.729968040574 7111.91010752889
643.456814562513 7085.54171570834
643.93242430873 7057.19587344781
644.202614562451 7195.90316418757
644.735839099729 7169.03700345437
645.386902963103 7141.17829848955
645.984014988629 7114.10005623031
646.744331899559 7087.63799085531
647.262078681584 7061.49252424472
647.923013890803 7033.88759096032
647.523060739038 7201.94863982279
648.182514774704 7175.63239305264
648.906099280365 7149.42340153776
649.783156635786 7123.10757783109
650.456545305659 7097.46318150296
651.426212351303 7072.58166781788
652.219878747842 7048.20378515459
653.505421361993 7024.97596565525
651.309883037197 7215.59165598134
652.387284585914 7192.66847900393
653.317233522513 7168.6860611022
654.139648738339 7143.75784708818
655.292398005034 7122.50422442555
656.275751899922 7099.18478438073
657.732772217314 7077.683243156
659.079781165742 7057.50793872443
660.644985389455 7037.4592816853
655.794409433356 7244.91721897464
656.785086094613 7223.30512142815
657.952903213203 7202.60019318215
659.226078776643 7182.11555595202
660.496289891972 7163.6572863429
662.066258623153 7146.76251043277
663.91931368209 7131.15583819237
665.832813852617 7115.20896516187
668.239153532153 7102.95671098752
671.085130514408 7094.79262055744
};
\addlegendentry{Two-Forking, $s=60$}
\addplot [semithick, color3, mark=square, mark size=1.5, mark options={solid}, only marks]
table {%
632.083046015167 7238.3813889576
633.501727893165 7223.11277663234
633.818227745472 7183.55669583334
635.032871825905 7210.32509039057
635.381708354339 7171.20818199509
635.615810584855 7131.2313112741
636.8282612315 7198.77212270851
637.24406017179 7159.54113905204
637.517714839491 7120.990642569
637.890776341838 7082.46553307423
638.848483406259 7191.14297874738
639.431714823593 7153.61725262403
639.729504022582 7113.40078666086
640.071139990864 7074.8247566252
640.651575445645 7037.03116968923
641.188020939984 7186.55557532964
641.705364340608 7147.33053668591
642.433639016106 7109.85875427656
642.80725367479 7072.79311069538
643.360726710975 7033.82068214279
644.034871142081 6998.82799657991
644.118050042659 7185.58469708512
644.767896808136 7148.48206193916
645.387584411541 7111.87618113565
645.942389597423 7074.78455520229
646.81474981473 7037.26726729919
647.580750529162 7001.73824862824
648.346597859464 6966.12319891726
647.619685831751 7191.66297839187
648.251323386547 7155.20916105185
649.071568101308 7119.79270087073
649.932716382287 7084.10441168485
650.724670479235 7047.7352968796
651.879065320959 7013.2816129291
652.826008038561 6978.36579203849
654.147649930845 6945.46590801127
651.408644908105 7207.54505856264
652.414193766054 7172.53762588941
653.344963423878 7138.78872284952
654.586810726044 7105.85846809818
655.711125859957 7073.24012137122
657.010377677017 7040.00510070707
658.719609185619 7009.81734655119
660.231885328083 6978.74808955342
662.575237418178 6950.70209682196
655.857569058385 7235.45287709045
656.993744647845 7204.11358671784
658.344475752543 7173.77825584252
659.816548109961 7144.43660590283
661.453122782115 7115.01346573785
663.251646791063 7089.04921994621
665.451300439573 7064.86146274539
668.030999121284 7042.86723127866
670.988072368322 7022.33652405165
674.817031836091 7010.54452130585
};
\addlegendentry{Two-Forking, $s=70$}
\addplot [semithick, color=black, mark=*, mark size=2]
table {%
654.883289294144 7267.41295229045
650.53098823605 7241.34793184904
646.848890757193 7228.78383198056
643.511379070829 7222.13724671847
640.904978773837 7225.02144258951
638.487207859213 7228.3129233973
636.519432038631 7238.44414185552
634.68847334505 7248.89355658449
633.275064895989 7262.00821191369
631.847422466753 7278.44423526674
630.555630681685 7295.41227755153
};
\addlegendentry{Single-Forking};
\end{axis}

\end{tikzpicture}
\caption{
This figure illustrates achievable points on Intel DevCloud cluster for empirical and task averaged server utilization cost and empirical average of service completion time for two-forking with different values of forked servers $m_0$, $m_1$ and $N-m_0-m_1$ at forking points $t_0=0$, $t=30$ and $s \in \{40, 50, 60, 70\}$, respectively. 
For comparison, we also plot the tradeoff points for single-forking at forking time $t=30$ varying the number of initial servers $n_0$. 
}
\label{Fig:TradeoffIntelDevCloudTFPjob600IFT10}
\end{figure}

%-----------------Experimental Setup-End-----------------

%-----------------Conclusion-Start-----------------
\section{Conclusions and Future Work}
\label{sec:concl}
This paper considers a multi-fork analysis for running cloud computing jobs. 
We analytically computed the mean service completion time and the mean server utilization cost for multiple computation jobs, 
when the job execution time at each server is assumed to be \emph{i.i.d.} with a shifted exponential distribution. 
We show that having multiple forking points for speculative execution of jobs provide significantly improved tradeoff between the  two performance metrics. 
As a special case, we also show that starting with a single server in speculative execution of tasks is sub-optimal. 
This paper considers replication as a strategy for speculative execution. 

We empirically verified that the insights derived from the shifted exponential distribution continue to hold when the job execution times at individual servers have heavy-tailed distributions such as Pareto and Weibull. 
We also conducted this study on a real compute cluster, and verified that the empirical distribution of the job execution time has a constant shift and light tails. 
This implies that a shifted exponential distribution capture the service time well in real compute clusters. 
As a result, the insights derived from the analytical study continue to hold on the studied compute cluster as well. 
%From our study, it appears that the shift of the distribution plays a major role in the tradeoff between two performance metrics under consideration. 
%\violet{and also verified this tradeoff for both single and multi-forking on a real compute cluster}.

Recently, coding-theory-inspired approaches have been
applied to mitigate the effect of straggling~\cite{Tandon2017ICML, Aktas2018Sigmetrics, Ye2018ICML}. Single fork analysis with coding has been studied in~\cite{Aktas2018Sigmetrics}, 
where a single copy of the job is started at $t=0$. 
Considering multi-fork analysis with such general coding flexibilities remains an important future direction. 
We hope that the framework provided in this article can be utilized to quantify the performance gains of multi-forked coded replicas.

Further, this work considers the performance metrics for a single job system.  
Analysis of overall completion time of jobs when there is a sequence of job arrivals is an open problem. 
When the system load is low, the request queue has a single job with high probability, 
and our work provides insights for the queueing system in this regime. 

%For a medium load, we observe that lower server utilization cost implies that server load is low, 
%and we expect the queueing delay and mean waiting time to be small in this regime.  
%On the other hand, larger mean service completion time points to a delayed start. 
%In this case, the queueing delays are expected to be larger. 
%The adaptive scheduling strategy that minimizes the queueing delays would require a careful balancing of the two performance metrics. 

%-----------------Conclusion-End-----------------

%\newpage
%\clearpage
%\setcounter{page}{1}

%-----------------Appendix-Start-----------------

\begin{appendices}
%\section{Mean server utilization}
%\label{sec:Uitlization}
\section{Proof of Theorem \ref{thm:Monotone}}\label{apdx:proof:stoc}
We know that $\E S = \int_{u \in \R_+}\bar{F}_S(u)du$, and we will show the monotonicity on the complementary distribution function of service times in the two cases. 
Since $\bar{F}_S = 1- (1-\bar{F}_{S_1})^K$ is a monotonically increasing function of $\bar{F}_{S_1}$, 
it suffices to show the monotonicity of the complementary distribution function of service times for single task in the two cases. 
\begin{enumerate}[(i)]
	\item 
	From the hypothesis on two forking time sequences $t, t'$, 
	and denoting $t'_{-1} = t_0 = 0$, we can write 
	%see that any time interval $I_i(t) = [t_i, t_{i+1})$ can be partitioned into the sets 
	\EQ{
		I_i(t) %= [t_i,t'_i\wedge t_{i+1})\cup[t'_i, t_{i+1})
		\subseteq \cup_{j=0}^iI_j(t').
	}
	That is, for any time $u \in I_i(t) = [t_i, t_{i+1})$, we can write $t \in I_j(t')$ for some $j \le i$. 
	Let $j(u)$ and $i(u)$ denote the forking stage at time $u$ for the two forking sequences $t', t$ respectively, then $j(u) \le i(u)$ for all times $u$.
	Furthermore, since $t_\ell \le t'_\ell$ for each stage $\ell$, and the complementary distribution function $\bar{F}$ is non-increasing, we have $\bar{F}(u-t_\ell) \le \bar{F}(u-t'_\ell)$ for $u \ge t'_\ell$. 
	Therefore, we can write the following inequality for the complementary distribution for the service time of a single task for two forking instant sequences as 
	\EQ{
		\bar{F}_{S_1^{(t)}}(u) = \prod_{\ell=0}^{i(u)}\bar{F}(u-t_\ell)^{n_\ell} \le  \prod_{\ell=0}^{j(u)}\bar{F}(u-t'_\ell)^{n_\ell} = \bar{F}_{S_1^{(t')}}(u). 
	}
	\item For any time $u \in I_i$, from the hypothesis $n'_j \le n_j$, it follows that  
	\EQ{
		\bar{F}_{S_1^{(n)}}(u) = \prod_{\ell=0}^{i}\bar{F}(u-t_\ell)^{n_\ell} \le  \prod_{\ell=0}^{i}\bar{F}(u-t_\ell)^{n'_\ell} = \bar{F}_{S_1^{(n')}}(u).
	}
\end{enumerate}

\section{Proof of Theorem \ref{thm:OptSingleFork}}\label{apdx:OptSingleFork}
For the proof of this theorem, we utilize Proposition~\ref{prop:PartialDer} for the partial derivatives of two performance metrics, with respect to initial fraction of servers and normalized forking time for single forking. 
\begin{enumerate}[(i)]
	\item %From Proposition~\ref{prop:PartialDer}, 
	Since $\pd{\E S}{t_1} = \frac{1}{c}\pd{\E S}{u} > 0$ for forking time $t_1 > 0$, the result follows.
	\item 
	We can write the derivative of mean service completion time with respect to initial number of servers as $\pd{\E S}{n_0} = \frac{1}{N}\pd{\E S}{x}$. 
	We write the scaled partial derivative of mean service completion time with respect to initial fraction of servers $x$ as a function of number of parallel tasks as
	\EQ{
		h(K) \triangleq \frac{1}{c}\pd{\E S}{x}(K) + u(1-(1-e^{-\alpha xu})^K).
	}
	We will show that $h(K) \le 0$ for all $K \ge 1$, which would give us the result. 
	From the partial derivative of mean service completion time with respect to initial fraction of servers $x$, we get 
	\EQ{
		h(K) = -\sum_{k=1}^K\frac{(1-e^{-\alpha xu})^k}{k\alpha x^2} + \frac{u}{x}(1-(1-e^{-\alpha xu})^K).
		%\\&=-\sum_{k=1}^K\frac{1}{k}\left[\frac{(1-e^{-\alpha xu})^k}{\alpha x^2} + \frac{u}{x}(1-e^{-\alpha x u})^K - \frac{u}{x}\right]\\
		%&= -\sum_{k=1}^K\frac{(1-e^{-\alpha xu})^{k-1}}{k\alpha x^2}\left(1-(1+ k\alpha x u)e^{-\alpha xu}\right).
	}
	Since $e^x \ge 1 +x$ for all $x \in \R$, we see that $h(1) = -\frac{(1-(1+\alpha xu)e^{-\alpha xu})}{\alpha x^2} < 0$ for forking instant $u > 0$. 
	Furthermore, we can write 
	\EQ{
		\frac{h(K) - h(K-1)}{(1-e^{-\alpha xu})^{K-1}} = -\frac{(1-e^{-\alpha xu})}{K\alpha x^2} + \frac{u}{x}e^{-\alpha xu}
	}
	We see that the right hand side of the above equation is an increasing function of $K$. 
	Let $K^\ast = \inf\set{K \in \N: h(K) - h(K-1) \ge 0}$. 
	Then $h(K) - h(K-1) < 0$ for all $K < K^\ast$, and $h(K)-h(K-1) \ge 0$ for all $k \ge K^\ast$. 
	It follows then $h(K) \le h(1)\vee h(\infty)$. 
	We observe that the limiting value 
	\eq{
		&\lim_{K \to \infty}h(K) = -\sum_{k \in \N}\frac{(1-e^{-\alpha xu})^k}{k\alpha x^2} + \frac{u}{x}\\
		&=\frac{u}{x}+\frac{1}{\alpha x^2}\ln(e^{-\alpha xu}) = \frac{u}{x} - \frac{u}{x} = 0
	}
	\item %From Proposition~\ref{prop:PartialDer}, 
	Since $\pd{\E W_1}{t_1} = \frac{1}{c}\pd{\E W_1}{u} < 0$ for forking time $t_1 > 0$, the result follows.
	\item 
	%For simplicity, we will use the notation 
	We show in Appendix~\ref{sec:ConvexUtilization}, 
	that the mean server utilization $\E W_1$ is a strict convex function of initial server fraction $x$. 
	Hence, it follows that there exists a unique optimal initial fraction $x^\ast \in [\frac{1}{N}, 1]$ that minimizes the mean server utilization cost. 
	Convexity of mean server utilization cost with respect to initial fraction of servers $x$ implies that $\pd{\E W_1}{x}$ is increasing in $x$ for any fixed forking instant $u > 0$. 
	Next, we show in Appendix~\ref{sec:PositiveUtilizationUnit} that the $\left.\pd{\E W_1}{x}\right\rvert_{x=1} > 0$ for all forking instants $u > 0$. 
	In Appendix~\ref{sec:SignUtilizationSingle},  we show that partial derivative of mean server utilization cost at initial fraction of servers $x = \frac{1}{N}$, 
	defined as a function of normalized forking time $u$ 
	\EQ{
		g(u) \triangleq \left(\frac{\mu}{\alpha\lambda}\right)\left.\pd{\E W_1}{x}\right\rvert_{x = \frac{1}{N}}
	}
	can take both positive and negative values depending on the normalized forking time $u$.  
	In particular, we show that 
	\EQ{
		\sgn\left(g(u) \right) = \SetIn{u > v_3} - \SetIn{u < v_3}.
	}
	
	Therefore, for $u \ge v_3$, we have $g(u) > 0$, and hence the mean server utilization cost $\E W_1$ is an increasing function of initial server fraction $x \in [\frac{1}{N}, 1]$. 
	It follows that the mean server utilization cost $\E W_1$ is minimized at the unique optimal initial fraction $x^\ast = \frac{1}{N}$.  
	In the other case when  $u < v_3$, we have $g(u) < 0$, 
	and the mean server utilization cost is minimized at the unique optimal initial server fraction $x^\ast \in [\frac{1}{N}, 1]$ is given by the solution of the implicit equation~\eqref{eqn:ImplicitFraction} such that $\left.\pd{\E W_1}{x}\right\rvert_{x=x^\ast} = 0$. 
\end{enumerate}

\section{Convexity of mean server utilization with initial fraction of servers}
\label{sec:ConvexUtilization}
We show the strict convexity of mean server utilization with respect to initial fraction of servers, 
by showing that the second partial derivative is always positive. 
%We can compute the second partial derivative of mean server utilization for single task with respect to fraction of initial servers for $u > 1$ as 
For $u > 1$, we can compute this second partial derivative $\frac{\partial^2\E W_1}{\partial x^2}$ to be equal to
\eq{
&= \frac{2\lambda e^{-\alpha xu}}{x^3\mu}\Big[
\alpha x(u-1)(e^{\alpha  x}-1-\alpha x)
+ \frac{\alpha^2x^3}{2}(2u-1)\\
&+\frac{\alpha^2x^2(1-x)}{2}(u-1)^2(e^{\alpha  x}-1) 
+e^{\alpha  x}- 1 -\alpha x -\frac{\alpha^2x^2}{2}
\Big].
}
The second partial derivative of mean server utilization $\frac{\partial^2\E W_1}{\partial x^2}$ for single task with respect to fraction of initial servers for normalized forking time $u < 1$ is equal to
\EQ{
\frac{2\lambda e^{-\alpha xu}}{x^3\mu}\left[e^{\alpha xu}-(1+\alpha xu + \frac{\alpha^2x^2u^2}{2})+ \frac{\alpha^2x^3u^2}{2}\right].
}
From the Taylor series expansion of exponential function $e^{\alpha  x}$, 
we obtain that $\frac{\partial^2\E W_1}{\partial x^2} > 0$ for all initial fraction $x \in [1/N, 1]$ and normalized forking time $u > 0$. 

\section{Positivity of partial derivative of mean server utilization with initial fraction of servers at unit fraction}
\label{sec:PositiveUtilizationUnit}

We will show that the partial derivative $\left.\pd{\E W_1}{x}\right\rvert_{x=1} > 0$ %at initial fraction $x = 1$ 
for all forking instants $u > 0$. 
To this end, we observe that at $x =1$, we can write 
\EQ{
\left.\pd{\E W_1}{x}\right\rvert_{x=1} = \begin{cases}
\frac{\alpha\lambda}{\mu}\left[1 - \frac{e^{-\alpha u}}{\alpha}(e^{\alpha}-1)\right],& u \ge 1\\
\frac{\lambda}{\mu}\left[e^{-\alpha u}-1+ \alpha u \right], & u \in (0,1].
\end{cases}
}
We observe that $\pd{\E W_1}{x} \ge 0$ at $x =1$ for $u \in (0, 1]$, 
since $e^{-\alpha u} \ge 1 - \alpha u$. 
For $u \ge 1$, we observe that $e^{-\alpha u} \le e^{-\alpha}$ and hence  
\EQ{
1 - \frac{e^{-\alpha u}}{\alpha}(e^{\alpha}-1) \ge 1-\frac{1}{\alpha} + \frac{e^{-\alpha}}{\alpha} \ge 0,
}
where the last inequality follows from the fact that $e^{-\alpha} \ge 1 - \alpha$. 
Therefore, we deduce that $\pd{\E W_1}{x} \ge 0$ at $x =1$ for $u  \ge 1$.

\section{Partial derivative of mean server utilization with initial fraction of servers at smallest fraction}
\label{sec:SignUtilizationSingle} 

We will show that the sign of $g(u) = \left.\pd{\E W_1}{x}\right\rvert_{x = \frac{1}{N}}$ 
depends on the normalized forking time $u$, the constant $c\mu$ and the number of servers $N$.  
Recall that $\frac{\alpha}{N} = \mu c$, 
and we observe that for $u \ge 1$ the function $g(u)$ equals
\EQ{
1 - e^{-c\mu u}\left((N-1)(u-1)+\frac{N}{c\mu}\right)(e^{c\mu}-1) + (N-1)e^{-c\mu u}. 
%1 - e^{-c\mu u}\left((N-1)((u-1)e^{c\mu}-u)+ \frac{N}{c\mu}(e^{c\mu}-1)\right). 
}
Further, we have for $u \le 1$
\EQ{
g(u) = u + (N-1)ue^{-c\mu u} - \frac{N}{c\mu}(1 - e^{-c\mu u}).
}

We observe that the limiting values are $\lim_{u \downarrow 0}g(u) = 0, \lim_{u \to \infty}g(u) = 1$. 
Further, we observe that $g$ is continuous at $u = 1$, 
such that 
\EQ{
g(1) = 1 + (N-1)e^{-c\mu} - \frac{N}{c\mu}(1-e^{-c\mu}).
}
We observe that $\lim_{c\mu \to 0}g(1) = 0$ and $\lim_{c\mu \to \infty}g(1) = 1$. 
In general, $g(1) \ge 0$ if and only if 
\EQ{
(N-c\mu)(e^{c\mu}-1) \le c\mu N.
} 

\subsection{Behavior of $g'(u)$}
\label{subsec:DerBehavior}

We can write the derivative $g'(u)$ with respect to normalized forking time $u$ as  
\EQ{
\begin{cases}
e^{-c\mu u}\left[c\mu(N-1)((u-1)e^{c\mu}-u) +(e^{c\mu}-1)\right], & u \ge 1\\
1 - e^{-c\mu u}((N-1)c\mu u +1), &u \le 1.
\end{cases}
}
We can verify that the limiting values are $\lim_{u \downarrow 0}g'(u) = 0$ and $\lim_{u \to \infty}g'(u) = 0$, 
and $g'$ is continuous at $u=1$. 
We can also write the scaled second derivative $h(u) \triangleq g''(u)\frac{e^{c\mu u}}{c\mu}$ as equal to
\EQ{
\begin{cases}
(N-2)(e^{c\mu}-1) -c\mu(N-1)((u-1)e^{c\mu}-u), & u > 1\\
(N-1)c\mu u-(N-2), &u < 1.
\end{cases}
}
We observe that $h(0) = -(N-2)$ and it is linearly increasing in the interval $[0, 1]$. 
We next observe that $h$ (and hence $g''$) is discontinuous at $u=1$ for $N > 2$ with a positive jump of $(N-2)e^{c\mu}$, 
which results in $\lim_{u \downarrow 1}h(u) > 0$.  
Further, $\lim_{u \uparrow 1} h(u) > 0$ if and only if $c\mu  > 1 - \frac{1}{(N-1)}$.  
Further, we observe that $h$ is linearly decreasing in the interval $[1, \infty]$ with $\lim_{u \to \infty}h(u) = -\infty$.  
We define the following two normalized time instants 
\meq{2}{
&v_0 =  \frac{1}{c\mu} - \frac{1}{c\mu(N-1)},&
&v_1 = 1 + v_0 + \frac{1}{e^{c\mu}-1}, 
}
where $v_0$ and $v_1$ are the expressions obtained by setting the two expressions of $h(u)$ to zero corresponding to $u<1$ and $u>1$ cases respectively. 
We observe that $v_1 > 1$ for all parameter values $c, \mu, N$, 
whereas $v_0$ can be larger or smaller than unity depending on the parameters. 
In particular, $v_0 \le 1$ if and only if 
$
c\mu \ge 1 - \frac{1}{(N-1)}. 
$
Since $g''(u) = e^{-c\mu u}h(u)/c\mu$, we have $\sgn(g''(u)) = \sgn(h(u))$, where 
\EQ{
\sgn(h(u)) = \SetIn{u \in (v_0\wedge 1, 1]\cup[1, v_1)} - \SetIn{u \in [0, v_0 \wedge 1)\cup(v_1, \infty)}, 
} 
and we conclude that the second derivative of $g$ is discontinuous at unity, and can have one or two zeros. 
Specifically, we \emph{always} have $g''(v_1) = 0$ with $v_1  > 1$ and we \emph{can} have $g''(v_0) = 0$ if $c\mu \ge 1 - \frac{1}{(N-1)}$. 
We further remark that, $g''(u)$ is negative in the intervals $(0, v_0\wedge 1)$ and $(v_1, \infty)$ and is non-negative in the interval $[v_0\wedge 1, v_1]$. 

Therefore, the first derivative $g'(u)$ decreases monotonically from $g'(0) = 0$ to $g'(v_0\wedge 1)$, 
and then monotonically increases from $g'(v_0\wedge 1)$ to $g'(v_1)$, followed by the decrease to $g'(\infty) = 0$. 
From the continuity of $g'$, we infer that $g'(v_1) > 0 > g'(v_0\wedge 1)$, 
and hence we can conclude that there exists a unique $v_2 \in [v_0\wedge 1, v_1]$ such that $g'(v_2) = 0$. 
Further, the first derivative $g'$ is negative in the interval $(0, v_2)$ and positive in the interval $(v_2, \infty)$. 
That is, 
\EQ{
\sgn(g'(u)) = -\SetIn{0 < u  < v_2} + \SetIn{ u  > v_2}.
}
Note that the unique value of $v_2$ is determined by the solution to $g'(u) = 0$ in two complementary cases when $v_0 \le u \le 1$ and $u \ge 1$. 
Hence, we can write this normalized forking time threshold $v_2$ to be equal to 
$1 - \frac{1}{c\mu(N-1)} + \frac{1}{e^{c\mu}-1}$ when $v_2 \ge 1$ and the solution $v_2$ to the implicit equation 
$k(u) \triangleq e^{c\mu u} - 1 - (N-1)c\mu u = 0$ if $v_2 \in [v_0 , 1]$. 
We see that the function $k(u)$ has a zero in $[v_0, 1)$ if and only if $v_0 < 1$ and $k(1) > 0$. 
We can see that when $v_0 \ge 1$ then $k(1) \le 0$ from the Taylor expansion of exponential function
\footnote{If $c\mu \le 1 - \frac{1}{N-1}$, then we have $\frac{e^{c\mu} - 1}{c\mu} = \sum_{k =0}^{\infty}\frac{(c\mu)^{k}}{(k+1!} \le \sum_{k =0}^{\infty}(c\mu)^k = \frac{1}{1-c\mu} \le (N-1)$, i.e. $k(1) \le 0$. }, 
and hence 
\EQ{
\set{v_0 < 1}\cap \set{k(1) > 0} = \set{k(1) > 0}.% = \set{\frac{1}{c\mu}\ln(c\mu(N-1) + 1) < 1}.
} 
Therefore, we can see that $v_2 \ge 1$ iff $k(1) \le 0$ iff  $\frac{1}{c\mu}\ln(1+c\mu(N-1)) \ge 1$, 
and in this case there is no non-trivial solution to the above implicit equation in $(0, 1]$. 
Hence, we can write the normalized forking time threshold $v_2$ as 
\EQ{
\begin{cases}
1 - \frac{1}{c\mu(N-1)} + \frac{1}{e^{c\mu}-1}, &\frac{1}{c\mu}\ln(1+c\mu(N-1)) \ge 1,\\
\frac{1}{c\mu}\ln(1 + (N-1)c\mu v_2), &\frac{1}{c\mu}\ln(1+c\mu(N-1)) \le 1.
\end{cases}
}

\subsection{Behavior of $g(u)$}
\label{subsec:FunBehavior}
From the behavior of $g'(u)$, 
we conclude that the function $g$ is monotonically decreasing from $g(0) = 0$ to $g(v_2)$, and monotonically increasing from $g(v_2)$ to $\lim_{u \to \infty}g(u) = 1$. 
Therefore, it follows that there exists a unique $v_3 \in [v_2, \infty)$ such that $g(v_3) = 0$. 
However, there is no direct relation between $v_3$ and $v_1$. 
We conclude that the function $g = \left.\pd{\E W_1}{x}\right\rvert_{x=\frac{1}{N}}$ is negative when the normalized forking time $u < v_3$ and non-negative otherwise. 

We can write the normalized forking time threshold $v_3$ as the solution to an implicit equation obtained from equating $g(u) = 0$ for two cases when $u \le 1$ and $u \ge 1$. 
Since $g$ is increasing in $[v_3, \infty)$ with $g(v_3) = 0$, 
it follows that $v_3 \le 1$ if and only if $g(1) \ge 0$. 
From the continuity of $g$ at $u=1$, we can evaluate $g(1)$ and observe that $g(1) \ge 0$ if and only if $(N-c\mu)(e^{c\mu}-1) \le c\mu N$. 
Therefore, we can write the following implicit equation for the normalized forking time threshold $v_3$, where $e^{c\mu v_3} + N-1$ is equal to
\EQ{
\begin{cases}
\left((N-1)(v_3-1)+\frac{N}{c\mu}\right)(e^{c\mu}-1), &(1-\frac{c\mu}{N})\frac{(e^{c\mu}-1)}{c\mu} >1,\\
\frac{N}{1- \frac{c\mu v_3}{N}},  & (1-\frac{c\mu}{N})\frac{(e^{c\mu}-1)}{c\mu} \le 1.
\end{cases}
}
We define $f(x) \triangleq (N-x)(e^{x}-1)-Nx$, and observe that $v_3 > 1$ iff $f(c\mu) > 0$. 
This implies a necessary conditions for $v_3 > 1$ is to have $c\mu \in (0, N)$.  
We can verify that the two derivatives of $f$ are 
\meq{2}{
&f'(x) = (N-1)(e^x-1)-xe^x,&&f''(x) = (N-2-x)e^x.  
}
It follows that $\sgn(f'') = \SetIn{x < N-2} -\SetIn{x > N-2}$ and hence the first derivative $f$ is increasing for $x \in [0, N-2)$ and decreasing for $x > N-2$.  
That is for $N \ge 2$, the first derivative $f'$ increases monotonically from $f'(0) = 0$ to $f'(N-2) = e^{N-2}-(N-1) \ge 0$ and then decreases monotonically to $f'(N) = -e^N-(N-1) < f'(N-1) = -(N-1)$. 
Let $x^\# \in (N-2, N-1)$ be the unique point where $f'(x^\#) = 0$, then $\sgn(f') = \SetIn{x \in (0, x^\#)} -\SetIn{x \in (x^\#, N)}$. 
Hence, the function $f$ is monotonically increasing for $x \in (0, x^\#)$ from $f(0) = 0$ to $f(x^\#)$, and then monotonically decreasing in $(x^\#, N)$ to $f(N) = -N^2$.  
Hence, there exists a threshold $x^\prime \in (N-2, N)$ on the rate $c\mu$ such that $v_3 \ge 1$ for $c\mu \le x^\prime$, 
where $x^\prime$ is the unique non-zero solution to the implicit equation $f(x) = 0$. 

Taking the number of servers $N=12$, the shift of the shifted exponential service as $c = 1$, 
we have plotted the function $g, g'$, and $g''$ as a function of normalized forking time $u$ for the service rate $\mu = 0.8$ in Fig.~\ref{Fig:NatureOfganddg_v0ge1} and for the service rate $\mu = 2$ in 
Fig.~\ref{Fig:NatureOfganddg_v0le1}. 

\begin{figure}[hhh]
\centering
\input{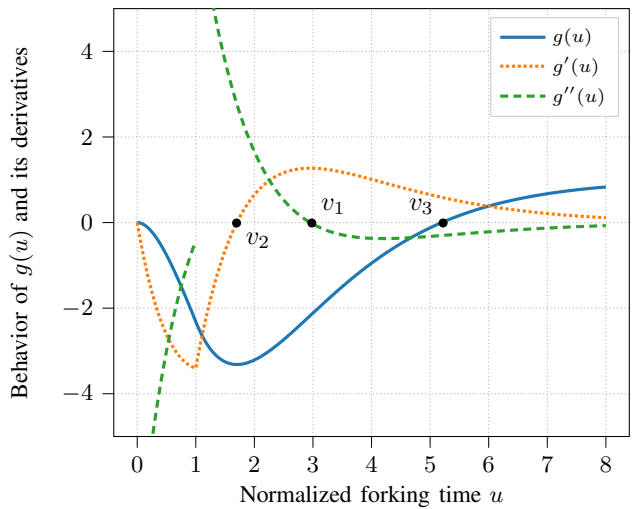}
\caption{ 
For a fixed number of servers $N=12$, the service shift $c=1$, and the exponential service rate $\mu=0.8$, 
this graph plots $g(u), g'(u), g''(u)$ as a function of normalized forking time $u$.  
For this set of system parameters,  
we have $v_0 = \frac{25}{22} > 1$ and $v_1 = 1 + \frac{25}{22}+ \frac{1}{e^{0.8}-1}$. 
The discontinuity in $g''$ at $u =1$ is given by $(N-2)c\mu = 8$. 
Further, we see that $e^{0.8}-1 \le 8.8$ and hence we have $v_2 = v_1 - \frac{5}{4} = 1 - \frac{5}{44} + \frac{1}{e^{0.8}-1}$. 
}
\label{Fig:NatureOfganddg_v0ge1}
\end{figure}

\begin{figure}[hhh]
\centering
\input{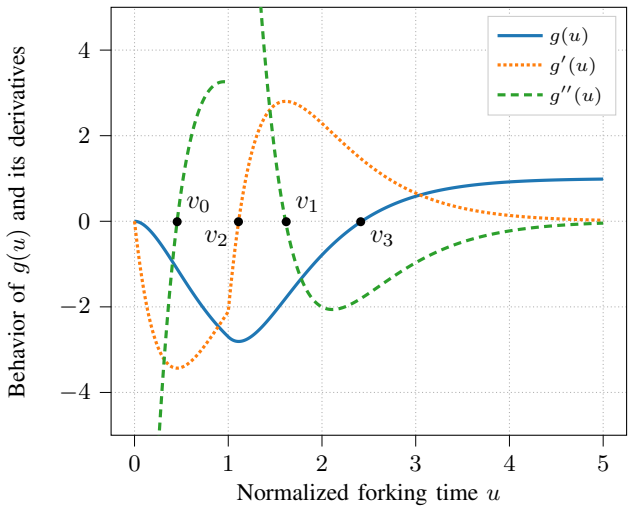}
\caption{
For a fixed number of servers $N=12$, the service shift $c=1$, and the exponential service rate $\mu=2$, this graph plots $g(u), g'(u), g''(u)$ as a function of normalized forking time $u$. 
For this set of system parameters, we have $v_0 = \frac{5}{11} < 1$ and $v_1 = 1 + \frac{5}{11}+ \frac{1}{e^{2}-1}$. 
The discontinuity in $g''$ at $u =1$ is given by $(N-2)c\mu = 20$. 
Further, we see that $e^{2}-1 \le 22$ and hence we have $v_2 = v_1 - \frac{1}{2} = 1 - \frac{1}{22} + \frac{1}{e^{2}-1}$. 
%The time stamps here are following the order $v_0 < 1 < v_2 < v_1 < v_3$. 
}
\label{Fig:NatureOfganddg_v0le1}
\end{figure}

\subsection{Behavior of normalized forking point threshold $v_3$}
\label{subsec:ImpactParam}

We list down the impact of system parameters $K, N, \mu, c$ on this normalized forking point threshold $v_3$, from its analytical expression.  
That is, for $c\mu > x^\prime$, the optimal forking threshold $v_3$ is the solution to the equation 
\EQ{
\frac{e^{c\mu u} -1}{c\mu u} = \frac{1}{1- \frac{c\mu u}{N}}.
}
For large $c\mu$, the solution for this is $v_3=0$, which implies that when the amount of work done in a single shift by a single server is large, starting with a single server is optimal. 
Thus,  $n^\ast_0 \to 1$ as $c\mu \to \infty$. 
For $c\mu \le x^\prime$, the optimal forking threshold $v_3$ is the solution to the equation 
\EQ{
e^{c\mu u} + N-1 = (e^{c\mu} -1)\left((u-1)(N-1) + \frac{N}{c\mu}\right).
}
For small $c\mu$ the optimal threshold $v_3$ is very large, and can be written approximately as the solution to the equation 
\EQ{
e^{c\mu v}  = 1 + (N-1)c\mu v.
} 
The solution of this equation is $v = \frac{y}{c\mu}$ where $e^y = 1+ (N-1)y$. 

We have plotted the normalized forking point threshold $v_3$ along with its approximation in Fig.~\ref{Fig:NormalizedFPThldv3VsProductcMu}, 
as a function of the product $c\mu$ for a fixed number of servers $N=12$. 
Since $x' \ge N-2$, we have $c\mu < x'$ in this plot, and we observe that the approximation matches the numerically computed threshold $v_3$. 
We conclude that when the amount of work done in a single shift by a single server is small, $cv_3 \propto \frac{1}{\mu}$ and for the forking point $t_1 \le cv_3 \approx \frac{y}{\mu}$ the optimal number of initial servers $n_0^\ast \ge 1$.  

\begin{figure}[hhh]
\centering
\input{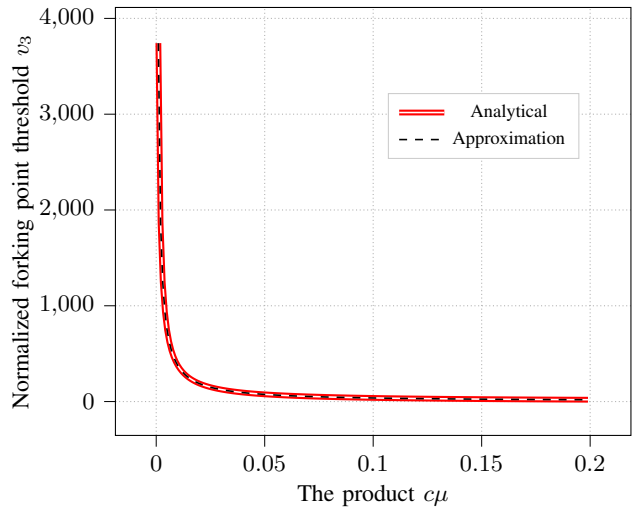}
\caption{ 
For fixed number of servers $N=12$, this graph plots the normalized forking point threshold $v_3$ and its approximation as a function of $c\mu$. 
The service rate $\mu$ is set to $0.01$ and service shift $c$ is varied. 
}
\label{Fig:NormalizedFPThldv3VsProductcMu}
\end{figure}

Since the threshold $x' \in (N-2, N)$, this threshold increases as the total number of servers $N$ increases. 
Therefore for any finite $c\mu$ and large $N$, we have $c\mu \le x'$ and hence the normalized forking point threshold $v_3 > 1$. 
In this setting, the normalized forking point threshold $v_3$ is approximately given by the solution to the equation
\EQ{
e^{c\mu u} = N u(e^{c\mu}-1). 
}
Setting $x = -c\mu u < 0$, we can rewrite the above equation as $xe^x = -\frac{c\mu}{N(e^{c\mu}-1)}$. 
Further, recall that the equation $y = xe^x$ is equivalent to $x = W(y)$ where $W$ is the Lambert-W relation~\cite{corless1996lambertw}. 
Therefore, we have $-c\mu u = W(-\frac{c\mu}{N(e^{c\mu}-1)})$. 
We further note that the Lambert-W is a double-valued relation on $(-1/e, 0)$. 
The relation $W$ has two real branches in this regime, 
and they are represented by single valued functions $W_0$ and $W_{-1}$ with additional constraints $W_0 \ge -1 \ge W_{-1}$. 
We observe that $-\frac{c\mu}{N(e^{c\mu}-1)} \le 0$ and decreases to zero as $N$ grows large, and hence the interval $(-1/e, 0)$ is of interest for large number of servers.  
Since $x' \ge N-2$, and hence for any fixed $c\mu$, we have $c\mu < x'$ as $N$ increases, and the normalized forking threshold $v_3 > 1$. 
This implies that the lower branch $W_{-1}$ of the Lambert-W relation gives us the right solution. 
Thus, we have 
\EQ{
v_3 \approx -\frac{1}{c\mu}W_{-1}\left(-\frac{c\mu}{N (e^{c\mu}-1)}\right). 
}

\begin{figure}[hhh]
\centering
\input{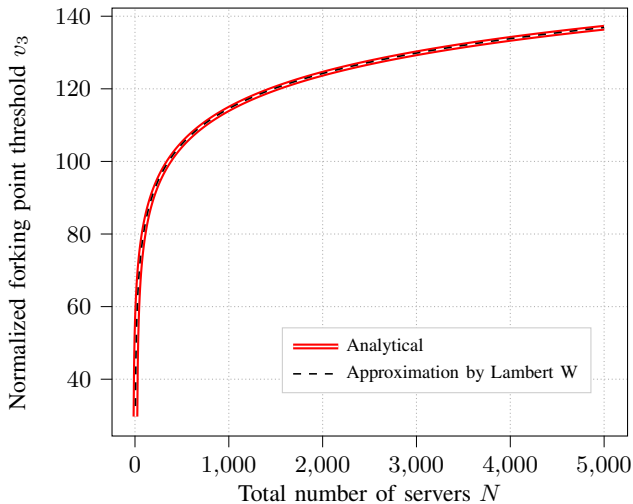}
\caption{ 
This graph plots the normalized forking point threshold $v_3$ and its approximation by Lambert-W relation as a function of number of servers $N$,  
for shifted exponential service distribution with shift $c=8$ and rate $\mu=0.01$. 
}
\label{Fig:NormalizedFPThldv3VsNumServers}
\end{figure}

We have plotted the numerically computed normalized forking point threshold $v_3$ along with its approximation from Lambert-W relation in Fig.~\ref{Fig:NormalizedFPThldv3VsNumServers}, as a function of the number of servers $N$.  
The plot suggests that threshold is logarithmically increasing with the number of servers $N$. 
Indeed, the authors of \cite{barry1993class, barry2000analytical} provided an approximation for the lower branch of the Lambert-W relation as  
\EQ{
W_{-1}(z) \approx -1-\sigma - \frac{2}{M_1}\left[1-\frac{1}{1+ \frac{M_1\sqrt{\sigma/2}}{1+M_2\sigma \exp(M_3\sqrt{\sigma})}}\right],
}
where $\sigma = -1-\ln(-z)$, $M_1=0.3361$, $M_2=-0.0042$, and $M_3=-0.0201$. 
This  approximation has a maximum relative error of only 0.025\%  \cite{barry2000analytical}. 
Using this approximation, we can analytically see that $v_3$ increases logarithmically with total number of servers $N$ for large $N$.  

%-----------------SFP  Heavy Tailed-Start-----------------
%\color{blue}
%\hrule
%\vspace{6pt}
%\blue{[Adding two separate sections for single forking and multi forking with each having Pareto and Weibull.]}
\section{Numerical results of single forking for heavy-tailed distributions}
\label{apdx:HeavyTailSFP}
It is not straightforward to compute the mean service completion time and mean server utilization cost analytically for general distribution of job execution times. 
In fact these computations remain challenging for heavy-tailed distributions such as the Pareto and the Weibull distributions. 
However, we can compute these performance metrics empirically to verify that the insights obtained by the analytical study of shifted exponential distribution continue to hold in other cases.  

For this numerical study, we select the identical system parameters to those chosen for study with shifted exponential distribution. 
That is, we take $K = 10$ parallel tasks, and the total available servers per task as $N=12$, together with the cost of server utilization per unit time as $\lambda  = 1$. 
%-----------------Pareto distribution SFP-Start-----------------
\subsection{Pareto distribution}
\label{subsec:ParetoSFP}
We take the execution times at each server to be an \emph{i.i.d.} random sequence having a Pareto distribution with scale $x_m$ and shape $\alpha > 1$, 
such that the complementary execution time distribution is given by 
\EQ{
\bar{F}(x) = P\set{X > x} = \left(\frac{x_m}{x}\right)^\alpha\SetIn{x \ge x_m}.
}
For the following numerical studies, we take mean of the Pareto distribution as $m = \frac{x_m\alpha}{\alpha-1}= 0.8$, and scale $x_m = 0.1*m$. 
We verify that the insights obtained from the shifted exponential distribution of execution time, 
continue to hold in this case. 
To this end, we plot the empirical mean of service completion time in Fig.~\ref{Fig:MeanServiceParetoSFP}, 
and empirical mean of server utilization cost in Fig.~\ref{Fig:MeanCostParetoSFP}, 
both as functions of initial servers $n_0 \in \{1, \dots, 11\}$ for values of forking times in $t_1 \in \{m, 2m, \dots, 9m\}$. 
%\red{[These words are repeated, but we should restate them here again for Pareto.]} 
As expected, the mean service completion time increases and the mean server utilization cost decreases with increase in the forking time $t_1$. 
We also notice the decrease in the mean service completion time as the number of initial server $n_0$ increases. 
Interestingly, we still have an optimal number of initial servers $n_0^\ast \ge 1$ that minimizes the mean server utilization cost for different values of forking points $t_1 > m$. 

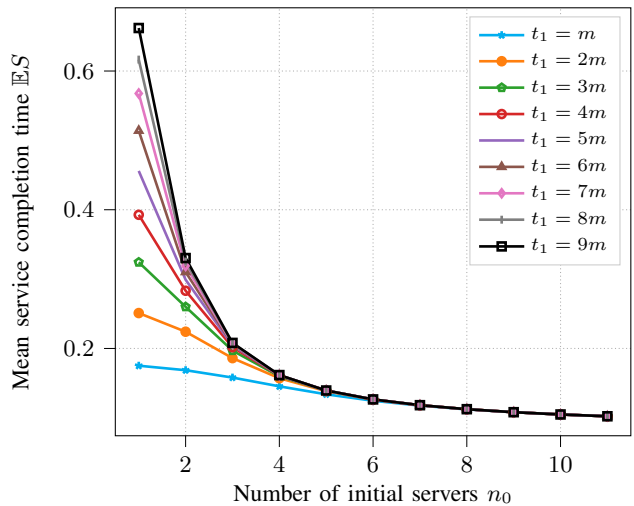
\begin{figure}[hhh]
\centering
% This file was created by matplotlib2tikz v0.7.3.
\begin{tikzpicture}

\definecolor{color0}{rgb}{0.12156862745098,0.466666666666667,0.705882352941177}
\definecolor{color1}{rgb}{1,0.498039215686275,0.0549019607843137}
\definecolor{color2}{rgb}{0.172549019607843,0.627450980392157,0.172549019607843}
\definecolor{color3}{rgb}{0.83921568627451,0.152941176470588,0.156862745098039}
\definecolor{color4}{rgb}{0.580392156862745,0.403921568627451,0.741176470588235}
\definecolor{color5}{rgb}{0.549019607843137,0.337254901960784,0.294117647058824}
\definecolor{color6}{rgb}{0.890196078431372,0.466666666666667,0.76078431372549}
\definecolor{color7}{rgb}{0.737254901960784,0.741176470588235,0.133333333333333}

\begin{axis}[
font=\small,
legend cell align={left},
legend style={draw=white!80.0!black, font=\scriptsize},
tick align=outside,
tick pos=left,
x grid style={white!69.01960784313725!black, densely dotted},
xlabel={Number of initial servers $n_0$},
xmajorgrids,
xmin=0.5, xmax=11.5,
xtick style={color=black},
y grid style={white!69.01960784313725!black, densely dotted},
ylabel={Mean service completion time $\E S$},
ymajorgrids,
ymin=0.0741404190846677, ymax=0.691161198542081,
ytick style={color=black}
]

\addplot [semithick, cyan, mark=star, mark size=1.5, mark options={solid}, line width=1pt]
table {%
1 0.175126183930618
2 0.168647327295846
3 0.158025512797236
4 0.145396964266735
5 0.133885613636278
6 0.124613464774345
7 0.117569029920989
8 0.112163470813312
9 0.108030558034828
10 0.104805736691023
11 0.102201418608451
};
\addlegendentry{$t_1=m$}
\addplot [semithick, color1, mark=*, mark size=1.5, mark options={solid}, line width=1pt]
table {%
1 0.25102329143346
2 0.224131207993809
3 0.186055832342216
4 0.157069822918165
5 0.138383097409416
6 0.126260453504139
7 0.118107404408264
8 0.112401909646071
9 0.108114203573765
10 0.104818795593906
11 0.102210963582293
};
\addlegendentry{$t_1=2m$}
\addplot [semithick, color2, mark=pentagon, mark size=1.5, mark options={solid}, line width=1pt]
table {%
1 0.324281682585965
2 0.26001651187997
3 0.196904155218078
4 0.159853891144972
5 0.139065026045008
6 0.126474684089882
7 0.118195923481255
8 0.112423467346884
9 0.108093287354966
10 0.104834079573721
11 0.102203306180634
};
\addlegendentry{$t_1=3m$}
\addplot [semithick, color3, mark=o, mark size=1.5, mark options={solid}, line width=1pt]
table {%
1 0.392702923081315
2 0.283196136438107
3 0.201742761143336
4 0.160829467197722
5 0.139253175508146
6 0.126444100305267
7 0.118194369202438
8 0.112399905996743
9 0.108079866453144
10 0.104813197721765
11 0.102204920420168
};
\addlegendentry{$t_1=4m$}
\addplot [semithick, color4, mark=, mark size=1.5, mark options={solid}, line width=1pt]
table {%
1 0.455968329089246
2 0.299095871246918
3 0.204266583544672
4 0.161296107486552
5 0.139352560671704
6 0.126505438465105
7 0.118211053324207
8 0.112348137249458
9 0.108137895253384
10 0.104806253387726
11 0.102220281488972
};
\addlegendentry{$t_1=5m$}
\addplot [semithick, color5, mark=triangle, mark size=1.5, mark options={solid}, line width=1pt]
table {%
1 0.514015820135655
2 0.310203040635031
3 0.205938252249873
4 0.161346950093049
5 0.139295815256313
6 0.126506369980329
7 0.118171951429904
8 0.112405289709002
9 0.10810315884246
10 0.104834883785916
11 0.102208251350065
};
\addlegendentry{$t_1=6m$}
\addplot [semithick, color6, mark=diamond, mark size=1.5, mark options={solid}, line width=1pt]
table {%
1 0.567597766504757
2 0.31897411730922
3 0.206617669807937
4 0.16142451535251
5 0.139339517280832
6 0.126530313226628
7 0.118196387247315
8 0.112373748248864
9 0.108105542113051
10 0.104840391372425
11 0.10222698862481
};
\addlegendentry{$t_1=7m$}
\addplot [semithick, white!49.80392156862745!black, mark=|, mark size=1.5, mark options={solid}, line width=1pt]
table {%
1 0.616475177495095
2 0.325274379619962
3 0.207412370492729
4 0.161578966648194
5 0.139287359878481
6 0.126532877066356
7 0.118217213426747
8 0.112384742013046
9 0.108093461428759
10 0.104829588740269
11 0.102195697497066
};
\addlegendentry{$t_1=8m$}
\addplot [semithick, black, mark=square, mark size=1.5, mark options={solid}, line width=1pt]
table {%
1 0.662008332945187
2 0.330340124663116
3 0.208007710226241
4 0.161620375154907
5 0.139384510462546
6 0.126524564289059
7 0.118211954903276
8 0.112441226364196
9 0.108095562254935
10 0.104820223806016
11 0.102213851218083
};
\addlegendentry{$t_1=9m$}
\end{axis}

\end{tikzpicture}
\caption{
This graph displays the mean service completion time $\E S$ as a function of initial number of servers $n_0 \in \{1, \dots, 11\}$ for single forking of $K = 10$ parallel tasks at different forking times $t_1 \in \{m, 2m, \dots, 9m\}$ when the single task execution time at servers are \emph{i.i.d.}  with Pareto distribution of shift $x_m = 0.08$ and shape $\alpha = 10/9$. 
%The mean of the execution time is kept constant at $ m = 0.8$ and the scale is set to $\alpha = 0.1 m$.
}
\label{Fig:MeanServiceParetoSFP}
\end{figure}

\begin{figure}[hhh]
\centering
% This file was created by matplotlib2tikz v0.7.3.
\begin{tikzpicture}

\definecolor{color0}{rgb}{0.12156862745098,0.466666666666667,0.705882352941177}
\definecolor{color1}{rgb}{1,0.498039215686275,0.0549019607843137}
\definecolor{color2}{rgb}{0.172549019607843,0.627450980392157,0.172549019607843}
\definecolor{color3}{rgb}{0.83921568627451,0.152941176470588,0.156862745098039}
\definecolor{color4}{rgb}{0.580392156862745,0.403921568627451,0.741176470588235}
\definecolor{color5}{rgb}{0.549019607843137,0.337254901960784,0.294117647058824}
\definecolor{color6}{rgb}{0.890196078431372,0.466666666666667,0.76078431372549}
\definecolor{color7}{rgb}{0.737254901960784,0.741176470588235,0.133333333333333}

\begin{axis}[
font=\small,
legend cell align={left},
legend style={at={(0.97,0.03)}, anchor=south east, draw=white!80.0!black, font=\scriptsize},
tick align=outside,
tick pos=left,
x grid style={white!69.01960784313725!black, densely dotted},
xlabel={Number of initial servers $n_0$},
xmajorgrids,
xmin=0.5, xmax=11.5,
xtick style={color=black},
y grid style={white!69.01960784313725!black, densely dotted},
ylabel={Mean server utilization cost $\E W_1$},
ymajorgrids,
ymin=0.2, ymax=0.999400172231525,
ytick style={color=black}
]

\addplot [semithick, cyan, mark=star, mark size=1.5, mark options={solid}, line width=1pt]
table {%
1 0.757959351451249
2 0.62703382049562
3 0.578503136237515
4 0.577380415308398
5 0.603966897009197
6 0.647054980590793
7 0.700907225042399
8 0.761379262012
9 0.826597007617828
10 0.894933570945484
11 0.965517856698378
};
\addlegendentry{$t_1=m$}
\addplot [semithick, color1, mark=*, mark size=1.5, mark options={solid}, line width=1pt]
table {%
1 0.510140298282687
2 0.374148680156209
3 0.374819674996413
4 0.424312142515513
5 0.491867886503321
6 0.56606904027794
7 0.643035999702653
8 0.721325184306927
9 0.800049968786476
10 0.879135491024668
11 0.958397361907368
};
\addlegendentry{$t_1=2m$}
\addplot [semithick, color2, mark=pentagon, mark size=1.5, mark options={solid}, line width=1pt]
table {%
1 0.421133602240607
2 0.32167537000758
3 0.351367039654789
4 0.414947126261051
5 0.488288242578396
6 0.564835303651505
7 0.642666779500124
8 0.721179517435312
9 0.800006135775263
10 0.87913321321693
11 0.958404507519811
};
\addlegendentry{$t_1=3m$}
\addplot [semithick, color3, mark=o, mark size=1.5, mark options={solid}, line width=1pt]
table {%
1 0.378095823463358
2 0.303724191937976
3 0.345806482135918
4 0.413434850435231
5 0.487892567600782
6 0.564665691310561
7 0.642639023740362
8 0.721125004378344
9 0.799953369029382
10 0.879113141399087
11 0.958382945101768
};
\addlegendentry{$t_1=4m$}
\addplot [semithick, color4, mark=, mark size=1.5, mark options={solid}, line width=1pt]
table {%
1 0.354344758370341
2 0.296217041353556
3 0.344006433445615
4 0.413112139792684
5 0.487877535864437
6 0.564701986254474
7 0.642677610323173
8 0.721067017867986
9 0.800021307657742
10 0.879097774482979
11 0.958436508761482
};
\addlegendentry{$t_1=5m$}
\addplot [semithick, color5, mark=triangle, mark size=1.5, mark options={solid}, line width=1pt]
table {%
1 0.339613353674177
2 0.292396384279081
3 0.343348532491171
4 0.412918909692909
5 0.487783118353574
6 0.56469450503514
7 0.642600075434941
8 0.721155877221008
9 0.799996650567208
10 0.879140816585608
11 0.958400925838421
};
\addlegendentry{$t_1=6m$}
\addplot [semithick, color6, mark=diamond, mark size=1.5, mark options={solid}, line width=1pt]
table {%
1 0.330972130411199
2 0.290731623092404
3 0.342956136290385
4 0.412869229327429
5 0.487765083510794
6 0.564748517165325
7 0.642632226754113
8 0.721096732143132
9 0.800010685348237
10 0.879170218321584
11 0.958429341401286
};
\addlegendentry{$t_1=7m$}
\addplot [semithick, white!49.80392156862745!black, mark=|, mark size=1.5, mark options={solid}, line width=1pt]
table {%
1 0.324960923741317
2 0.289545041384216
3 0.342887795333176
4 0.412931183443796
5 0.487771571011192
6 0.564740360509213
7 0.642658274201448
8 0.721097187011791
9 0.799979167316185
10 0.879157999393842
11 0.958373457231914
};
\addlegendentry{$t_1=8m$}
\addplot [semithick, black, mark=square, mark size=1.5, mark options={solid}, line width=1pt]
table {%
1 0.321171290320272
2 0.288917853805569
3 0.34294991891923
4 0.412904424224672
5 0.487817436979832
6 0.564736338762564
7 0.642633778423145
8 0.721167724681014
9 0.799989545843917
10 0.879123623842841
11 0.958426910876037
};
\addlegendentry{$t_1=9m$}
\end{axis}

\end{tikzpicture}
\caption{
This graph displays the mean server utilization cost $\E W$ as a function of initial number of servers $n_0 \in \{1, \dots, 11\}$ for single forking of $K = 10$ parallel tasks at different forking times $t_1 \in \{m, 2m, \dots, 9m\}$ when the single task execution time at servers are \emph{i.i.d.}  with Pareto distribution of shift $x_m=0.08$ and shape $\alpha=10/9$. 
%The mean of the execution time is kept constant at $ m = 0.8$ and the scale is set to $\alpha = 0.1 m$.
}
\label{Fig:MeanCostParetoSFP}
\end{figure}
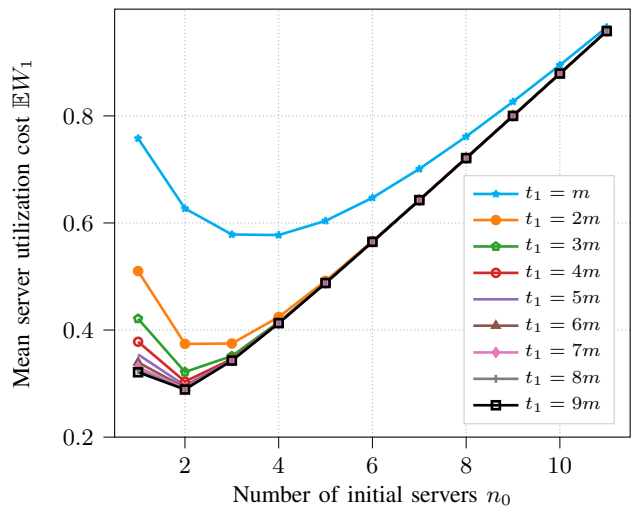

%\red{The following paragraph needs to be edited.}
These insights are identical to the one derived from the shifted exponential service case, 
and we have an interesting tradeoff between the two metrics when we vary the number of initial servers and the forking time.  
First observation is that forking time gives a true tradeoff between these two metrics. 
Second and more interesting observation is that there exist a minimum number of initial servers for each forking time, until which point we can decrease both the mean service completion time and the mean server utilization cost. 
We demonstrate this tradeoff by plotting the empirical mean of service completion time with respect to empirical mean of server utilization cost for $\lambda =1$ in Fig.~\ref{Fig:TradeoffParetoSFP} for initial number of servers $n_0 \in \{1, \dots, 11\}$ and for forking times $t_1 \in \{m, 2m, \dots, 9m\}$. 
It is clear that there is an optimal number of initial servers for each forking time $t_1$. 
Further, the mean server utilization decreases with forking time $t_1$, 
though at the cost of increase in mean service completion time. 

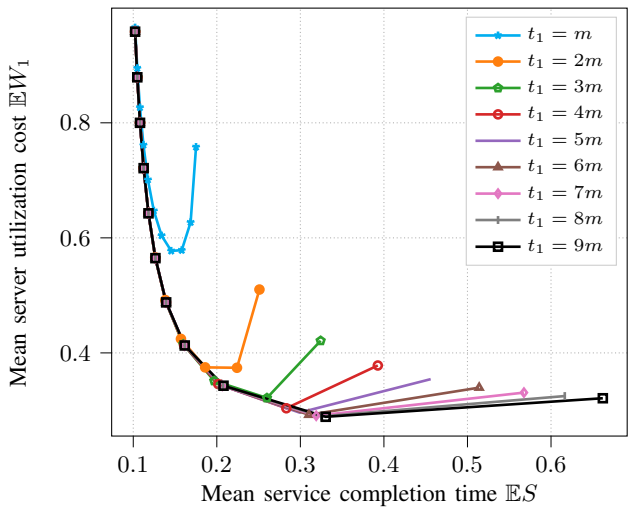
\begin{figure}[hhh]
	\centering
	% This file was created by matplotlib2tikz v0.7.3.
\begin{tikzpicture}

\definecolor{color0}{rgb}{0.12156862745098,0.466666666666667,0.705882352941177}
\definecolor{color1}{rgb}{1,0.498039215686275,0.0549019607843137}
\definecolor{color2}{rgb}{0.172549019607843,0.627450980392157,0.172549019607843}
\definecolor{color3}{rgb}{0.83921568627451,0.152941176470588,0.156862745098039}
\definecolor{color4}{rgb}{0.580392156862745,0.403921568627451,0.741176470588235}
\definecolor{color5}{rgb}{0.549019607843137,0.337254901960784,0.294117647058824}
\definecolor{color6}{rgb}{0.890196078431372,0.466666666666667,0.76078431372549}
\definecolor{color7}{rgb}{0.737254901960784,0.741176470588235,0.133333333333333}

\begin{axis}[
font=\small,
legend cell align={left},
legend style={draw=white!80.0!black, font=\scriptsize},
tick align=outside,
tick pos=left,
x grid style={white!69.01960784313725!black, densely dotted},
xlabel={Mean service completion time $\E S$},
xmajorgrids,
xmin=0.0741404190846677, xmax=0.691161198542081,
xtick style={color=black},
y grid style={white!69.01960784313725!black, densely dotted},
ylabel={Mean server utilization cost $\E W_1$},
ymajorgrids,
ymin=0.255422463470822, ymax=0.999400172231525,
ytick style={color=black}
]

\addplot [semithick, cyan, mark=star, mark size=1.5, mark options={solid}, line width=1pt]
table {%
0.175126183930618 0.757959351451249
0.168647327295846 0.62703382049562
0.158025512797236 0.578503136237515
0.145396964266735 0.577380415308398
0.133885613636278 0.603966897009197
0.124613464774345 0.647054980590793
0.117569029920989 0.700907225042399
0.112163470813312 0.761379262012
0.108030558034828 0.826597007617828
0.104805736691023 0.894933570945484
0.102201418608451 0.965517856698378
};
\addlegendentry{$t_1 = m$}
\addplot [semithick, color1, mark=*, mark size=1.5, mark options={solid}, line width=1pt]
table {%
0.25102329143346 0.510140298282687
0.224131207993809 0.374148680156209
0.186055832342216 0.374819674996413
0.157069822918165 0.424312142515513
0.138383097409416 0.491867886503321
0.126260453504139 0.56606904027794
0.118107404408264 0.643035999702653
0.112401909646071 0.721325184306927
0.108114203573765 0.800049968786476
0.104818795593906 0.879135491024668
0.102210963582293 0.958397361907368
};
\addlegendentry{$t_1 = 2m$}
\addplot [semithick, color2, mark=pentagon, mark size=1.5, mark options={solid}, line width=1pt]
table {%
0.324281682585965 0.421133602240607
0.26001651187997 0.32167537000758
0.196904155218078 0.351367039654789
0.159853891144972 0.414947126261051
0.139065026045008 0.488288242578396
0.126474684089882 0.564835303651505
0.118195923481255 0.642666779500124
0.112423467346884 0.721179517435312
0.108093287354966 0.800006135775263
0.104834079573721 0.87913321321693
0.102203306180634 0.958404507519811
};
\addlegendentry{$t_1 = 3m$}
\addplot [semithick, color3, mark=o, mark size=1.5, mark options={solid}, line width=1pt]
table {%
0.392702923081315 0.378095823463358
0.283196136438107 0.303724191937976
0.201742761143336 0.345806482135918
0.160829467197722 0.413434850435231
0.139253175508146 0.487892567600782
0.126444100305267 0.564665691310561
0.118194369202438 0.642639023740362
0.112399905996743 0.721125004378344
0.108079866453144 0.799953369029382
0.104813197721765 0.879113141399087
0.102204920420168 0.958382945101768
};
\addlegendentry{$t_1 = 4m$}
\addplot [semithick, color4, mark=, mark size=1.5, mark options={solid}, line width=1pt]
table {%
0.455968329089246 0.354344758370341
0.299095871246918 0.296217041353556
0.204266583544672 0.344006433445615
0.161296107486552 0.413112139792684
0.139352560671704 0.487877535864437
0.126505438465105 0.564701986254474
0.118211053324207 0.642677610323173
0.112348137249458 0.721067017867986
0.108137895253384 0.800021307657742
0.104806253387726 0.879097774482979
0.102220281488972 0.958436508761482
};
\addlegendentry{$t_1 = 5m$}
\addplot [semithick, color5, mark=triangle, mark size=1.5, mark options={solid}, line width=1pt]
table {%
0.514015820135655 0.339613353674177
0.310203040635031 0.292396384279081
0.205938252249873 0.343348532491171
0.161346950093049 0.412918909692909
0.139295815256313 0.487783118353574
0.126506369980329 0.56469450503514
0.118171951429904 0.642600075434941
0.112405289709002 0.721155877221008
0.10810315884246 0.799996650567208
0.104834883785916 0.879140816585608
0.102208251350065 0.958400925838421
};
\addlegendentry{$t_1 = 6m$}
\addplot [semithick, color6, mark=diamond, mark size=1.5, mark options={solid}, line width=1pt]
table {%
0.567597766504757 0.330972130411199
0.31897411730922 0.290731623092404
0.206617669807937 0.342956136290385
0.16142451535251 0.412869229327429
0.139339517280832 0.487765083510794
0.126530313226628 0.564748517165325
0.118196387247315 0.642632226754113
0.112373748248864 0.721096732143132
0.108105542113051 0.800010685348237
0.104840391372425 0.879170218321584
0.10222698862481 0.958429341401286
};
\addlegendentry{$t_1 = 7m$}
\addplot [semithick, white!49.80392156862745!black, mark=|, mark size=1.5, mark options={solid}, line width=1pt]
table {%
0.616475177495095 0.324960923741317
0.325274379619962 0.289545041384216
0.207412370492729 0.342887795333176
0.161578966648194 0.412931183443796
0.139287359878481 0.487771571011192
0.126532877066356 0.564740360509213
0.118217213426747 0.642658274201448
0.112384742013046 0.721097187011791
0.108093461428759 0.799979167316185
0.104829588740269 0.879157999393842
0.102195697497066 0.958373457231914
};
\addlegendentry{$t_1 = 8m$}
\addplot [semithick, black, mark=square, mark size=1.5, mark options={solid}, line width=1pt]
table {%
0.662008332945187 0.321171290320272
0.330340124663116 0.288917853805569
0.208007710226241 0.34294991891923
0.161620375154907 0.412904424224672
0.139384510462546 0.487817436979832
0.126524564289059 0.564736338762564
0.118211954903276 0.642633778423145
0.112441226364196 0.721167724681014
0.108095562254935 0.799989545843917
0.104820223806016 0.879123623842841
0.102213851218083 0.958426910876037
};
\addlegendentry{$t_1 = 9m$}
\end{axis}

\end{tikzpicture}
	\caption{ 
		This graph displays the mean server utilization cost $\E W$ as a function of mean service completion time $\E S$ when we vary the number of initial servers $n_0 \in \{1, \dots, 11\}$ for single forking of $K = 10$ parallel tasks at different forking times $t_1 \in \{m, 2m, \dots, 9m\}$ when the single task execution time at servers are \emph{i.i.d.}  with Pareto distribution of shift $x_m=0.08$ and shape $\alpha = 10/9$. 
		%The mean of the execution time is kept constant at $ m = 0.8$ and the scale is set to $\alpha = 0.1 m$.
		}
	\label{Fig:TradeoffParetoSFP}
\end{figure}
%-----------------Pareto distribution SFP-End-----------------

%-----------------Weibull distribution SFP-Start-----------------
\subsection{Weibull distribution}
\label{subsec:WeibullSFP}
We take the execution time at each server to be and \emph{i.i.d.} random sequence having a Weibull distribution with scale $\gamma$ and shape $\beta$, 
such that the complementary execution time distribution is given by 
\EQ{
	\bar{F}(x) = P\set{X > x} = e^{-\left(\frac{x}{\gamma}\right)^\beta} \SetIn{x \ge 0}.
}
For the following numerical studies, 
we take scale of the Weibull distribution as $\gamma=16$, and shape $\beta=2$.
For this service distribution, 
we individually plot the empirical mean of service completion time in Fig.~\ref{Fig:MeanServiceWeibullSFP}, 
and the empirical mean of server utilization cost in Fig.~\ref{Fig:MeanCostWeibullSFP}, 
both as functions of initial servers $n_0 \in \{1, \dots, 11\}$ with values of forking times in $t_1 \in \{1, 3, 5, \dots, 19\}$. 
We demonstrate the tradeoff between these two performance metrics by plotting the empirical mean of service completion time with respect to empirical mean of server utilization cost for $\lambda =1$ in Fig.~\ref{Fig:TradeoffWeibullSFP}  for initial number of serves $n_0 \in \{1, \dots, 11\}$ for values of forking times in $t_1 \in \{1, 3, 5, \dots, 19\}$.

We reiterate that, as expected, the insights obtained in the Weibull service distribution case are identical to those obtained from the light-tailed distribution such as shifted exponential and heavy-tailed distribution such as Pareto.
This suggests that the insights derived from our analysis applies to random execution times with heavy tail distributions as well. 
\begin{figure}[hhh]%{0.3\textwidth}
	%\centering
\scalebox{1}{% This file was created by matplotlib2tikz v0.7.3.
\begin{tikzpicture}

\definecolor{color0}{rgb}{0.12156862745098,0.466666666666667,0.705882352941177}
\definecolor{color1}{rgb}{1,0.498039215686275,0.0549019607843137}
\definecolor{color2}{rgb}{0.172549019607843,0.627450980392157,0.172549019607843}
\definecolor{color3}{rgb}{0.83921568627451,0.152941176470588,0.156862745098039}
\definecolor{color4}{rgb}{0.580392156862745,0.403921568627451,0.741176470588235}
\definecolor{color5}{rgb}{0.549019607843137,0.337254901960784,0.294117647058824}
\definecolor{color6}{rgb}{0.890196078431372,0.466666666666667,0.76078431372549}
\definecolor{color7}{rgb}{0.737254901960784,0.741176470588235,0.133333333333333}
\definecolor{color8}{rgb}{0.0901960784313725,0.745098039215686,0.811764705882353}

\begin{axis}[
font=\small,
legend cell align={left},
legend style={draw=white!80.0!black, font=\scriptsize},
tick align=outside,
tick pos=left,
x grid style={white!69.01960784313725!black, densely dotted},
xlabel={Number of initial servers $n_0$},
xmajorgrids,
xmin=0.5, xmax=11.5,
xtick style={color=black},
y grid style={white!69.01960784313725!black, densely dotted},
ylabel={Mean service completion time $\E S$},
ymajorgrids,
ymin=7.08946180740696, ymax=23.6285099541644,
ytick style={color=black}
]
\addplot [semithick, color7, mark=star, mark size=1.5, mark options={solid}, line width=1pt]
table {%
1 8.65468854450982
2 8.5749005216662
3 8.4537171165028
4 8.37089787875789
5 8.31947682296514
6 8.18266295570761
7 8.15864534567871
8 8.06899763283092
9 7.95815129529582
10 7.87480671600981
11 7.84123672316866
};
\addlegendentry{$t_1=1$}
\addplot [semithick, color1, mark=*, mark size=1.5, mark options={solid}, line width=1pt]
table {%
1 10.4400903493354
2 10.1502459967111
3 9.87897843776859
4 9.61189416562653
5 9.33546425277891
6 9.08255117347147
7 8.81569835105206
8 8.6197409613711
9 8.34737693484047
10 8.16702777900127
11 7.94954092330463
};
\addlegendentry{$t_1=3$}
\addplot [semithick, color2, mark=pentagon, mark size=1.5, mark options={solid}, line width=1pt]
table {%
1 12.2045132491143
2 11.6369129949686
3 11.1311866527346
4 10.6866104532583
5 10.2443834418679
6 9.79636197191396
7 9.36449469752003
8 9.02917584590282
9 8.66541200480386
10 8.31663752941778
11 8.01250067759097
};
\addlegendentry{$t_1=5$}
\addplot [semithick, color3, mark=o, mark size=1.5, mark options={solid}, line width=1pt]
table {%
1 13.914505856339
2 13.0853760881289
3 12.3134051918588
4 11.6226330075544
5 10.9359217878237
6 10.3728915697869
7 9.81981046713094
8 9.29870613849767
9 8.85883035771072
10 8.4396181886256
11 8.05047251072268
};
\addlegendentry{$t_1=7$}
\addplot [semithick, color4, mark=, mark size=1.5, mark options={solid}, line width=1pt]
table {%
1 15.5425832666564
2 14.419657947351
3 13.3531302710525
4 12.3906723838407
5 11.4605643999306
6 10.6902174402372
7 10.0393991710927
8 9.42541826555313
9 8.90272018264378
10 8.47247383085931
11 8.08298488871327
};
\addlegendentry{$t_1=9$}
\addplot [semithick, color5, mark=triangle, mark size=1.5, mark options={solid}, line width=1pt]
table {%
1 17.1507075683709
2 15.6349437847404
3 14.1692994104604
4 12.9013214901938
5 11.779225572753
6 10.8664812886239
7 10.1248785518668
8 9.4755422255672
9 8.97631086620506
10 8.48554256188592
11 8.08237408184987
};
\addlegendentry{$t_1=11$}
\addplot [semithick, color6, mark=diamond, mark size=1.5, mark options={solid}, line width=1pt]
table {%
1 18.6997520787773
2 16.6833522830039
3 14.8077864728119
4 13.2102551260235
5 11.908732547228
6 10.9542542100134
7 10.1340901582537
8 9.51319794029972
9 8.93999711397853
10 8.48100306791745
11 8.08010528369086
};
\addlegendentry{$t_1=13$}
\addplot [semithick, white!49.80392156862745!black, mark=|, mark size=1.5, mark options={solid}, line width=1pt]
table {%
1 20.2133062786789
2 17.567487210061
3 15.1954089209297
4 13.3041509933915
5 11.9533272364327
6 10.9195240812237
7 10.1821311381188
8 9.49400995816253
9 8.942286988131
10 8.43363392740025
11 8.08339497091358
};
\addlegendentry{$t_1=15$}
\addplot [semithick, cyan, mark=Mercedes star, mark size=1.5, mark options={solid}, line width=1pt]
table {%
1 21.5480983599482
2 18.1375938253471
3 15.3327690027403
4 13.3890634568415
5 11.9886026485047
6 10.961571427987
7 10.1712738170318
8 9.48296208916158
9 8.93440930466544
10 8.48101819926089
11 8.06929900125104
};
\addlegendentry{$t_1=17$}
\addplot [semithick, black, mark=square, mark size=1.5, mark options={solid}, line width=1pt]
table {%
1 22.8767350384026
2 18.5457241526913
3 15.43150245417
4 13.403033602994
5 11.9678398777175
6 10.9424731683001
7 10.1257313570364
8 9.48502064957881
9 8.93023171114152
10 8.49841485874241
11 8.08666923097526
};
\addlegendentry{$t_1=19$}
\end{axis}

\end{tikzpicture}}
\caption{
This graph displays the mean service completion time $\E S$ as a function of initial number of servers $n_0 \in \{1, \dots, 11\}$ for single forking of $K = 10$ parallel tasks at different forking times $t_1 \in \{1, 3, 5, \dots, 19\}$ when the single task execution time at servers are \emph{i.i.d.}  with Weibull distribution of scale $\gamma=16$ and shape $\beta=2$.
}
\label{Fig:MeanServiceWeibullSFP}
\end{figure}
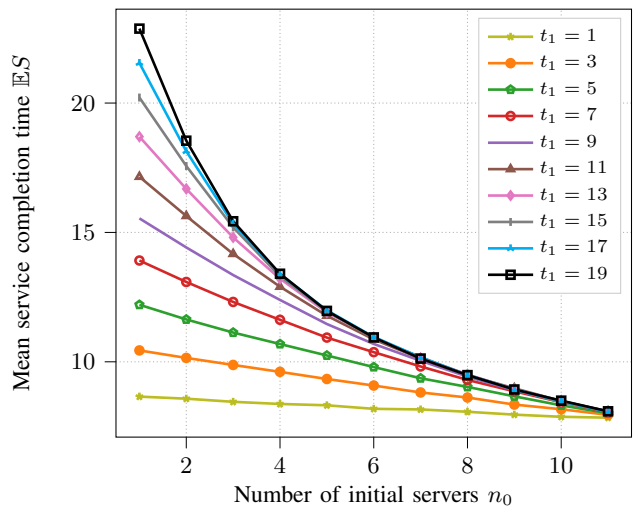%

\begin{figure}[hhh]
	%\centering
\scalebox{1}{% This file was created by matplotlib2tikz v0.7.3.
\begin{tikzpicture}

\definecolor{color0}{rgb}{0.12156862745098,0.466666666666667,0.705882352941177}
\definecolor{color1}{rgb}{1,0.498039215686275,0.0549019607843137}
\definecolor{color2}{rgb}{0.172549019607843,0.627450980392157,0.172549019607843}
\definecolor{color3}{rgb}{0.83921568627451,0.152941176470588,0.156862745098039}
\definecolor{color4}{rgb}{0.580392156862745,0.403921568627451,0.741176470588235}
\definecolor{color5}{rgb}{0.549019607843137,0.337254901960784,0.294117647058824}
\definecolor{color6}{rgb}{0.890196078431372,0.466666666666667,0.76078431372549}
\definecolor{color7}{rgb}{0.737254901960784,0.741176470588235,0.133333333333333}
\definecolor{color8}{rgb}{0.0901960784313725,0.745098039215686,0.811764705882353}

\begin{axis}[
font=\small,
legend cell align={left},
legend style={at={(0.97,0.03)}, anchor=south east, draw=white!80.0!black, font=\scriptsize},
tick align=outside,
tick pos=left,
x grid style={white!69.01960784313725!black, densely dotted},
xlabel={Number of initial servers $n_0$},
xmajorgrids,
xmin=0.5, xmax=11.5,
xtick style={color=black},
y grid style={white!69.01960784313725!black, densely dotted},
ylabel={Mean server utilization cost $\E W_1$},
ymajorgrids,
ymin=19.8669073078214, ymax=50.482525008975,
ytick style={color=black}
]
\addplot [semithick, color7, mark=star, mark size=1.5, mark options={solid}, line width=1pt]
table {%
1 48.9447029149539
2 48.7524240615862
3 48.7341183314497
4 48.5716001466312
5 48.6797529947665
6 48.52059134381
7 48.606184399759
8 48.6797527674104
9 48.5710907729236
10 48.7393264208763
11 49.0909060225589
};
\addlegendentry{$t_1=1$}
\addplot [semithick, color1, mark=*, mark size=1.5, mark options={solid}, line width=1pt]
table {%
1 47.5986833469118
2 46.5784596648252
3 45.8584258654274
4 45.5455254884501
5 45.3948744543627
6 45.3463026004648
7 45.5781659833201
8 46.0015954838364
9 46.6286335525872
10 47.3193054740593
11 48.2813978804111
};
\addlegendentry{$t_1=3$}
\addplot [semithick, color2, mark=pentagon, mark size=1.5, mark options={solid}, line width=1pt]
table {%
1 45.2736592649229
2 42.5967774349177
3 41.0596610975453
4 40.6852590550431
5 40.7884399065496
6 41.2407015587611
7 42.1211062277038
8 43.4498156236447
9 44.5261284559823
10 45.9724919676508
11 47.5997492305569
};
\addlegendentry{$t_1=5$}
\addplot [semithick, color3, mark=o, mark size=1.5, mark options={solid}, line width=1pt]
table {%
1 41.9266395195584
2 38.1145112668051
3 36.4646911115157
4 36.2488051299392
5 36.8486096757564
6 38.0881982752722
7 39.5808869232394
8 41.424614217435
9 43.3735954162945
10 45.343039911517
11 47.1197895026862
};
\addlegendentry{$t_1=7$}
\addplot [semithick, color4, mark=, mark size=1.5, mark options={solid}, line width=1pt]
table {%
1 38.1812511423496
2 33.5023797658715
3 32.1884335620406
4 32.8352813801281
5 34.1125818422254
6 36.0378063708948
7 38.369361642583
8 40.549752511757
9 42.7624488500693
10 44.9116006705734
11 47.1135771166789
};
\addlegendentry{$t_1=9$}
\addplot [semithick, color5, mark=triangle, mark size=1.5, mark options={solid}, line width=1pt]
table {%
1 34.3326893707017
2 29.555393364851
3 29.0082865396428
4 30.6001600335628
5 32.761409714983
6 35.1595056423137
7 37.7368125548423
8 40.2175307245202
9 42.7165640320737
10 44.912618540819
11 46.8772047897603
};
\addlegendentry{$t_1=11$}
\addplot [semithick, color6, mark=diamond, mark size=1.5, mark options={solid}, line width=1pt]
table {%
1 30.4458655575992
2 26.2349782693274
3 26.9986950895397
4 29.3433710196973
5 32.0357130016247
6 34.9084108642925
7 37.5849726365072
8 40.1516184369753
9 42.5003118563676
10 44.89985990396
11 46.9869915267951
};
\addlegendentry{$t_1=13$}
\addplot [semithick, white!49.80392156862745!black, mark=|, mark size=1.5, mark options={solid}, line width=1pt]
table {%
1 27.0692437485257
2 24.0189669313894
3 25.7784537141359
4 28.6566631138083
5 31.7568990386075
6 34.7450055351564
7 37.5902033458439
8 40.1242500914596
9 42.6192896148718
10 44.699407214676
11 46.9828673592338
};
\addlegendentry{$t_1=15$}
\addplot [semithick, cyan, mark=Mercedes star, mark size=1.5, mark options={solid}, line width=1pt]
table {%
1 23.8540258498767
2 22.2879231467357
3 25.0499541605495
4 28.5045777221316
5 31.6856595438279
6 34.7914392786532
7 37.5731642054743
8 40.1148916403227
9 42.4997826031226
10 44.9150867153349
11 47.0141600226428
};
\addlegendentry{$t_1=17$}
\addplot [semithick, black, mark=square, mark size=1.5, mark options={solid}, line width=1pt]
table {%
1 21.4756253474563
2 21.2585262942374
3 24.7968139985863
4 28.356119834174
5 31.6947140613263
6 34.7928028792249
7 37.5754234456118
8 40.2007167735093
9 42.5712485771723
10 45.0302237441302
11 47.1339187727202
};
\addlegendentry{$t_1=19$}
\end{axis}

\end{tikzpicture}}
\caption{
This graph displays the mean server utilization cost $\E W$ as a function of initial number of servers $n_0 \in \{1, \dots, 11\}$ for single forking of $K = 10$ parallel tasks at different forking times $t_1 \in \{1, 3, 5, \dots, 19\}$ when the single task execution time at servers are \emph{i.i.d.}  with Weibull distribution of scale $\gamma=16$ and shape $\beta=2$. 
%The mean of the execution time is kept constant at $ m = 0.8$ and the scale is set to $\alpha = 0.1 m$.
}
\label{Fig:MeanCostWeibullSFP}
\end{figure}%

\begin{figure}[hhh]
%\centering
\scalebox{1}{% This file was created by matplotlib2tikz v0.7.3.
\begin{tikzpicture}

\definecolor{color0}{rgb}{0.12156862745098,0.466666666666667,0.705882352941177}
\definecolor{color1}{rgb}{1,0.498039215686275,0.0549019607843137}
\definecolor{color2}{rgb}{0.172549019607843,0.627450980392157,0.172549019607843}
\definecolor{color3}{rgb}{0.83921568627451,0.152941176470588,0.156862745098039}
\definecolor{color4}{rgb}{0.580392156862745,0.403921568627451,0.741176470588235}
\definecolor{color5}{rgb}{0.549019607843137,0.337254901960784,0.294117647058824}
\definecolor{color6}{rgb}{0.890196078431372,0.466666666666667,0.76078431372549}
\definecolor{color7}{rgb}{0.737254901960784,0.741176470588235,0.133333333333333}
\definecolor{color8}{rgb}{0.0901960784313725,0.745098039215686,0.811764705882353}

\begin{axis}[
font=\small,
legend cell align={left},
legend style={draw=white!80.0!black, font=\scriptsize},
tick align=outside,
tick pos=left,
x grid style={white!69.01960784313725!black, densely dotted},
xlabel={Mean service completion time $\E S$},
xmajorgrids,
xmin=7.08946180740696, xmax=23.6285099541644,
xtick style={color=black},
y grid style={white!69.01960784313725!black, densely dotted},
ylabel={Mean server utilization cost $\E W_1$},
ymajorgrids,
ymin=19.8669073078214, ymax=50.482525008975,
ytick style={color=black}
]
\addplot [semithick, color7, mark=star, mark size=1.5, mark options={solid}, line width=1pt]
table {%
8.65468854450982 48.9447029149539
8.5749005216662 48.7524240615862
8.4537171165028 48.7341183314497
8.37089787875789 48.5716001466312
8.31947682296514 48.6797529947665
8.18266295570761 48.52059134381
8.15864534567871 48.606184399759
8.06899763283092 48.6797527674104
7.95815129529582 48.5710907729236
7.87480671600981 48.7393264208763
7.84123672316866 49.0909060225589
};
\addlegendentry{$t_1=1$}
\addplot [semithick, color1, mark=*, mark size=1.5, mark options={solid}, line width=1pt]
table {%
10.4400903493354 47.5986833469118
10.1502459967111 46.5784596648252
9.87897843776859 45.8584258654274
9.61189416562653 45.5455254884501
9.33546425277891 45.3948744543627
9.08255117347147 45.3463026004648
8.81569835105206 45.5781659833201
8.6197409613711 46.0015954838364
8.34737693484047 46.6286335525872
8.16702777900127 47.3193054740593
7.94954092330463 48.2813978804111
};
\addlegendentry{$t_1=3$}
\addplot [semithick, color2, mark=pentagon, mark size=1.5, mark options={solid}, line width=1pt]
table {%
12.2045132491143 45.2736592649229
11.6369129949686 42.5967774349177
11.1311866527346 41.0596610975453
10.6866104532583 40.6852590550431
10.2443834418679 40.7884399065496
9.79636197191396 41.2407015587611
9.36449469752003 42.1211062277038
9.02917584590282 43.4498156236447
8.66541200480386 44.5261284559823
8.31663752941778 45.9724919676508
8.01250067759097 47.5997492305569
};
\addlegendentry{$t_1=5$}
\addplot [semithick, color3, mark=o, mark size=1.5, mark options={solid}, line width=1pt]
table {%
13.914505856339 41.9266395195584
13.0853760881289 38.1145112668051
12.3134051918588 36.4646911115157
11.6226330075544 36.2488051299392
10.9359217878237 36.8486096757564
10.3728915697869 38.0881982752722
9.81981046713094 39.5808869232394
9.29870613849767 41.424614217435
8.85883035771072 43.3735954162945
8.4396181886256 45.343039911517
8.05047251072268 47.1197895026862
};
\addlegendentry{$t_1=7$}
\addplot [semithick, color4, mark=, mark size=1.5, mark options={solid}, line width=1pt]
table {%
15.5425832666564 38.1812511423496
14.419657947351 33.5023797658715
13.3531302710525 32.1884335620406
12.3906723838407 32.8352813801281
11.4605643999306 34.1125818422254
10.6902174402372 36.0378063708948
10.0393991710927 38.369361642583
9.42541826555313 40.549752511757
8.90272018264378 42.7624488500693
8.47247383085931 44.9116006705734
8.08298488871327 47.1135771166789
};
\addlegendentry{$t_1=9$}
\addplot [semithick, color5, mark=triangle, mark size=1.5, mark options={solid}, line width=1pt]
table {%
17.1507075683709 34.3326893707017
15.6349437847404 29.555393364851
14.1692994104604 29.0082865396428
12.9013214901938 30.6001600335628
11.779225572753 32.761409714983
10.8664812886239 35.1595056423137
10.1248785518668 37.7368125548423
9.4755422255672 40.2175307245202
8.97631086620506 42.7165640320737
8.48554256188592 44.912618540819
8.08237408184987 46.8772047897603
};
\addlegendentry{$t_1=11$}
\addplot [semithick, color6, mark=diamond, mark size=1.5, mark options={solid}, line width=1pt]
table {%
18.6997520787773 30.4458655575992
16.6833522830039 26.2349782693274
14.8077864728119 26.9986950895397
13.2102551260235 29.3433710196973
11.908732547228 32.0357130016247
10.9542542100134 34.9084108642925
10.1340901582537 37.5849726365072
9.51319794029972 40.1516184369753
8.93999711397853 42.5003118563676
8.48100306791745 44.89985990396
8.08010528369086 46.9869915267951
};
\addlegendentry{$t_1=13$}
\addplot [semithick, white!49.80392156862745!black, mark=|, mark size=1.5, mark options={solid}, line width=1pt]
table {%
20.2133062786789 27.0692437485257
17.567487210061 24.0189669313894
15.1954089209297 25.7784537141359
13.3041509933915 28.6566631138083
11.9533272364327 31.7568990386075
10.9195240812237 34.7450055351564
10.1821311381188 37.5902033458439
9.49400995816253 40.1242500914596
8.942286988131 42.6192896148718
8.43363392740025 44.699407214676
8.08339497091358 46.9828673592338
};
\addlegendentry{$t_1=15$}
\addplot [semithick, cyan, mark=Mercedes star, mark size=1.5, mark options={solid}, line width=1pt]
table {%
21.5480983599482 23.8540258498767
18.1375938253471 22.2879231467357
15.3327690027403 25.0499541605495
13.3890634568415 28.5045777221316
11.9886026485047 31.6856595438279
10.961571427987 34.7914392786532
10.1712738170318 37.5731642054743
9.48296208916158 40.1148916403227
8.93440930466544 42.4997826031226
8.48101819926089 44.9150867153349
8.06929900125104 47.0141600226428
};
\addlegendentry{$t_1=17$}
\addplot [semithick, black, mark=square, mark size=1.5, mark options={solid}, line width=1pt]
table {%
22.8767350384026 21.4756253474563
18.5457241526913 21.2585262942374
15.43150245417 24.7968139985863
13.403033602994 28.356119834174
11.9678398777175 31.6947140613263
10.9424731683001 34.7928028792249
10.1257313570364 37.5754234456118
9.48502064957881 40.2007167735093
8.93023171114152 42.5712485771723
8.49841485874241 45.0302237441302
8.08666923097526 47.1339187727202
};
\addlegendentry{$t_1=19$}
\end{axis}

\end{tikzpicture}}
\caption{ 
This graph displays the mean server utilization cost $\E W$ as a function of mean service completion time $\E S$ when we vary the number of initial servers $n_0 \in \{1, \dots, 11\}$ for single forking of $K = 10$ parallel tasks at different forking times $t_1 \in \{1, 3, 5, \dots, 19\}$ when the single task execution time at servers are \emph{i.i.d.} with Weibull distribution of scale $\gamma=16$ and shape $\beta=2$. 
}
\label{Fig:TradeoffWeibullSFP}
\end{figure}
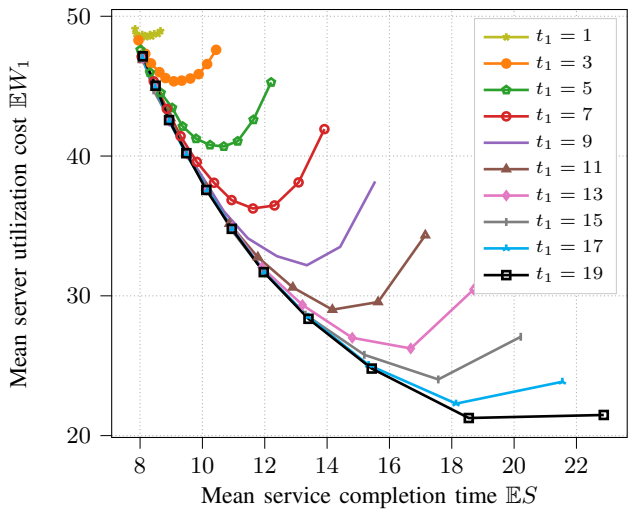
%-----------------Weibull distribution SFP-End-----------------
%-----------------SFP  Heavy Tailed-End-----------------

%-----------------TFP  Heavy Tailed-Start-----------------
\section{Numerical results of multiple forking for heavy-tailed distributions}
\label{apdx:HeavyTailTFP}

In this section, we compare the optimal single forking to sub-optimal two-forking to understand the potential gains of multi-forking for \emph{i.i.d.} random execution times for two heavy-tailed distributions: 
Pareto and Weibull. 
For numerical studies in both the cases, we take $K=10$ tasks, sequentially forked on $N=12$ servers each, 
with server utilization cost rate $\lambda = 1$. 

\subsection{Pareto distribution}
\label{subsec:ParetoTFP} 
We assume the job execution times to have an \emph{i.i.d.} Pareto distribution with the mean $m = \frac{x_m\alpha}{\alpha-1}= 16$ and the shift $x_m = 0.52*m$, for the following numerical studies. 
We compare the performance of the single forking sequence $(0, n_0), (t, N-n_0)$ to that of the two-forking sequence $(0, m_0), (t, m_1), (s, N-m_0-m_1)$, when the second forking point is $s>t$. 
We plot the tradeoff curve between mean service completion time and mean server utilization cost for the single and two-forking sequences in Fig.~\ref{Fig:TradeoffParetoTFPAfter} for the values of forking points $t = 4$ and $s \in  \{5, 6, 7, 8\}$, varying the number of forked servers $n_0 \in [N]$ in single-forking case and $m_0, m_1$ in two-forking case.

\begin{figure}[hhh]
\centering
% This file was created by matplotlib2tikz v0.7.3.
\begin{tikzpicture}

\definecolor{color0}{rgb}{0.12156862745098,0.466666666666667,0.705882352941177}
\definecolor{color1}{rgb}{1,0.498039215686275,0.0549019607843137}
\definecolor{color2}{rgb}{0.172549019607843,0.627450980392157,0.172549019607843}
\definecolor{color3}{rgb}{0.83921568627451,0.152941176470588,0.156862745098039}

\begin{axis}[
font=\small,
legend cell align={left},
legend style={at={(0.99,0.99)}, draw=white!80.0!black, font=\scriptsize},
tick align=outside,
tick pos=left,
x grid style={white!69.01960784313725!black, densely dotted},
xlabel={Mean service completion time $\E S$},
xmajorgrids,
xmin=9.14949739754036, xmax=18,
xtick style={color=black},
y grid style={white!69.01960784313725!black, densely dotted},
ylabel={Mean server utilization cost $\E W_1$},
ymajorgrids,
ymin=47.3850679359989, ymax=105,
ytick style={color=black}
]
\addplot [semithick, color0, mark=diamond, mark size=1.5, mark options={solid}, only marks]
table {%
9.6019703251523 95.8973907652901
9.7571574277057 92.5038262915882
9.73783678369872 91.4458394510986
9.93383389145061 89.1627008552355
9.94725146200483 88.1990349048365
9.95443935449228 87.2333994943725
10.1986822357343 86.1659820076006
10.1850826150541 85.1412433077577
10.198006656841 84.1684958653673
10.1964301332294 83.1719486652628
10.517698444132 83.4247410280941
10.5245153277195 82.4469578598377
10.5325155639277 81.4796435373557
10.5547316237283 80.5279941223942
10.5496606560005 79.4842333453077
10.9721750593249 81.2782075193727
10.973023609068 80.2684964882707
10.9704135118597 79.2794677613108
10.9687929804975 78.2588531215222
11.0003925750396 77.2940982923236
11.0222195609227 76.3800088915373
11.5263632865405 79.8710420886957
11.5401245770417 78.9303488791912
11.5471013083652 77.8867092918251
11.5721972456303 76.9288193880481
11.5879586027048 75.9561257570445
11.6098111439275 74.9255770801233
11.657726216381 73.989498028446
12.1788098211735 79.8362501113486
12.2248642641522 78.9724787920839
12.2336025505686 77.9345453320348
12.2745537510618 77.0595690955734
12.2946996109664 75.9709169916513
12.3627726463985 75.098477073563
12.4206527638895 74.2924959928139
12.5019265303981 73.3892049676873
12.7674102152858 82.2323612926071
12.8208269098081 81.3499786586507
12.8625146670199 80.364857642745
12.9180806232977 79.5236442652796
12.9762563259295 78.5599097868713
13.065576872074 77.8511150563524
13.1542855801615 77.0321583826636
13.2413380164542 76.2707164292916
13.3407063573258 75.5616048856731
13.1489010541314 88.8102585658845
13.2149615551644 87.8632483753039
13.2947884189717 87.175462660836
13.3648252864218 86.538453230606
13.4605899465804 85.8070524789207
13.5524554094877 85.1133708497696
13.6415394743366 84.4593169153874
13.7362072504546 83.9272335317981
13.8389171380131 83.7048729313006
13.9477773682102 83.5908279769649
};
\addlegendentry{Two-Forking, $s=5$}
\addplot [semithick, color1, mark=triangle, mark size=1.5, mark options={solid}, only marks]
table {%
9.58618836784661 94.8449829260684
9.72901747424202 91.4090958679
9.7654135564256 89.4927122729164
9.92608628858061 88.1800780586612
9.93644838575256 86.1909611825132
9.94508186840737 84.2172145413352
10.1829528221459 85.1386105399373
10.1826188938812 83.1384272192263
10.1950139810187 81.1681357685849
10.2031170440219 79.1542630746519
10.5429516027071 82.51483013444
10.506499415484 80.4088510308948
10.5203787394432 78.4577433563465
10.5575598488005 76.515838734122
10.5301498558711 74.4695544824808
10.9529540691809 80.2295993688406
10.9839933390376 78.3050841596861
10.9779524089415 76.3077341029211
10.9704602606449 74.2552199503227
10.9984542684772 72.3230108070209
11.0325248125969 70.3620658966034
11.5215087212511 78.8708618215623
11.5403816972995 76.8668127782157
11.5519524850544 74.8777322041446
11.6111704471611 73.0470527982461
11.6249163208484 71.0332157272022
11.6807794042568 69.0872883178562
11.7269554409463 67.1720515683489
12.1687899290702 78.8054940339315
12.1887654900803 76.7945914993389
12.2447303194542 74.9658414602408
12.2858335590229 73.0030076168292
12.3821190603664 71.274535789434
12.4538294131731 69.3285410684593
12.5357497438775 67.3987829631589
12.6541311937812 65.5952833378474
12.763629184501 81.2495092464078
12.8192488660363 79.3275198678352
12.906805432571 77.6122614229484
12.9807289914861 75.6239919280013
13.0729054458195 73.6183548996042
13.2262386345127 71.931261954582
13.3823516464639 70.3344070805906
13.5985350788478 68.830733206789
13.83542421111 67.332931858402
13.1582025626476 87.7665598067341
13.255840693199 85.9858510079599
13.3593549336009 84.3099585821631
13.4864356150186 82.6638782247404
13.6393917666889 81.0272203177021
13.8175771683645 79.6152377229281
14.0311363644282 78.3164170191604
14.2604949055072 77.1321945172534
14.5126350644162 76.3486210879403
14.7560592152841 76.1264150496321
};
\addlegendentry{Two-Forking, $s=6$}
\addplot [semithick, color2, mark=+, mark size=1.5, mark options={solid}, only marks]
table {%
9.58927657213014 93.8551607768574
9.74584967967448 90.4601648173737
9.75121105806887 87.4380781312902
9.94622774378008 87.2107422504633
9.95471134584717 84.2449623731731
9.94902959253674 81.1999806793209
10.2026832100923 84.2074134779418
10.1927046358877 81.1791562296608
10.1941020267362 78.1592912702299
10.210383930419 75.2197887438659
10.539379503854 81.4874523782174
10.5407233532372 78.4971171095827
10.5370028555849 75.4984490463816
10.5125559843155 72.4117775023132
10.5507406706166 69.4803965278665
10.9685038050558 79.2592351373191
10.9647388398325 76.2479870369911
10.9808906398527 73.2762270530279
10.9958990061051 70.3386832357857
11.0073150183991 67.3130315250729
11.0334623313685 64.3981990019081
11.5123727888487 77.7782603869892
11.5223192898522 74.8286547186289
11.5544777923042 71.920541604477
11.6010384277978 69.0309497665223
11.6359543774542 66.0670020539495
11.6811330535135 63.0286081065651
11.7676804167765 60.2566176327431
12.1800246149394 77.8497245083219
12.2046559113251 74.8202730020203
12.2556902032295 71.9270042990889
12.3079648735706 69.039951685465
12.3718605402592 66.1702778581992
12.475380747431 63.267788356561
12.5914479240016 60.4365559267655
12.7982063930235 57.8222231278765
12.7476931231653 80.083388072484
12.8311439155481 77.2903645462395
12.9116616581354 74.5661706951076
13.0119426729442 71.5902243243035
13.1227584192784 68.7996123111199
13.2924177218002 66.1148413905535
13.5157135950804 63.6076817762361
13.794771769499 61.1256457078654
14.1829392755474 58.9872515386628
13.1618712028854 86.8749635638593
13.2517694040437 83.9551151991794
13.3852302567349 81.4699511894015
13.5164263512481 78.7692375259709
13.6963866280136 76.0081162420238
13.9335509846604 73.7226942120566
14.2579066794991 71.6527543092109
14.5993450441828 69.7427318471912
15.0531136998428 68.4259411604488
15.4968670669915 67.9182867496423
};
\addlegendentry{Two-Forking, $s=7$}
\addplot [semithick, color3, mark=square, mark size=1.5, mark options={solid}, only marks]
table {%
9.58626062445002 92.8630471779241
9.73865857673258 89.4589150093923
9.74934860508947 85.4587136864106
9.93609804995876 86.2006422490682
9.93633674734843 82.1816136247425
9.94988869180736 78.2200266404394
10.2049379110696 83.2030180772189
10.1963605136968 79.1763199374589
10.2147424671897 75.2131316527551
10.2017280277649 71.1691318245771
10.5182372066589 80.443593798373
10.5303266969666 76.4396284747162
10.5362359400181 72.4755636621877
10.5468759666703 68.511149681659
10.544470697807 64.475322119237
10.9792094506267 78.2920115444362
10.981763086966 74.2718458931688
10.9767808162405 70.2498123677476
11.0187925890303 66.3706205217919
10.9834044348011 62.2487068811103
11.0508834175935 58.3988145270894
11.5321441043357 76.9025693151515
11.5512957449842 72.9006884007106
11.5630526722196 68.9607231378665
11.5901021179154 64.9492896234824
11.6291748099262 60.9686139043097
11.7087220581683 57.1359914037173
11.7680756462673 53.172352608473
12.1775701053215 76.9065231999485
12.227830106636 72.9845846952234
12.2435132204106 68.927170730388
12.3141175252668 65.0545399440176
12.3859179881664 61.261733941962
12.4694044765222 57.2048476480485
12.6369127159499 53.5647613232731
12.8729525604668 49.9088901109444
12.7696413609591 79.2347935277348
12.8301744066606 75.3668829021184
12.8947021625881 71.4442450389737
12.9967687620588 67.5443785082918
13.1347574782622 63.8850767431696
13.3170313541062 60.245840300711
13.5532255577702 56.6203597628725
13.908023829864 53.1832691833296
14.4313467333826 50.2935521438039
13.1635881389266 85.8943986617219
13.2555156484097 82.1475195526252
13.3687138259828 78.4025469732362
13.5094300849676 74.6409282303115
13.7077072638316 71.2344588652653
13.9646180745959 67.6522462488439
14.2901878319092 64.54536918999
14.805820599829 61.9956748366788
15.4201530040461 59.8830174129467
16.1827608342434 59.5281563522505
};
\addlegendentry{Two-Forking, $s=8$}
\addplot [semithick, color=black, mark=*, mark size=2]
table {%
13.0906666831394 89.8608582314354
12.6823039823615 82.8872374933173
12.122257697726 80.6833227397667
11.5160947642278 80.8881884071402
10.9506599637107 82.3538088723514
10.5035087699936 84.4909613348849
10.2105368082049 87.2896482471741
9.88074902479145 90.0607557259166
9.73676228898449 93.5014511657595
9.60539901581864 96.9205424782861
9.48441470405003 100.385333609855
};
\addlegendentry{Single-Forking};
\end{axis}

\end{tikzpicture}
\caption{ 
This figure illustrates achievable points with Pareto execution times for mean server utilization cost and mean service completion time for two-forking with different values of forked servers $m_0$, $m_1$ and $N-m_0-m_1$ at forking points $t_0=0$, $t=4$ and $s \in \{5, 6, 7, 8\}$, respectively. 
For comparison, we also plot the tradeoff points for single-forking at forking time $t=4$ varying the number of initial servers $n_0$.}
%We see that two-forking always achieves better tradeoff points than single-forking in this case.
\label{Fig:TradeoffParetoTFPAfter}
\end{figure}
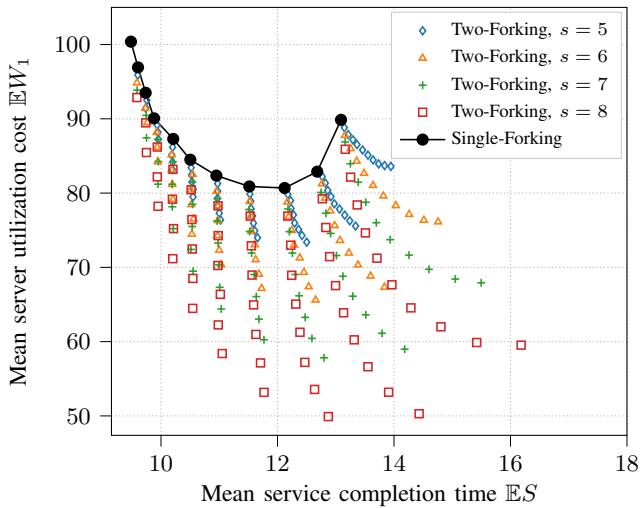

We observe that \emph{for any} choice of forked
servers $m_0, m_1$ and forking point $s > t$, the two-forking
system achieves better tradeoff points as compared to the single-forking system. 
%-----------------Pareto distribution-End-----------------

%-----------------Weibull distribution TFP-Start-----------------
\subsection{Weibull distribution}
\label{subsec:WeibullTFP}
We assume the system parameters for the Weibull distribution to be scale $\gamma=16$ and shape $\beta=2$. %, $K=10$, $N=12$, $\lambda=1$. 
We compare the performance of the single forking sequence $(0, n_0), (t, N-n_0)$ to that of the two-forking sequence $(0, m_0), (t, m_1), (s, N-m_0-m_1)$, when the second forking point is $s>t$.  
%By considering the first forking point at $t$, the second forking point at $s>t$ 

We plot the tradeoff curve between mean service completion time and mean server utilization cost for the single and two-forking sequences in Fig.~\ref{Fig:TradeoffWeibullTFPAfter} for the values of forking points $t = 4$ and $s \in  \{5, 6, 7, 8\}$, varying the number of forked servers $n_0 \in [N]$ in the single-forking and $m_0, m_1$ ine th two-forking case.

In this case of Weibull distribution for job execution time as well, we observe that \emph{for any} choice of forked servers $m_0, m_1$ and forking point $s > t$, the two-forking system achieves better tradeoff points as compared to the single-forking system.

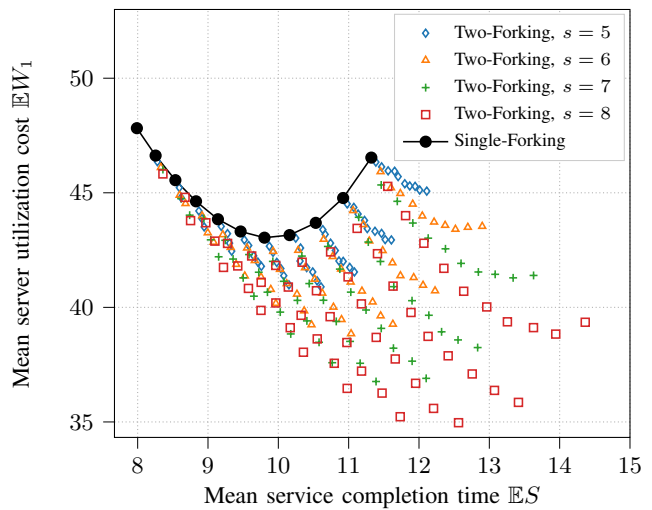
\begin{figure}[hhh]
\centering
% This file was created by matplotlib2tikz v0.7.3.
\begin{tikzpicture}

\definecolor{color0}{rgb}{0.12156862745098,0.466666666666667,0.705882352941177}
\definecolor{color1}{rgb}{1,0.498039215686275,0.0549019607843137}
\definecolor{color2}{rgb}{0.172549019607843,0.627450980392157,0.172549019607843}
\definecolor{color3}{rgb}{0.83921568627451,0.152941176470588,0.156862745098039}

\begin{axis}[
font=\small,
legend cell align={left},
legend style={at={(0.99,0.99)}, draw=white!80.0!black, font=\scriptsize},
tick align=outside,
tick pos=left,
x grid style={white!69.01960784313725!black, densely dotted},
xlabel={Mean service completion time $\E S$},
xmajorgrids,
xmin=7.67018408804395, xmax=15,
xtick style={color=black},
y grid style={white!69.01960784313725!black, densely dotted},
ylabel={Mean server utilization cost $\E W_1$},
ymajorgrids,
ymin=34.3243996632461, ymax=53,
ytick style={color=black}
]
\addplot [semithick, color0, mark=diamond, mark size=1.5, mark options={solid}, only marks]
table {%
8.28046114665318 46.3572907010435
8.59255250922609 45.225972141765
8.61746375863963 44.7932792119632
8.87059704795574 44.2076221945424
8.91358745776322 43.8709335465048
8.9457359282946 43.5104925907284
9.18814624200725 43.5354583869411
9.27763656746831 43.2172883430765
9.30650873884389 42.8030468368718
9.33686212964761 42.4542026644787
9.51681376507247 42.9465825933688
9.57897785774734 42.6999890310915
9.64966655665075 42.2473503055109
9.7093174314609 41.9762303984608
9.75904418086521 41.7913792978674
9.87236547940577 42.6677836054018
9.92511494189918 42.4028362605313
9.9777923143048 41.9627348671674
10.0355439380796 41.6942475226444
10.0773861429381 41.3815634723778
10.1509417087346 40.9831430462948
10.2474484247889 43.0214748057232
10.315911806519 42.5905909340734
10.3086694984145 42.0214960596896
10.4031224524359 41.7365581933655
10.4890413082171 41.5452610125389
10.5514835865378 41.1572621056977
10.606732487289 40.8957316737687
10.6326763594976 43.391832019016
10.6660805519066 43.1847876439864
10.7425604892132 42.8019764978727
10.8106773998512 42.4799243636105
10.8621046490804 42.025538233712
10.9225921223222 42.0066502632991
11.0156245504369 41.7430995037626
11.0814681666058 41.5445599653275
10.9793426357833 44.5148592777217
11.0733449860206 44.362685708287
11.1340576847 44.0728838692713
11.2255978407454 43.7942344477459
11.2645842987421 43.4381461101954
11.3740900258366 43.3193383252928
11.4678593916302 43.2275630942559
11.5191631103589 42.9578521459902
11.6074649603154 42.9417784551152
11.386807394794 46.3394610765815
11.4684800578714 46.1311758284203
11.5612428482097 45.9582202743214
11.6536360687602 45.9406114191783
11.7036389913351 45.707394675615
11.7979412328634 45.3945633157684
11.868433149506 45.2978626465784
11.944852646004 45.2811086168872
12.0101029320247 45.1194828174929
12.1103387441158 45.0732090405494
};
\addlegendentry{Two-Forking, $s=5$}
\addplot [semithick, color1, mark=triangle, mark size=1.5, mark options={solid}, only marks]
table {%
8.32961132506642 46.103918083636
8.59639622557396 44.896100267432
8.68059755222226 44.5305083379338
8.91661154583134 43.9974194710844
8.99685199376901 43.2541110948845
9.08041028645307 42.8445854143961
9.21120693234066 43.1727899728526
9.32218293053975 42.6173004315665
9.40118854314559 41.8776277793814
9.52289636591419 41.3720191741458
9.55577254305641 42.5921320949625
9.66528443758657 42.0429756967612
9.75438111688733 41.3269800003345
9.89289566469596 40.7845991004807
9.9648936967455 40.1761240206728
9.92694463183602 42.4441874287242
10.0239619081773 41.6584833385982
10.1193724027017 41.0362134695893
10.2651732371636 40.5778846000021
10.3673883626226 39.8487923058094
10.4714855566388 39.2410762482973
10.2833932443829 42.4906656071269
10.3793105981671 41.7299951232242
10.5164751724873 41.2285443743204
10.6251135076379 40.6217100838406
10.7837091369378 40.0101021479828
10.898530665855 39.4688167836509
11.0345635323503 38.8505511941532
10.6452392225814 42.9733311167591
10.7717464662981 42.2177751306593
10.8922983592179 41.6730047522783
11.0287175074408 41.1058794091451
11.188808287784 40.7122487457928
11.341091735476 40.2363366831453
11.4545117941752 39.6313751927593
11.628552117179 39.2660621453861
11.0629065606821 44.2238214148706
11.209378111988 43.613131505289
11.2935329449542 42.8981097711592
11.483451369578 42.4785076418181
11.6228655795146 41.9470303623941
11.7360998083789 41.3831224582019
11.9246095384464 41.3017089376531
12.0578582935054 40.9656735397005
12.2281914740644 40.7240430694224
11.4536688118586 45.9224657832294
11.6016772865862 45.219214570797
11.7347928760884 44.9702590307324
11.9107375883672 44.5060241023582
12.0472618117519 43.9891568708984
12.2149039130113 43.7239578508535
12.3718628084333 43.5814958996375
12.5181299113947 43.4090206784249
12.6942403164635 43.4941975168886
12.899285716116 43.5359033717671
};
\addlegendentry{Two-Forking, $s=6$}
\addplot [semithick, color2, mark=+, mark size=1.5, mark options={solid}, only marks]
table {%
8.35913569690944 46.0155631922102
8.61815605189805 44.7445481133188
8.73929609849284 44.0231596889491
8.94109208247841 43.8739111139529
9.03854514447932 42.9481426710468
9.15130843576663 42.2075487779613
9.24515490879666 42.926754966559
9.3568840312712 42.108385773542
9.50037094132919 41.288897506053
9.65279666128672 40.4881376002206
9.58370951047822 42.3040281552809
9.72187517875981 41.5329810275081
9.84529848664961 40.6631035682373
10.0273230315227 39.7924374266914
10.1766701931844 38.8407005200031
9.92198250821028 42.0065512102517
10.0772246054747 41.1373582535661
10.2717064213925 40.3049834496591
10.4061783993142 39.409750657192
10.5790105564786 38.4754352474606
10.7666134100255 37.585037945062
10.3346714522915 42.2381387973739
10.4399907371091 41.0388749369702
10.6412592148121 40.3102472901447
10.8239634516311 39.3870196642939
11.0201858279622 38.5142932196856
11.1633438245375 37.5597262583976
11.3895925769081 36.764762217326
10.7042293806945 42.6575592144401
10.8745234358802 41.6694448393039
11.0305609639788 40.6645565111057
11.2472960131856 39.8847517995182
11.463456459473 39.0831008106475
11.6367593416381 38.2184968595584
11.9024004756259 37.6521317954262
12.1023895535809 36.9023865591644
11.1432205584453 43.9314009162524
11.2792587845669 42.8497444131216
11.4520768685742 42.003023463013
11.6414059387395 40.9034462387966
11.9010595756389 40.2927150788796
12.1399707752367 39.6551724071446
12.3212126784913 38.9329910683749
12.5552769626368 38.5796115425993
12.8338686186171 38.2455067065203
11.4627927057207 45.3423410994993
11.6914932336692 44.627760781967
11.9124090976173 43.680233669936
12.1307994297794 43.0200591538648
12.3763894427277 42.5552919429183
12.5979296537308 41.9196648021217
12.8441523174891 41.543921245873
13.0831985751067 41.4424405965487
13.3385820106984 41.2883907358093
13.6288873024988 41.3935630162795
};
\addlegendentry{Two-Forking, $s=7$}
\addplot [semithick, color3, mark=square, mark size=1.5, mark options={solid}, only marks]
table {%
8.35907839951041 45.8233610614087
8.67385750700558 44.8003232279036
8.75225015670079 43.785845693363
8.96942519559687 43.6926477911709
9.09839775846372 42.8911524881018
9.22004871701092 41.7436583871455
9.278338249377 42.7914540453972
9.42305996941533 41.8082289880392
9.57500929228028 40.8273673021549
9.75343761476485 39.8689521387458
9.62120031079005 42.2426987416652
9.75629204709663 41.092301730682
9.96501143540502 40.1926900501222
10.1693895951461 39.1130371769098
10.3565965150981 38.0420765329073
9.96340427311634 41.8301588610323
10.1422206762139 40.8856191039268
10.324273438737 39.6534712982926
10.553209820305 38.6247207696391
10.7955143399472 37.5594112032084
10.9791570655078 36.4646823259716
10.3383554583369 41.9700787512673
10.542054207806 40.7265215588338
10.7366495404641 39.5940642376273
10.9749750126267 38.4655061549939
11.1832262105739 37.2179191531992
11.4750902446238 36.2585937370715
11.7330544491783 35.2277379096134
10.7433078343731 42.4195482248585
10.9917726965457 41.330304080748
11.1810820071678 40.1523871128407
11.3920814430737 38.6893657714384
11.6624994074967 37.7477458727894
11.9521892767999 36.6859412441921
12.2069538897247 35.5906489793452
12.5622615170156 34.9670122448311
11.120744026027 43.4467251695175
11.4095961453063 42.3426372510788
11.6313181441956 40.9274378855598
11.8864038142048 39.7765339290015
12.1293448300024 38.7344370137237
12.4149685151899 37.8860855999584
12.7569586026781 37.0945688333981
13.0733955152441 36.3770781063842
13.4142307475436 35.8535718426001
11.555632941045 45.2798282206554
11.8093454729205 44.0031313881562
12.0697305646265 42.797701416295
12.3539531761034 41.7035559309007
12.6357506033666 40.702512868817
12.9635915037281 40.0189532700658
13.2570989077166 39.3701658322837
13.6280110115111 39.1158220299502
13.9445008360398 38.8336000902961
14.3636902760145 39.3503033414876
};
\addlegendentry{Two-Forking, $s=8$}
\addplot [semithick, color=black, mark=*, mark size=2]
table {%
11.322465074922 46.5354066772776
10.9214514914285 44.7721462937553
10.5323738140645 43.6885195190574
10.1601343211863 43.1535240010181
9.80281985280512 43.04095798933
9.46410879364957 43.3113522007682
9.14123601797566 43.8453987947239
8.82982768577365 44.6282204907939
8.53631003553624 45.5507761155286
8.25519713274231 46.6237765174868
7.98892247794731 47.8192638765309
};
\addlegendentry{Single-Forking};
\end{axis}

\end{tikzpicture}
\caption{ 
This figure illustrates achievable points with Weibull execution times for mean server utilization cost and mean service completion time for two-forking with different values of forked servers $m_0$, $m_1$ and $N-m_0-m_1$ at forking points $t_0=0$, $t=4$ and $s \in \{5, 6, 7, 8\}$, respectively. 
For comparison, we also plot the tradeoff points for single-forking at forking time $t=4$ varying the number of initial servers $n_0$.}
%We see that two-forking always achieves better tradeoff points than single-forking in this case
\label{Fig:TradeoffWeibullTFPAfter}
\end{figure}
%-----------------Weibull distribution TFP-End-----------------
\end{appendices}
%-----------------TFP  Heavy Tailed-End-----------------
%-----------------Appendix-End-----------------

\bibliographystyle{IEEEtran}
\bibliography{IEEEabrv,tnet2019}

\end{document}